\newtheorem{thm}{Theorem}[section]
\newtheorem{cor}[thm]{Corollary}
\newtheorem{lem}[thm]{Lemma}
\newtheorem{prop}[thm]{Proposition}
\theoremstyle{definition}
\theoremstyle{remark}
\newtheorem{rem}[thm]{Remark}
\numberwithin{equation}{section}
\begin{document}

\newcommand{\thmref}[1]{Theorem~\ref{#1}}
\newcommand{\secref}[1]{Section~\ref{#1}}
\newcommand{\lemref}[1]{Lemma~\ref{#1}}
\newcommand{\propref}[1]{Proposition~\ref{#1}}
\newcommand{\corref}[1]{Corollary~\ref{#1}}
\newcommand{\remref}[1]{Remark~\ref{#1}}
\newcommand{\eqnref}[1]{(\ref{#1})}
\newcommand{\exref}[1]{Example~\ref{#1}}

\newcommand{\nc}{\newcommand}
\nc{\Z}{{\mathbb Z}}
\nc{\C}{{\mathbb C}}
\nc{\N}{{\mathbb N}}
\nc{\F}{{\mf F}}
\nc{\Q}{\ol{Q}}
\nc{\la}{\lambda}
\nc{\ep}{\epsilon}
\nc{\h}{\mathfrak h}
\nc{\n}{\mf n}
\nc{\G}{{\mathfrak g}}
\nc{\DG}{\widetilde{\mathfrak g}}
\nc{\SG}{\overline{\mathfrak g}}
\nc{\D}{\mc D} \nc{\Li}{{\mc L}} \nc{\La}{\Lambda} \nc{\is}{{\mathbf
i}} \nc{\V}{\mf V} \nc{\bi}{\bibitem} \nc{\NS}{\mf N}
\nc{\dt}{\mathord{\hbox{${\frac{d}{d t}}$}}} \nc{\E}{\mc E}
\nc{\ba}{\tilde{\pa}} \nc{\half}{\frac{1}{2}} \nc{\mc}{\mathcal}
\nc{\mf}{\mathfrak} \nc{\hf}{\frac{1}{2}}
\nc{\hgl}{\widehat{\mathfrak{gl}}} \nc{\gl}{{\mathfrak{gl}}}
\nc{\hz}{\hf+\Z}
\nc{\dinfty}{{\infty\vert\infty}} \nc{\SLa}{\overline{\Lambda}}
\nc{\SF}{\overline{\mathfrak F}} \nc{\SP}{\overline{\mathcal P}}
\nc{\U}{\mathfrak u} \nc{\SU}{\overline{\mathfrak u}}
\nc{\ov}{\overline}
\nc{\wt}{\widetilde}
\nc{\osp}{\mf{osp}}
\nc{\spo}{\mf{spo}}
\nc{\hosp}{\widehat{\mf{osp}}}
\nc{\hspo}{\widehat{\mf{spo}}}
\nc{\I}{\mathbb{I}}
\nc{\X}{\mathbb{X}}
\nc{\Y}{\mathbb{Y}}
\nc{\hh}{\widehat{\mf{h}}}
\nc{\cc}{{\mathfrak c}}
\nc{\dd}{{\mathfrak d}}
\nc{\aaa}{{\mf A}}
\nc{\xx}{{\mf x}}
\nc{\wty}{\widetilde{\mathbb Y}}
\nc{\ovy}{\overline{\mathbb Y}}
\nc{\vep}{\bar{\epsilon}}
\nc{\w}{\widetilde}
\nc{\btd}{\blacktriangledown}

%%%%%%%%%%%%%%%%%%%%%%%%%%%
\nc{\mG}{{\mathcal{G}}}
\nc{\mGb}{\overline{{\mathcal{G}}}}
\nc{\mGt}{\widetilde{\mathcal{G}}}
\nc{\mGx}{\mathcal{G}^\xx}
\nc{\mGbx}{\overline{\mathcal{G}^\xx}}
\nc{\mGtx}{\widetilde{\mathcal{G}^\xx}}

%%%%%%%%%%%%%%%%%%%%%%%%%%%
\nc{\OO}{\mathcal {O}}
\nc{\OOb}{\overline{\mathcal {O}}}
\nc{\OOt}{\widetilde{\mathcal {O}}}
\nc{\OOn}{{\mathcal {O}_n}}
\nc{\OObn}{{\overline{\mathcal {O}}_n}}
\nc{\OOtn}{{\widetilde{\mathcal {O}}_n}}
\nc{\OOx}{{\mathcal {O}^\xx}}
\nc{\OObx}{{\overline{\mathcal {O}}^\xx}}
\nc{\OOtx}{{\widetilde{\mathcal {O}}^\xx}}
\nc{\OOxn}{{\mathcal {O}^\xx_n}}
\nc{\OObxn}{{\overline{\mathcal {O}}^\xx_n}}
\nc{\OOtxn}{{\widetilde{\mathcal {O}}^\xx_n}}

\advance\headheight by 2pt

\title[Solutions of super KZ equations]
{Solutions of super Knizhnik-Zamolodchikov equations}

\author[Bintao Cao]
{Bintao Cao}
%\thanks{ {1.} On behalf of all authors, the
%corresponding author states that there is no conflict of interest.}
\address{School of Mathematics and Statistics, Yunnan University, Kunming, China, 650500; School of Mathematics, Sun
Yat-sen University, Guangzhou, China, 510275}
\email{btcao@ynu.edu.cn}

\author[Ngau Lam]{Ngau Lam}
\address{Department of Mathematics, National Cheng Kung University, Tainan, Taiwan 70101}
\email{nlam@mail.ncku.edu.tw}

\begin{abstract}
We  establish an explicit bijection between the sets of singular solutions of the (super) KZ equations associated to the  Lie superalgebra, of infinite rank, of type $\mf{a, b,c,d}$ and to the corresponding Lie algebra.
As a consequence,  the singular solutions of the super KZ equations associated to the  classical Lie superalgebra, of finite rank, of type $\mf{a, b,c,d}$ for the tensor product of certain parabolic Verma modules (resp., irreducible modules) are obtained from the singular solutions of the KZ equations for the tensor product of the corresponding parabolic Verma modules (resp., irreducible modules) over the corresponding Lie algebra of sufficiently large rank, and vice versa. The analogous results for some special kinds of  trigonometric (super) KZ equations are obtained.
\end{abstract}

 \maketitle

  \setcounter{tocdepth}{1}

\section{Introduction}\label{introduction}

The Knizhnik-Zamolodchikov equations (KZ equations) were first discovered by V. G. Knizhnik and A. B. Zamolodchikov \cite{KZ} from studying the Wess-Zumino-Novikov-Witten model in $2$-dimensional conformal field theory.
A mathematical treatment,  in the case $\mf{sl}_2$,  was first given in \cite{TK} (see also \cite{EFK}).
These equations compose a system of compatible complex partial differential equations, which can be derived using either the formalism of Lie algebras or that of vertex algebras.
Let us explain with more details for our settings. Let  $\G$ be a  finite-dimensional simple complex Lie algebra and let $(\ ,\ )$ be a non-degenerate invariant bilinear form  on $\G$.  Let  $\{I_a\, |\, a\in \mathbb{I}\}$ be a basis of $\G$ and $\{I^a\, |\, a\in \mathbb{I}\}$ the dual basis with respect to the bilinear form $(\ ,\ )$.
Let $V_1, \ldots, V_\ell$ be highest weight $\G$-modules and $V=V_1\otimes\cdots\otimes V_\ell$. Let $\Omega:=\sum_a I_a\otimes I^a$ denote the Casimir symmetric tensor  and let $\Omega^{(ij)}:=\sum_a I_a^{(i)}\otimes I^a{}^{(j)}$, where $x^{(i)}$ means the action of $x$ on the $i$-th component of $V$, for $x\in\G$ and $i=1, \ldots, \ell$. The KZ equations with values in $V$
constitute a system of differential equations
\begin{equation}\label{Intro: KZ}
\kappa\frac{\partial \psi}{\partial z_i}=\sum_{j=1\atop j\neq i}^\ell\frac{\Omega^{(ij)}}{z_i-z_j}\psi, \qquad \text{ for $i=1,\ldots, \ell$,}
\end{equation}
with a fixed parameter $\kappa \in \C$, where $\psi$ is a $V$-valued function on a nonempty open subset $U$ in the configuration space ${\bf X}_\ell:=\{(z_1, \ldots,z_\ell)\in \C^\ell\,|\, z_i\not=z_j, \,\, \hbox{for any $i\not=j$}\}$ of $\ell$ distinct points in $\C$.
The solutions were found in \cite{CF}, \cite{TK},  \cite{M}, \cite{DJMM}, \cite{FR}, et cetera, and the general case was obtained by V.V. Schechtman and A. N. Varchenko \cite{SV2} (see \cite{EFK} and \cite{V} for more information).
The super KZ equations compose a system of partial differential equations formulated analogously to the equations \eqnref{Intro: KZ} for Lie superalgebras (see, for examples, \cite{GZZ, KM}). Similar to the non-super cases, the super KZ equations also play important roles for the representation theory for quantum supergroups. (See, for example, \cite{Ge}).  However, the analogous results for the solutions of the super KZ equations are very rare compared with the non-super cases.

The super duality conjecture was originally developed in \cite{CWZ} (see also \cite{CZ}) and in full generality in \cite{CW1} for finding characters of certain irreducible modules over the general linear superalgebras  $\mf {gl}(m|n)$. The super duality conjecture was proved in \cite{CL} for type $\mf{a}$ and  in \cite{CLW} for type $\mf{b,c,d}$ showing that there is an equivalence of tensor categories between the parabolic BGG category $\overline{\mc{O}}$ for Lie superalgebra, of infinite rank, of type $\mf{a,b,c}$ or $\mf d$ and the parabolic BGG category $\mc{O}$ for the corresponding Lie algebra (see \propref{thm:equivalence} below for precise statements).

In this article, we apply the results of the super duality to connect the singular solutions (that is, the solution $\psi$ takes values in the space spanned by the singular vectors in $V$) of super KZ equations associated to the Lie superalgebra, of infinite rank, of type $\mf{a,b,c}$ or $\mf d$ and of  KZ equations associated to the corresponding Lie algebra.
Formulating the KZ equations associated to the Lie algebras of infinite rank and the super KZ equations associated to the Lie superalgebra of infinite rank is an essential prerequisite.
For each positive integer $n$, let $\G_n$ and $\SG_n$  denote the Lie algebra and Lie superalgebra of type $\mf{a,b,c}$ or $\mf d$ in the settings of the super duality, respectively (see Subsection~\ref{Central Ext} below). There are inclusions of Lie superalgebras $\SG_n\subset \SG_{n+1}$ and of Lie algebras $\G_n\subset \G_{n+1}$,  for $n\in\N$. Let $\SG :=\SG_\infty :=\cup_{n=1}^\infty \SG_n$ and $\G :=\G_\infty :=\cup_{n=1}^\infty \G_n$. The Lie algebra $\G$ is the counterpart of the Lie superalgebra $\SG$ for the super duality.
The KZ equations associated to the  Lie algebra $\G$ and the super KZ equations associated to the  Lie superalgebra $\SG$ are defined in Subsection~\ref{SKZ} and an explicit bijection between the sets of singular solutions of the (super) KZ equations associated to the  Lie algebra $\G$ and to the Lie superalgebra $\SG$ is obtained in Subsection~\ref{Sec: KZ_infty}.  Also we show that $\psi$ is also a solution/singular solution of the KZ equations (resp., super KZ equations) for  ${\G}_n$ (resp., $\SG_n$) with $n\ge k$ including $n=\infty$ if  $\psi$ is a solution/singular solution of the KZ equations (resp., super KZ equations) for  ${\G}_k$ (resp., $\SG_k$).
The Lie superalgebra $\SG_n$ (resp., $\G_n$) is a trivial central extension of $\ov{\mc G}_n$ (resp., ${\mc G}_n$), where $\ov{\mc G}_n$ (resp., ${\mc G}_n$) is either the general linear Lie superalgebra $\mf{gl}(m|n)$ (the general linear Lie algebra $\mf{gl}(m+n)$), Lie superalgebra of type  $\mf{osp}$ (resp., orthogonal Lie algebra) or Lie superalgebra of type  $\mf{spo}$ (resp.,  symplectic Lie algebra). See Subsections~\ref{Subsec:Lie superalgebras} and \ref{Central Ext} below. For $n\in \N$, we show in \propref{ncetoce} that the difference of the solutions/singular solutions of the KZ equations (resp., super KZ equations) for ${\mc G}_n$ (resp., $\ov{\mc G}_n$) and for ${\G}_n$ (resp., $\SG_n$) is by multiplying by a scalar function. Combing the results mentioned above, we have a bijection between the sets of singular solutions of the super KZ equations
for the tensor product of parabolic Verma modules (resp. irreducible modules) over $\ov{ \mc G}_n$
and the singular solutions of the KZ equations
 for the tensor product of the corresponding parabolic Verma modules (resp. irreducible modules) in $\mc G_k$
 for all sufficiently large $k$, and vice versa. As an application of our results, we give an explicit bijection between the sets of singular solutions of the KZ equations (non-super!) for the tensor product of infinite dimensional unitarizable modules of Lie algebra and the tensor product of integrable modules of corresponding Lie algebra since the super duality connects infinite dimensional unitarizable modules and integrable modules (see \cite[Section 2.3, 3.1]{CLW}, \cite{HLT, CLW2}).

The paper is organized as follows. We review the materials related to the super duality in \secref{sec:superalgebras}. The Lie superalgebras $\DG_n$, $\SG_n$ and $\G_n$ and their respective module categories $\wt{\mc{O}}_n$, $\ov{\mc{O}}_n$ and ${\mc{O}}_n$ are recalled.
In \secref{RKZ},  the Casimir operators and the Casimir symmetric tensors of the Lie superalgebras $\DG$, $\G$ and $\SG$ are defined and the results about the (super) KZ-equations are established. The analogous results of the (super) KZ equations for some special kinds of  trigonometric (super) KZ equations are obtained in Section~\ref{Sec:TKZ}.

\vskip 0.5cm
\noindent{\bf Acknowledgment.} The first author was partially supported by NSFC (Grant No.  11571374, 11521101 and 11771461).
A part of this research was done during the visit of the second author to Sun Yat-sen University.
The second author was partially supported by Ministry of Science and Technology grant 107-2115-M-006-005-MY2 of Taiwan and he thanks Sun Yat-sen University for hospitality and support.

\vskip 0.3cm
\noindent{\bf Notations.} Throughout the paper the symbols $\Z$, $\N$, and $\Z_+$ stand for the sets of all,
positive and non-negative integers, respectively. All vector spaces, algebras, tensor
products, et cetera, are over the field of complex numbers $\C$.

\bigskip

\section{Preliminaries}\label{sec:superalgebras}
 In this section, we first recall the infinite rank Lie (super)algebras $\G^\xx$, $\SG^\xx$ and
$\wt{\G}^\xx$ associated to the three Dynkin diagrams in \eqref{Dynkin:combined} below and their parabolic BGG categories $\OOx$ (resp., $\OObx$ and $\OOtx$) of $\G^\xx$- (resp., $\ov{\G}^\xx$- and $\wt{\G}^\xx$-)modules, where $\xx$ denotes one of the five types $\mf{a, b,b^\bullet,c,d}$. Then we recall the truncation functors on the categories $\OOx$, $\OObx$ and $\OOtx$ which relate the parabolic BGG categories $\OOxn$, $\OObxn$ and $\OOtxn$ of   finite-dimensional Lie subalgebras of $\G^\xx_n$, $\ov{\G}^\xx_n$ and $\wt{\G}^\xx_n$, respectively. We refer the readers to \cite[Sections 2 and 3]{CL} for type $\mf{a}$ and \cite[Sections 2 and 3]{CLW} for types $\mf{b,b^\bullet,c,d}$ for details (see also \cite[Sections 6.1 and 6.2]{CW2}).
Finally, we recall the tensor functors $T$ and $\ov T$ and their properties.
We fix $m\in\Z_+$ in this article.

For $m\in\Z_+$, we let $\w{\I}_m$ denote the following totally ordered set:
\begin{align*}
\cdots <\ov{\frac{3}{2}}
<\ov{1}<\ov{\hf}<\underbrace{\ov{-1}%<\ov{-2}
<\cdots<\ov{-m}}_m<\ov{0}<\underbrace{-m<\cdots<-1}_m
<\hf<1<\frac{3}{2}<\cdots
\end{align*}
and define the following subsets of $\w{\mathbb I}_m$:
\begin{align*}
{{\mathbb I}}_m &:=
\{\underbrace{\ov{-1},\ldots,\ov{-m}}_m,\ov{0},
\underbrace{-m,\ldots,-1}_m\}\cup \{\ov{1},\ov{2},\ov{3},\ldots\}
\cup\{1,2,3,\ldots\},\\
\ov{{\mathbb I}}_m &:=
\{\underbrace{\ov{-1},\ldots,\ov{-m}}_m,\ov{0},
\underbrace{-m,\ldots,-1}_m\}\cup
\{\ov{\frac{1}{2}},\ov{\frac{3}{2}},\ov{\frac{5}{2}},\ldots\}
\cup\{\hf,\frac{3}{2},\frac{5}{2},\ldots\},
 \\
\w{\I}^+_m &:=\{-m,\ldots,-1,\hf,1,\frac{3}{2},2,\ldots\}.
\end{align*}
We set $\{\ov{-1},\ldots,\ov{-m}\}\cup
\{-m,\ldots,-1\}=\emptyset$ for $m=0$.
For a subset $\X$ of $\wt{\I}_m$, let
$\X^\times:=\X\setminus\{\ov{0}\}$, $\X^+ :=\X\cap\wt{\I}_m^+$ and $\X(n)=\X\setminus\{j|j>n,j\in\frac{1}{2}\N\}$ for $n\in \N\cup\{\infty\}$.
For example,
$$
\wt{\I}^+_m(\infty)=\wt{\I}^+_m, \qquad {\I}^+_m(\infty)={\I}^+_m, \qquad
\ov{\I}^+_m(\infty)=\ov{\I}^+_m,
$$
\begin{align*}
\w{\mathbb I}^+_{m}(n)=\{\underbrace{-m,\ldots,-1}_m\}
\cup\{\frac{1}{2},1,\frac{3}{2},\ldots,n-\frac{1}{2},n\},\,\,\,
{{\mathbb I}}^+_{m}(n)={{\mathbb I}}^+_m\cap\w{\mathbb I}^+_{m}(n) \,\,\, \hbox{and} \,\,\, {\overline{\mathbb I}}^+_{m}(n)={\overline{\mathbb I}}^+_m\cap\w{\mathbb I}^+_{m}(n).
\end{align*}

\subsection{Lie superalgebras $\mc G^\xx$, $\ov{\mc G}^\xx$ and $\wt{\mc G}^\xx$}\label{Subsec:Lie superalgebras}
For a homogeneous element $v$ in a super vector space
$V=V_{\bar{0}}\oplus V_{\bar{1}}$ we denote by $|v|$ its
$\Z_2$-degree. For $m\in\Z_+$ and $n\in\N\cup\{\infty\}$, let $\wt{V}_{m}(n)$ denote the super space over $\C$ with
ordered basis $\{v_i\,|\,i\in\wt{\I}_m(n)\}$. We declare $|v_r|=|v_{\ov{r}}|=\bar{0}$, if
$r\in\Z\setminus\{0\}$, and $|v_r|=|v_{\ov{r}}|=\bar{1}$, if $r\in\hf+\Z_+$.  The parity
of the vector $v_{\ov{0}}$ is to be specified. With respect to this basis a linear map on
$\wt{V}_m(n)$ may be identified with a complex matrix $(a_{rs})_{r,s\in\wt{\mathbb{I}}_m(n)}$.
The Lie superalgebra $\gl(\wt{V}_m(n))$ is the Lie subalgebra of linear transformations on
$\wt{V}_m$ consisting of $(a_{rs})$ with $a_{rs}=0$ for all but finitely many $a_{rs}$'s.
Denote by $E_{rs}\in\gl(\wt{V}_m(n))$ the elementary matrix with $1$ at the $r$th row and
$s$th column and zero elsewhere.

The vector spaces $V_m(n)$ and $\ov{V}_m(n)$ are defined to be subspaces
of $\wt{V}_m(n)$ with ordered basis $\{v_i\}$ indexed by $\I_m(n)$ and
$\ov{\I}_m(n)$, respectively. The corresponding subspaces of $V_m(n)$,
$\ov{V}_m(n)$ and $\wt{V}_m(n)$ with basis vectors $v_i$, with $i$
indexed by $\I^\times_m(n)$, $\ov{\I}^\times_m(n)$ and
$\wt{\I}^\times_m(n)$, respectively, are denoted by $V^\times_m(n)$,
$\ov{V}^\times_m(n)$ and $\wt{V}^\times_m(n)$, respectively.
This gives
rise to Lie superalgebras $\gl(V_m(n))$, $\gl(\ov{V}_m(n))$,
$\gl(V^\times_m(n))$, $\gl(\ov{V}^\times_m(n))$ and
$\gl(\w{V}^\times_m(n))$.

\subsubsection{General linear superalgebras $\mGt^{\mf{a}}_n$}
\label{sec: A} For $m\in\Z_+$ and $n\in\N\cup\{\infty\}$, let $\mGt^{\mf{a}}_n$ denote the Lie subalgebra of $\gl(\wt{V}_m)$ spanned by $\{E_{i,j}\,|\, i,j\in \w{\I}^+_m(n)\}$ and let
$$
\mG^{\mf{a}}_n:=\mGt^{\mf{a}}_n\cap \gl({V}_m)\quad \hbox{and }\quad \mGb^{\mf{a}}_n:=\mGt^{\mf{a}}_n\cap \gl(\ov{V}_m).
$$
The sets $\{E_{i,j}\,|\, i,j\in \w{\I}^+_m(n)\}$,
$\{E_{i,j}\,|\, i,j\in {\I}^+_m(n)\}$ and $\{E_{i,j}\,|\, i,j\in \ov{\I}^+_m(n)\}$ are bases for
$\mGt^{\mf{a}}_n$, $\mG^{\mf{a}}_n$ and $\mGb^{\mf{a}}_n$, respectively.  The Dynkin diagrams for the Lie
(super)algebras $\mGt^{\mf{a}}_n$, $\mG^{\mf{a}}_n$ and $\mGb^{\mf{a}}_n$ are given in \eqref{Dynkin:combined} below with the corresponding notations.
The corresponding Cartan subalgebras have bases $\{E_i:=E_{i,i}\,|\, i\in \w{\I}^+_m(n)\}$,
$\{E_i\,|\, i\in {\I}^+_m(n)\}$ and $\{E_i\,|\, i\in \ov{\I}^+_m(n)\}$, respectively.

\subsubsection{Ortho-symplectic Lie superalgebras $\mGt^{\mf{b^\bullet}}_n$ and $\mGt^{\mf{c}}_n$}
\label{sec:skewsym}

In this subsection we set $|v_{\ov{0}}|=\bar{1}$. For $m\in\Z_+$ and $n\in\N\cup\{\infty\}$, we
define a non-degenerate skew-supersymmetric bilinear form
$(\cdot|\cdot)$ on $\wt{V}_m(n)$ by
\begin{eqnarray*}\label{symp:bilinear:form}
&&(v_{r}|v_{{s}}) =(v_{\ov{r}}|v_{\ov{s}})=0,\quad
(v_r|v_{\ov{s}}) =\delta_{rs}=-(-1)^{|v_r|\cdot|v_s|}(v_{\ov{s}}|v_r),
 \quad r,s\in\wt{\I}_m^+,\\
&&(v_{\ov{0}}|v_{\ov{0}})=1, \quad
(v_{\ov{0}}|v_{r})=(v_{\ov{0}}|v_{\ov{r}})=0,\quad
r\in\wt{\I}_m^+.\nonumber
\end{eqnarray*}
Restricting the form to $\wt{V}^\times_m(n)$, $V_m(n)$, $V^\times_m(n)$,
$\ov{V}_m(n)$ and $\ov{V}^\times_m(n)$ gives rise to non-degenerate
skew-supersymmetric bilinear forms that will again be denoted by
$(\cdot|\cdot)$.

Let $\mGt^{\mf{b^\bullet}}_n$, $\mG^{\mf{b^\bullet}}_n$, $\mGb^{\mf{b^\bullet}}_n$, $\mGt^{\mf{c}}_n$, $\mG^{\mf{c}}_n$ and $\mGb^{\mf{c}}_n$ denote the subalgebras of Lie superalgebras $\gl(\w{V}_m(n))$, $\gl({V}_m(n))$, $\gl(\ov{V}_m(n))$,
$\gl(\w{V}^\times_m(n))$, $\gl({V}^\times_m(n))$ and
$\gl(\ov{V}^\times_m(n))$ preserving the bilinear form $(\cdot|\cdot)$, respectively.
Their Dynkin diagrams are given in \eqref{Dynkin:combined} below with the corresponding notations.
The corresponding Cartan subalgebras have bases $\{E_r:=E_{rr}-E_{\ov{r},\ov{r}}\,|\, r\in \w{\I}^+_m(n)\}$,
$\{E_r\,|\, r\in {\I}^+_m(n)\}$ and  $\{E_r\,|\,r\in \ov{\I}^+_m(n)\}$, respectively.

\subsubsection{Ortho-symplectic Lie superalgebras $\mGt^{\mf{b}}_n$ and $\mGt^{\mf{d}}_n$}\label{sec:symm}

In this subsection, we set
$|v_{\ov{0}}|=\bar{0}$. For $m\in\Z_+$ and $n\in\N\cup\{\infty\}$, we define a non-degenerate supersymmetric bilinear form
$(\cdot|\cdot)$ on $\wt{V}_m(n)$ by
\begin{eqnarray*}\label{sym:bilinear:form}
&&(v_{r}|v_{{s}})=(v_{\ov{r}}|v_{\ov{s}})=0,\quad
(v_r|v_{\ov{s}})=\delta_{rs}=(-1)^{|v_r|\cdot|v_s|}(v_{\ov{s}}|v_r),
 \quad r,s\in\wt{\I}_m^+,\\
&&(v_{\ov{0}}|v_{\ov{0}})=1,\quad (v_{\ov{0}}|v_{r})
=(v_{\ov{0}}|v_{\ov{r}})=0,\quad r\in\wt{\I}_m^+.\nonumber
\end{eqnarray*}
Restricting the form to $\wt{V}^\times_m(n)$, $V_m(n)$, $V^\times_m(n)$, $\ov{V}_m(n)$ and
$\ov{V}^\times_m(n)$ gives respective non-degenerate supersymmetric bilinear forms that will
also be denoted by $(\cdot|\cdot)$.

Let $\mGt^{\mf{b}}_n$, $\mG^{\mf{b}}_n$, $\mGb^{\mf{b}}_n$, $\mGt^{\mf{d}}_n$, $\mG^{\mf{d}}_n$ and $\mGb^{\mf{d}}_n$ denote the subalgebras of Lie superalgebras $\gl(\w{V}_m(n))$, $\gl({V}_m(n))$, $\gl(\ov{V}_m(n))$,
$\gl(\w{V}^\times_m(n))$, $\gl({V}^\times_m(n))$ and
$\gl(\ov{V}^\times_m(n))$ preserving the bilinear form $(\cdot|\cdot)$, respectively.
Their Dynkin diagrams are given in \eqref{Dynkin:combined} below with the corresponding notations.
The corresponding Cartan subalgebras have bases $\{E_r:=E_{rr}-E_{\ov{r},\ov{r}}\,|\, r\in \w{\I}^+_m(n)\}$,
$\{E_r\,|\, r\in {\I}^+_m(n)\}$ and  $\{E_r\,|\,r\in \ov{\I}^+_m(n)\}$, respectively.

\subsection{Dynkin diagrams}
\label{dynkin}
Consider the free abelian group with basis
$\{\epsilon_{i},|i\in\w{\mathbb I}^+_m\}$, with a symmetric bilinear form $(\cdot,\cdot)$
given by
\begin{equation}\label{b.f. on root system}
(\epsilon_r,\epsilon_s)=(-1)^{2r}\delta_{rs}, \qquad r,s \in
\w{\mathbb I}^+_m.
\end{equation}
Let $|\epsilon_i|=0$ for $i\in\mathbb{Z}$ and $|\epsilon_j|=1$ for $j\in\frac{1}{2}+\mathbb{Z}$, for convenience.
We set
\begin{align*}\label{alpha:beta}
&\alpha_{\times}:=\epsilon_{-1}-\epsilon_{1/2},
 \quad\alpha_{j}
:=\epsilon_{j}-\epsilon_{j+1},\quad -m\le j\le -2,\\
&\beta_{\times}:=\epsilon_{-1}-\epsilon_{1},\quad
\alpha_{r}:=\epsilon_{r}-\epsilon_{r+1/2},\quad \beta_{r}
:=\epsilon_{r}-\epsilon_{r+1},\quad r\in\hf\N.\nonumber
\end{align*}

For  $\xx =\mf{ b,b^\bullet,c,d}$, we denote by $\mf{k}^\xx$ the contragredient Lie
(super)algebras (\cite[Section 2.5]{K1}) and by $\mf{k}^\mf{a}$  the Lie algebra $\mf{gl}(m)$. The corresponding Dynkin diagrams
\makebox(23,0){$\oval(20,12)$}\makebox(-20,8){$\mf{k}^\xx$} together with certain
distinguished sets of simple roots $\Pi(\mf{k^x})$ are listed as follows: \vspace{.3cm}

\begin{center}
\hskip -3cm \setlength{\unitlength}{0.16in}
\begin{picture}(24,3)
\put(8,2){\makebox(0,0)[c]{$\bigcirc$}}
\put(10.4,2){\makebox(0,0)[c]{$\bigcirc$}}
\put(14.85,2){\makebox(0,0)[c]{$\bigcirc$}}
\put(17.25,2){\makebox(0,0)[c]{$\bigcirc$}}
\put(19.4,2){\makebox(0,0)[c]{$\bigcirc$}}
\put(5.6,2){\makebox(0,0)[c]{$\bigcirc$}}
\put(8.4,2){\line(1,0){1.55}} \put(10.82,2){\line(1,0){0.8}}
\put(13.2,2){\line(1,0){1.2}} \put(15.28,2){\line(1,0){1.45}}
\put(17.7,2){\line(1,0){1.25}}
\put(6,2){\line(1,0){1.4}}
\put(12.5,1.95){\makebox(0,0)[c]{$\cdots$}}
\put(-.5,2){\makebox(0,0)[c]{$\mf{a}$:}}
\put(5.5,1){\makebox(0,0)[c]{\tiny$\alpha_{-m}$}}
\put(8,1){\makebox(0,0)[c]{\tiny$\alpha_{-m+1}$}}
\put(17.2,1){\makebox(0,0)[c]{\tiny$\alpha_{-3}$}}
\put(19.3,1){\makebox(0,0)[c]{\tiny$\alpha_{-2}$}}
\end{picture}
\end{center}

\begin{center}
\hskip -3cm \setlength{\unitlength}{0.16in}
\begin{picture}(24,2)
\put(8,2){\makebox(0,0)[c]{$\bigcirc$}}
\put(10.4,2){\makebox(0,0)[c]{$\bigcirc$}}
\put(14.85,2){\makebox(0,0)[c]{$\bigcirc$}}
\put(17.25,2){\makebox(0,0)[c]{$\bigcirc$}}
\put(19.4,2){\makebox(0,0)[c]{$\bigcirc$}}
\put(5.6,2){\makebox(0,0)[c]{$\bigcirc$}}
\put(8.4,2){\line(1,0){1.55}} \put(10.82,2){\line(1,0){0.8}}
\put(13.2,2){\line(1,0){1.2}} \put(15.28,2){\line(1,0){1.45}}
\put(17.7,2){\line(1,0){1.25}}
\put(6,1.8){$\Longleftarrow$}
\put(12.5,1.95){\makebox(0,0)[c]{$\cdots$}}
\put(-.5,2){\makebox(0,0)[c]{$\mf{b}$:}}
\put(5.5,1){\makebox(0,0)[c]{\tiny$-\epsilon_{-m}$}}
\put(8,1){\makebox(0,0)[c]{\tiny$\alpha_{-m}$}}
\put(17.2,1){\makebox(0,0)[c]{\tiny$\alpha_{-3}$}}
\put(19.3,1){\makebox(0,0)[c]{\tiny$\alpha_{-2}$}}
\end{picture}
\end{center}
\begin{center}
\hskip -3cm \setlength{\unitlength}{0.16in}
\begin{picture}(24,2)
\put(5.6,2){\circle*{0.9}} \put(8,2){\makebox(0,0)[c]{$\bigcirc$}}
\put(10.4,2){\makebox(0,0)[c]{$\bigcirc$}}
\put(14.85,2){\makebox(0,0)[c]{$\bigcirc$}}
\put(17.25,2){\makebox(0,0)[c]{$\bigcirc$}}
\put(19.4,2){\makebox(0,0)[c]{$\bigcirc$}}
\put(8.35,2){\line(1,0){1.5}} \put(10.82,2){\line(1,0){0.8}}
\put(13.2,2){\line(1,0){1.2}} \put(15.28,2){\line(1,0){1.45}}
\put(17.7,2){\line(1,0){1.25}}
\put(6.8,2){\makebox(0,0)[c]{$\Longleftarrow$}}
\put(12.5,1.95){\makebox(0,0)[c]{$\cdots$}}
\put(-.2,2){\makebox(0,0)[c]{$\mf{b^\bullet}$:}}
\put(5.5,1){\makebox(0,0)[c]{\tiny$-\epsilon_{-m}$}}
\put(8,1){\makebox(0,0)[c]{\tiny$\alpha_{-m}$}}
\put(17.2,1){\makebox(0,0)[c]{\tiny$\alpha_{-3}$}}
\put(19.3,1){\makebox(0,0)[c]{\tiny$\alpha_{-2}$}}
\end{picture}
\end{center}
\begin{center}
\hskip -3cm \setlength{\unitlength}{0.16in}
\begin{picture}(24,2)
\put(5.7,2){\makebox(0,0)[c]{$\bigcirc$}}
\put(8,2){\makebox(0,0)[c]{$\bigcirc$}}
\put(10.4,2){\makebox(0,0)[c]{$\bigcirc$}}
\put(14.85,2){\makebox(0,0)[c]{$\bigcirc$}}
\put(17.25,2){\makebox(0,0)[c]{$\bigcirc$}}
\put(19.4,2){\makebox(0,0)[c]{$\bigcirc$}}
\put(6.8,2){\makebox(0,0)[c]{$\Longrightarrow$}}
\put(8.4,2){\line(1,0){1.55}} \put(10.82,2){\line(1,0){0.8}}
\put(13.2,2){\line(1,0){1.2}} \put(15.28,2){\line(1,0){1.45}}
\put(17.7,2){\line(1,0){1.25}}
\put(12.5,1.95){\makebox(0,0)[c]{$\cdots$}}
\put(-.5,2){\makebox(0,0)[c]{$\mf{c}$:}}
\put(5.5,1){\makebox(0,0)[c]{\tiny$-2\epsilon_{-m}$}}
\put(8,1){\makebox(0,0)[c]{\tiny$\alpha_{-m}$}}
\put(17.2,1){\makebox(0,0)[c]{\tiny$\alpha_{-3}$}}
\put(19.3,1){\makebox(0,0)[c]{\tiny$\alpha_{-2}$}}
\end{picture}
\end{center}
\begin{center}
\hskip -3cm \setlength{\unitlength}{0.16in}
\begin{picture}(24,3.5)
\put(8,2){\makebox(0,0)[c]{$\bigcirc$}}
\put(10.4,2){\makebox(0,0)[c]{$\bigcirc$}}
\put(14.85,2){\makebox(0,0)[c]{$\bigcirc$}}
\put(17.25,2){\makebox(0,0)[c]{$\bigcirc$}}
\put(19.4,2){\makebox(0,0)[c]{$\bigcirc$}}
\put(6,3.8){\makebox(0,0)[c]{$\bigcirc$}}
\put(6,.3){\makebox(0,0)[c]{$\bigcirc$}}
\put(8.4,2){\line(1,0){1.55}} \put(10.82,2){\line(1,0){0.8}}
\put(13.2,2){\line(1,0){1.2}} \put(15.28,2){\line(1,0){1.45}}
\put(17.7,2){\line(1,0){1.25}}
\put(7.6,2.2){\line(-1,1){1.3}}
\put(7.6,1.8){\line(-1,-1){1.3}}
\put(12.5,1.95){\makebox(0,0)[c]{$\cdots$}}
\put(-.5,2){\makebox(0,0)[c]{$\mf{d}$:}}
\put(3.3,0.3){\makebox(0,0)[c]{\tiny${-}\epsilon_{-m}{-}\epsilon_{-m+1}$}}
\put(4.7,3.8){\makebox(0,0)[c]{\tiny$\alpha_{-m}$}}
\put(8.2,1){\makebox(0,0)[c]{\tiny$\alpha_{-m+1}$}}
\put(17.2,1){\makebox(0,0)[c]{\tiny$\alpha_{-3}$}}
\put(19.3,1){\makebox(0,0)[c]{\tiny$\alpha_{-2}$}}
\end{picture}
\end{center}
According to \cite[Proposition 2.5.6]{K1} these Lie (super)algebras are $\mf{so}(2m+1)$,
$\mf{osp}(1|2m)$, $\mf{sp}(2m)$ for $m\ge 1$ and $\mf{so}(2m)$ for $m \ge 2$, for $\xx =\mf{ b,b^\bullet,c,d}$,
respectively.  We will use the same notation
\makebox(23,0){$\oval(20,12)$}\makebox(-20,8){$\mf{k}^\xx$} to denote the diagrams of all
the degenerate cases for $m=0,1$ as well.  We have used \makebox(15,5){\circle*{7}} to denote an odd
non-isotropic simple root.

For $n\in\N$, let
\makebox(23,0){$\oval(20,12)$}\makebox(-20,8){$\mf{T}_n$},
\makebox(23,0){$\oval(20,14)$}\makebox(-20,8){$\ov{\mf{T}}_n$} and
\makebox(23,0){$\oval(20,14)$}\makebox(-20,8){$\wt{\mf{T}}_n$}
denote the following Dynkin diagrams, where $\bigotimes$ denotes
an odd isotropic simple root:
\begin{center}
\hskip -3cm \setlength{\unitlength}{0.16in}
\begin{picture}(24,3)
\put(8,2){\makebox(0,0)[c]{$\bigcirc$}}
\put(10.4,2){\makebox(0,0)[c]{$\bigcirc$}}
\put(14.85,2){\makebox(0,0)[c]{$\bigcirc$}}
\put(17.25,2){\makebox(0,0)[c]{$\bigcirc$}}
\put(5.6,2){\makebox(0,0)[c]{$\bigcirc$}}
\put(8.4,2){\line(1,0){1.55}} \put(10.82,2){\line(1,0){0.8}}
\put(13.2,2){\line(1,0){1.2}} \put(15.28,2){\line(1,0){1.45}}
\put(6,2){\line(1,0){1.4}}
\put(12.5,1.95){\makebox(0,0)[c]{$\cdots$}}
\put(0,1.2){{\ovalBox(1.6,1.2){$\mf{T}_n$}}}
\put(5.5,1){\makebox(0,0)[c]{\tiny$\beta_{\times}$}}
\put(8,1){\makebox(0,0)[c]{\tiny$\beta_{1}$}}
\put(10.3,1){\makebox(0,0)[c]{\tiny$\beta_{2}$}}
\put(15,1){\makebox(0,0)[c]{\tiny$\beta_{n-2}$}}
\put(17.2,1){\makebox(0,0)[c]{\tiny$\beta_{n-1}$}}
\end{picture}
\end{center}
\begin{center}
\hskip -3cm \setlength{\unitlength}{0.16in}
\begin{picture}(24,2)
\put(8,2){\makebox(0,0)[c]{$\bigcirc$}}
\put(10.4,2){\makebox(0,0)[c]{$\bigcirc$}}
\put(14.85,2){\makebox(0,0)[c]{$\bigcirc$}}
\put(17.25,2){\makebox(0,0)[c]{$\bigcirc$}}
\put(5.6,2){\makebox(0,0)[c]{$\bigotimes$}}
\put(8.4,2){\line(1,0){1.55}} \put(10.82,2){\line(1,0){0.8}}
\put(13.2,2){\line(1,0){1.2}} \put(15.28,2){\line(1,0){1.45}}
\put(6,2){\line(1,0){1.4}}
\put(12.5,1.95){\makebox(0,0)[c]{$\cdots$}}
\put(0,1.2){{\ovalBox(1.6,1.2){$\ov{\mf{T}}_n$}}}
\put(5.5,1){\makebox(0,0)[c]{\tiny$\alpha_{\times}$}}
\put(10.3,1){\makebox(0,0)[c]{\tiny$\beta_{3/2}$}}
\put(8,1){\makebox(0,0)[c]{\tiny$\beta_{1/2}$}}
\put(14.8,1){\makebox(0,0)[c]{\tiny$\beta_{n-5/2}$}}
\put(17.5,1){\makebox(0,0)[c]{\tiny$\beta_{n-3/2}$}}
\end{picture}
\end{center}
\begin{center}
\hskip -3cm \setlength{\unitlength}{0.16in}
\begin{picture}(24,2)
\put(8,2){\makebox(0,0)[c]{$\bigotimes$}}
\put(10.4,2){\makebox(0,0)[c]{$\bigotimes$}}
\put(14.85,2){\makebox(0,0)[c]{$\bigotimes$}}
\put(17.25,2){\makebox(0,0)[c]{$\bigotimes$}}
\put(5.6,2){\makebox(0,0)[c]{$\bigotimes$}}
\put(8.4,2){\line(1,0){1.55}} \put(10.82,2){\line(1,0){0.8}}
\put(13.2,2){\line(1,0){1.2}} \put(15.28,2){\line(1,0){1.45}}
\put(6,2){\line(1,0){1.4}}
\put(12.5,1.95){\makebox(0,0)[c]{$\cdots$}}
\put(0,1.2){{\ovalBox(1.6,1.2){$\wt{\mf{T}}_n$}}}
\put(5.5,1){\makebox(0,0)[c]{\tiny$\alpha_{\times}$}}
\put(8,1){\makebox(0,0)[c]{\tiny$\alpha_{1/2}$}}
\put(14.8,1){\makebox(0,0)[c]{\tiny$\alpha_{n-1}$}}
\put(17.2,1){\makebox(0,0)[c]{\tiny$\alpha_{n-1/2}$}}
\put(10.3,1){\makebox(0,0)[c]{\tiny$\alpha_{1}$}}
\end{picture}
\end{center}
We will denote the sets of simple roots of the above diagrams accordingly by
$\Pi({\mf{T}}_n)$, $\Pi(\ov{\mf{T}}_n)$ and $\Pi(\wt{\mf{T}}_n)$, respectively.
The Lie superalgebras associated with these Dynkin diagrams are
$\gl(n+1)$, $\mf{gl}(1|n)$ and $\mf{gl}(n|n+1)$, respectively.
In the
limit $n\to\infty$, the associated Lie superalgebras are direct limits of
these Lie superalgebras, and we
will simply drop $\infty$ to write
\makebox(23,0){$\oval(20,12)$}\makebox(-20,8){$\mf{T}$} =
\makebox(23,0){$\oval(20,12)$}\makebox(-20,8){$\mf{T}_\infty$} and
so on.

Any of the {\em head} diagrams
\makebox(23,0){$\oval(20,11)$}\makebox(-20,8){$\mf{k}^\xx$} may be
connected with the {\em tail} diagrams
\makebox(23,0){$\oval(20,14)$}\makebox(-20,8){$\wt{\mf{T}}_n$},
\makebox(23,0){$\oval(20,14)$}\makebox(-20,8){${\mf{T}}_n$} and
\makebox(23,0){$\oval(20,14)$}\makebox(-20,8){$\ov{\mf{T}}_n$} to
produce the following Dynkin diagrams ($n\in\N\cup\{\infty\}$):
\begin{equation}\label{Dynkin:combined}
\hskip -3cm \setlength{\unitlength}{0.16in}
\begin{picture}(24,1)
\put(5.0,0.5){\makebox(0,0)[c]{{\ovalBox(1.6,1.2){$\mf{k}^\xx$}}}}
\put(5.8,0.5){\line(1,0){1.85}}
\put(8.5,0.5){\makebox(0,0)[c]{{\ovalBox(1.6,1.2){$\wt{\mf{T}}_n$}}}}
\put(15,0.5){\makebox(0,0)[c]{{\ovalBox(1.6,1.2){$\mf{k}^\xx$}}}}
\put(15.8,0.5){\line(1,0){1.85}}
\put(18.5,0.5){\makebox(0,0)[c]{{\ovalBox(1.6,1.2){${\mf{T}}_n$}}}}
\put(25,0.5){\makebox(0,0)[c]{{\ovalBox(1.6,1.2){$\mf{k}^\xx$}}}}
\put(25.8,0.5){\line(1,0){1.85}}
\put(28.5,0.5){\makebox(0,0)[c]{{\ovalBox(1.6,1.2){$\ov{\mf{T}}_n$}}}}
\end{picture}
\end{equation}
We will denote the sets of simple roots of the above diagrams accordingly by $\wt\Pi^{\xx}_n$, $\Pi^{\xx}_n$ and $\ov\Pi^{\xx}_n$, respectively.
Let us denote the three Dynkin diagrams of \eqnref{Dynkin:combined}
by
\makebox(23,0){$\oval(20,14)$}\makebox(-20,8){\small{$\w{\mc{G}}^\xx_n$}},
\makebox(23,0){$\oval(20,14)$}\makebox(-20,8){\small${\mc{G}}^\xx_n$} and
\makebox(23,0){$\oval(20,14)$}\makebox(-20,8){\small$\ov{\mc{G}}^\xx_n$} for $n\in\mathbb{N}\cup\{\infty\}$.
We will drop the subscript $\infty$ for $n=\infty$.
For Dynkin diagrams of the degenerate cases, we refer to \cite[Section 2.3]{CLW} for details.

\subsection{Central extensions}\label{Central Ext}
 Consider the central extension
$\DG^\xx_n$ (resp., $\G^\xx_n$ and $\SG^\xx_n$) of $\mGt^\xx_n$ (resp., $\mG^\xx_n$ and $\mGb^\xx_n$) for $\xx=\mf{a, b, b^\bullet, c, d}$
by the one-dimensional center $\C K$ determined by the $2$-cocycle
\begin{align}\label{central extension}
\tau(A,B):=\text{Str}([J,A]B),\quad \hbox{$A,B\in \mGt^\xx_n$ (resp., $\mG^\xx_n$ and $\mGb^\xx_n$)},
\end{align}
where $J:=-\sum_{r\ge \hf} E_{rr}$ and $\text{Str}$ denotes the supertrace defined by
 $$
 \mbox{Str}\big((a_{i,j})_{i,j\in\w{\I}_m}\big):=\sum_{j\in\w{\I}_m}(-1)^{|v_j|}a_{j,j}.
 $$
In fact, the cocycle
$\tau$ is a coboundary. Moreover, there is an isomorphism $\iota$ from the direct sum of Lie superalgebras $\mGt^\xx_n\oplus \C K$ (resp., $\mG^\xx_n\oplus \C K$ and $\mGb^\xx_n\oplus \C K$) to $\DG^\xx_n$ (resp., $\G^\xx_n$ and $\SG^\xx_n$) defined by
\begin{equation}\label{iso}
  \iota(A)=A+\mbox{Str}(JA)K,\quad\hbox{for $A\in\mGt^\xx_n$ (resp., $\mG^\xx_n$ and $\mGb^\xx_n$)\quad and\quad $\iota(K)=K$.}
\end{equation}

\begin{rem} Using the isomorphisms given in \eqref{iso} and the standard bases of the Cartan subalgebras, it is easy to see that the central extensions defined by the $2$-cycle in \eqnref{central extension} are the same as the central extensions defined in \cite[Section 2.4]{CLW} for $\xx=\mf{b, b^\bullet, c, d}$.
\end{rem}

Every $\DG^\xx_n$ (resp., $\G^\xx_n$ and $\SG^\xx_n$)-module can be regarded as a $\mGt^\xx_n$ (resp., $\mG^\xx_n$ and $\mGb^\xx_n)$-module through the isomorphism \eqnref{iso} since $\mGt^\xx_n$ (resp., $\mG^\xx_n$ and $\mGb^\xx_n)$ is a subalgebra of $\mGt^\xx_n\oplus \C K$ (resp., $\mG^\xx_n\oplus \C K$ and $\mGb^\xx_n\oplus \C K$).  These central extensions are convenient and conceptual for the formulation of truncation functors and super duality described this section below (see \cite[Remark 3.3]{CLW} for more explanations).

Note that $\w{\mc{G}}^\xx_n$, ${\mc{G}}^\xx_n$ and $\ov{\mc{G}}^\xx_n$ are naturally
subalgebras of $\mf{gl}(\w{V})$ and hence $\DG^\xx_n$, $\G^\xx_n$ and
$\SG^\xx_n$ are naturally subalgebras of the central extension of $\mf{gl}(\w{V})$.
Also note that ${\mc{G}}^\xx_n$ and $\ov{\mc{G}}^\xx_n$ are subalgebras of $\w{\mc{G}}^\xx_n$ and $\G^\xx_n$ and
$\SG^\xx_n$ are subalgebras of $\DG^\xx_n$.  The standard Cartan subalgebras of $\G^\xx_n$, $\SG^\xx_n$ and
$\DG^\xx_n$ will be denoted by $\h^\xx_n$, $\ov{\h}^\xx_n$ and $\wt{\h}^\xx_n$, respectively.
They have bases $\{K,{E}^{\xx}_{r} \}$  with dual
bases $\{\Lambda_0,\epsilon_r\}$ in the restricted dual $(\wt{\h}^\xx_n)^*$, $({\h}^\xx_n)^*$
and $(\wt{\h}^\xx_n)^*$,  where $r$ runs over the index sets $\wt{\I}^+_m(n)$, ${\I}^+_m(n)$ and
$\ov{\I}^+_m(n)$, respectively. Here
$E^{\xx}_r\in \wt{\h}^\xx_n$ (resp., ${\h}^\xx_n$ and $\ov{\h}^\xx_n$)
and $\La_0\in (\wt{\h}^\xx_n)^*$  (resp., $({\h}^\xx_n)^*$ and $(\ov{\h}^\xx_n)^*$) are defined by
\begin{equation*}\label{def:E}
E^{\xx}_r:=\begin{cases}
     E_{rr} & \mbox{for}\ \xx=\mf{a};\\
    E_{rr}-E_{\ov{r}\ov{r}} & \mbox{for}\ \xx=\mf{b,b^\bullet,c,d},
     \end{cases}
     \quad \hbox{and}\quad \La_0(K)=1,\quad\La_0({E^{\xx}_r})=0,
\end{equation*}
for all $r \in \wt{\I}^+_m(n)$
 (resp., ${\I}^+_m(n)$ and $\ov{\I}^+_m(n)$).

{\bf In the remainder of the paper we shall drop the superscript
$\xx$ and the symbol $\infty$ if it causes no ambiguity.} For example, we write $\mGt_n$, $\mG_n$ and $\mGb_n$ for
$\mGt^\xx_n$, $\mG^\xx_n$ and $\mGb^\xx_n$, and $\wt{\G}_n$, $\G_n$ and $\SG_n$ for
$\wt{\G}^\xx_n$, $\G^\xx_n$ and $\SG^\xx_n$, respectively, where $\xx$ denotes a fixed type among
$\mf{a, b,b^\bullet,c,d}$. Also, we write $\wt{\G}$, $\G$ and $\SG$ for
$\wt{\G}^\xx_\infty$, $\G^\xx_\infty$ and $\SG^\xx_\infty$, respectively.

\subsection{Categories $\w{\mc {O}}_n$, ${\mc {O}}_n$ and $\ov{\mc {O}}_n$}

Recall that $\wt\Pi^{\xx}_n$, $\Pi^{\xx}_n$ and $\ov\Pi^{\xx}_n$ denote the sets of simple roots of $\DG_n$, $\G_n$ and $\SG_n$, respectively. Let $\w{\Phi}_n^{+}$ (resp., $\Phi_n^{+}$ and $\ov{\Phi}_n^{+}$) denote the set of positive roots and let
 $\w{\mathfrak{b}}_n$ (resp., $\mathfrak{b}_n$ and $\overline{\mathfrak{b}}_n$) denote the corresponding Borel subalgebra of
$\w{\mathfrak{g}}_n$ (resp., $ \mathfrak{g}_n$ and $\overline{\mathfrak{g}}_n$).

For $n\in \N\cup\{\infty\}$, let $\w Y_n$, $ Y_n$ and $\ov Y_n$ denote the subsets of $\w\Pi_n$, $\Pi_n$ and $\ov\Pi_n$, respectively, defined by
\begin{equation}\label{Y}
 \w Y_n:= \Pi(\mf{\w T}_n)\setminus\{\alpha_{\times}\},\quad
  Y_n:=\Pi(\mf{T}_n)\setminus\{\beta_{\times}\}\quad \hbox{and}\quad \overline Y_n:=\Pi(\mf{\overline T}_n)\setminus\{\alpha_{\times}\} .
\end{equation}
Let $\w{\mf l}_n$ (resp., $\mf{l}_n$ and $\ov{\mf{l}}_n$) be the standard Levi subalgebra of $ \DG_n$ (resp., $\G_n$ and $\SG_n$) associated to $\w Y_n$ (resp., $ Y_n$ and $\ov Y_n$) and let  $\w{\mf p}_n=\w{\mf l}_n+\w{\mf b}_n$
(resp., $\mf p_n=\mf l_n+\mf b_n$ and $\overline{\mf p}_n=\overline{\mf l}_n+\overline{\mf b}_n$) be the corresponding parabolic subalgebra.

Given a partition $\mu=(\mu_1,\mu_2,\ldots)$, we denote by
$\ell(\mu)$ the length of $\mu$ and by $\mu'$ its conjugate
partition. We also denote by $\theta(\mu)$ the modified Frobenius
coordinates of $\mu$:
\begin{equation*}
\theta(\mu)
:=(\theta(\mu)_{1/2},\theta(\mu)_1,\theta(\mu)_{3/2},\theta(\mu)_2,\ldots),
\end{equation*}
where
$$\theta(\mu)_{i-1/2}:=\max\{\mu'_i-i+1,0\}, \quad
\theta(\mu)_i:=\max\{\mu_i-i, 0\}, \quad i\in\N.
$$

Associated to a partition $\la^+=(\la^+_1,\la^+_2,\ldots)$, $d\in\C$ and $\la_{-m},\ldots,\la_{-1}\in\C$ (resp., $\la_{-m},\ldots,\la_{-1}\in\Z$) for $\xx=\mf{a, b, c, d}$ (resp., $\mf b^\bullet$), we define
\begin{align}
{\la} &:=\sum_{i=-m}^{-1}\la_{i}\epsilon_{i}
 + \sum_{j\in\N}\la^+_{j}\epsilon_{j}
 + d\La_0\in ({\h}^\xx)^{*},\label{weight:Im}\\
\ov{\la} &:=\sum_{i=-m}^{-1}\la_{i}\epsilon_{i}
 + \sum_{s\in\hf+\Z_+}(\la^+)'_{s+\hf}\epsilon_s
 + d\La_0\in (\ov{\h}^\xx)^{*}, \label{weight:ovIm}\\
\w{\la} &:=\sum_{i=-m}^{-1}\la_{i}\epsilon_{i}
 + \sum_{r\in\hf\N}\theta(\la^+)_r\epsilon_r
 + d\La_0\in (\wt{\h}^\xx)^{*},\label{weight:wtIm}
\end{align}
which are called \emph{dominant weights}.
Let $P^+:=P^+_\infty\subset{\h}^*$, $\ov{P}^+:=\ov{P}^+_\infty\subset\ov{\h}^*$ and
$\wt{P}^+:=\wt{P}^+_\infty\subset\wt{\h}^*$ denote the sets of all dominant weights of the forms
\eqnref{weight:Im}, \eqnref{weight:ovIm} and \eqnref{weight:wtIm},
respectively. The purpose of the definitions of the dominant weights for  $\mf b^\bullet$ is to give a simple way to define the $\Z_2$-gradations for the $\w{\mf g}^{\mf b^\bullet}$-modules, $\mf g^{\mf b^\bullet}$-modules and $\overline{\mf g}^{\mf b^\bullet}$-modules in the categories defined below. By definition we have bijective maps
\begin{equation*}
\begin{array}{l}
-:P^+\longrightarrow\bar{P}^+\\
  \ \ \ \   \ \ \  \la\ \mapsto
\ \ov\la
\end{array} \ \ \ \ \  \ \ \mbox{and}\ \ \ \ \ \
\begin{array}{l}
 \sim:P^+\longrightarrow\wt{P}^+\\
 \ \ \ \   \ \   \la\ \mapsto\ \w\la
\end{array}.
\end{equation*}

Given $n\in\N$ and $\la\in P^+$ with $\la^+_j=0$ (resp., $(\la^+)'_j=0$ and $\theta(\la^+)_j=0$) for
$j>n$, we may regard it as a weight $\la_n \in \h^*_n$ (resp., $\ov{\la}_n \in
\ov{\h}^*_n$ and $\w{\la}_n \in
\wt{\h}^*_n$) in a
natural way. The subsets of such weights $\la_n, \ov{\la}_n,
\w{\la}_n$ in $\h^*_n$, $\ov{\h}^*_n$ and $\wt{\h}^*_n$ will be
denoted by $P^+_n$, $\bar{P}^+_n$ and $\wt{P}^+_n$, respectively.

 For $n\in\N\cup\{\infty\}$ and $\mu\in \wt{P}^+_n$  (resp., $P^+_n$ and $\ov{P}^+_n$), we denote by $\w{\Delta}_n(\mu)=\mbox{Ind}^{\w{\mf g}_n}_{\w{\mf p}_n} L(\w{\mf l}_n, \mu)$
 (resp., $\Delta_n(\mu)=\mbox{Ind}^{\mf g_n}_{\mf p_n}L(\mf l_n, \mu)$ and
 $\ov{\Delta}_n(\mu)=\mbox{Ind}^{\ov{\mf g}_n}_{\ov{\mf p}_n} L(\ov{\mf l}_n, \mu)$)
 the parabolic Verma $\w{\mf g}_n$- (resp., $\mf g_n$- and $\overline{\mf g}_n$-)module,
 where $ L(\w{\mf l}_n, \mu)$ (resp., $L(\mf l_n, \mu)$ and $ L(\overline{\mf l}_n, \mu)$)
 is the irreducible highest weight $\w{\mf l}_n$- (resp., $\mf l_n$- and $\overline{\mf l}_n$-)module of highest weight
 $\mu$.
 The unique irreducible quotient $\w{\mf g}_n$- (resp., $\mf g_n$- and $\overline{\mf g}_n$-)module of
 $\w{\Delta}_n(\mu)$ (resp., $\Delta_n(\mu)$ and $\ov{\Delta}_n(\mu)$)
 is denoted by
 $\w L_n(\mu)$ (resp., $L_n(\mu)$ and $\overline L_n(\mu)$).

 For $n\in\N\cup\{\infty\}$ , let $\w{\mathcal O}_n$ (resp., $\mathcal O_n$ and $\ov{\mathcal O}_n$) be the category of $\w{\mf g}_n$- (resp., $\mf g_n$- and $\overline{\mf g}_n$-)modules $M$ such that $M$ is a semisimple $\w{\mf h}_n$- (resp., $\mf h_n$- and $\overline{\mf h}_n$-)module with finite dimensional weight subspaces $M_{\gamma}$ for $\gamma\in\w{\mf h}_n^*$ (resp., $\mf h_n^*$ and $\overline{\mf h}_n^*$), satisfying
   \begin{itemize}
     \item[ (i)] $M$ decomposes over $\w{\mf l}_n$ (resp., $\mf l_n$ and $\overline{\mf l}_n$)  as a direct sum of $\w L(\w{\mf l}_n, \mu)$ (resp., $L(\mf l_n, \mu)$ and $\overline L(\overline{\mf l}_n, \mu)$) for $\mu\in \wt{P}^+_n$  (resp., $P^+_n$ and $\bar{P}^+_n$).
     \item [(ii)] There exist finitely many weights $\lambda^1,\lambda^2,\ldots,\lambda^k\in \wt{P}^+_n$  (resp., $P^+_n$ and $\bar{P}^+_n$) (depending on $M$) such that if $\gamma$ is a weight in $M$, then $\lambda^i-\gamma\in \sum_{\alpha\in\w\Pi_n}\mathbb{Z}_+\alpha$ (resp., $\sum_{\alpha\in\Pi_n}\mathbb{Z}_+\alpha$ and $\sum_{\alpha\in\ov\Pi_n}\mathbb{Z}_+\alpha$) for some $i$.
   \end{itemize}
The morphisms in the categories are even homomorphisms of modules. Our categories $\w{\mc {O}}_n$, ${\mc {O}}_n$ and $\ov{\mc {O}}_n$ are the largest categories among the categories given in \cite[Section 6.2]{CW2}. The categories are abelian categories.
There is a natural $\Z_2$-gradation on each module in the categories with compatible action of the corresponding Lie (super)algebra defined as follows.  Set
 \begin{align}\label{weight}
 \wt{\Xi}_n^\xx&:=\begin{cases}
 \C\Lambda_0+\sum_{j=-m}^{-1}\C\epsilon_i+\sum_{r\in\w{\I}^+_0(n)}\Z_+\epsilon_r,& \mbox{if}\,\,\, \xx=\mf{a, b, c, d}; \\
 \C\Lambda_0+\sum_{j=-m}^{-1}\Z\epsilon_i+\sum_{r\in\w{\I}^+_0(n)}\Z_+\epsilon_r,& \mbox{if}\,\,\, \xx=\mf{b}^\bullet,
 \end{cases}\nonumber\\
 {\Xi}_n^\xx&:=\begin{cases}
 \C\Lambda_0+\sum_{j=-m}^{-1}\C\epsilon_i+\sum_{r\in{\I}^+_0(n)}\Z_+\epsilon_r,& \mbox{if}\,\,\, \xx=\mf{a, b, c, d};\\
 \C\Lambda_0+\sum_{j=-m}^{-1}\Z\epsilon_i+\sum_{r\in{\I}^+_0(n)}\Z_+\epsilon_r,& \mbox{if}\,\,\, \xx=\mf{b}^\bullet,
 \end{cases}\\
 \ov{\Xi}_n^\xx&:=\begin{cases}
 \C\Lambda_0+\sum_{j=-m}^{-1}\C\epsilon_i+\sum_{r\in\ov{\I}^+_0(n)}\Z_+\epsilon_r,& \mbox{if}\,\,\, \xx=\mf{a, b, c, d};\\
 \C\Lambda_0+\sum_{j=-m}^{-1}\Z\epsilon_i+\sum_{r\in\ov{\I}^+_0(n)}\Z_+\epsilon_r,& \mbox{if}\,\,\, \xx=\mf{b}^\bullet.
 \end{cases}\nonumber
 \end{align}
For $\varepsilon=0$ or $1$ and $\Theta=\wt{\Xi}$, ${\Xi}$ or $\ov{\Xi}$, we define
\begin{equation*}
{\Theta}_n^\xx(\varepsilon):=\begin{cases}
\{\mu\in {\Theta}_n^\xx\,|\, \sum_{r=1}^n\mu(E_{r+\hf})\equiv \varepsilon \,\,(\text{mod }2)\,\},& \mbox{if}\,\,\, \xx=\mf{a, b, c, d};\\
\{\mu\in {\Theta}_n^\xx\,|\, \sum_{r\in{\I}^+_m(n)}\mu(E_r)\equiv \varepsilon \,\,(\text{mod }2)\,\},& \mbox{if}\,\,\, \xx=\mf{b}^\bullet,
\end{cases}
\end{equation*}
As before, we will drop the subscript $n$ for $n=\infty$ and the superscript $\xx$. It is clear that the weights of $M$ are contained in ${\Xi}_n$ and $\ov{\Xi}_n$ for $M\in \mathcal O_n$ and $\ov{\mathcal O}_n$, respectively. By the paragraph before Theorem 6.4 in \cite{CW2}, the weights of $M$ are contained in $\w{\Xi}_n$ for  $M\in \w{\mathcal O}_n$. For $M\in \w{\mathcal O}_n$, ${M}={M}_{\ov{0}}\bigoplus {M}_{\ov{1}}$  is a $\Z_2$-graded vector space such that
 \begin{equation}\label{wt-Z2-gradation}
{M}_{\ov{0}}:=\bigoplus_{\mu\in\w{\Xi}_n(0)}{M}_{\mu}\qquad\hbox{and}\qquad
{M}_{\ov{1}}:=\bigoplus_{\mu\in\w{\Xi}_n(1)}{M}_{\mu}.
 \end{equation}
By the description of the set of positive roots for $\DG_n$ given in \cite[Section 6.1.3, 6.1.4]{CW2}, the $\Z_2$-gradation on $M$ is compatible with the action of $\DG_n$.
Similarly, we define a $\Z_2$-gradation with compatible  action of ${\mf g}_n$ and $\ov{\mf g}_n$ on ${M}$ for $M\in\mathcal O_n$ and $\ov{\mathcal O}_n$, respectively. The categories are denoted by $\w{\mc {O}}^{\ov 0}$, ${\mc {O}}^{\ov 0}$ and $\ov{\mc {O}}^{\ov 0}$ in \cite{CL} and \cite{CLW}. It is clear that $\mathcal O_n$ and $\ov{\mathcal O}_n$ are tensor categories. By \cite[Theorem 6.4]{CW2}, $\w{\mathcal O}_n$ is also tensor category. Note that the $\Z_2$-gradation on $M\otimes N$ given by \eqnref{wt-Z2-gradation} and the $\Z_2$-gradation on $M\otimes N$ induced from the $\Z_2$-gradations on $M$ and $N$ given by
\eqref{wt-Z2-gradation} are the same for $M, N\in \w{\mathcal O}_n$ (resp., $\mathcal O_n$ and $\ov{\mathcal O}_n$).

A proof of the following proposition is given in \cite[Corollary 6.8]{CW2}.
\begin{prop}
Let $n\in\N\cup\{\infty\}$.
\begin{itemize}
  \item[(i)] The modules $\w{\Delta}_n(\la)$ and $\w{L}_n(\la)$ lie in  $\w{\mathcal O}_n$ for all $\la \in \w P^+_n$,
  \item[(ii)]  The modules ${\Delta}_n(\la)$ and ${L}_n(\la)$ lie in  $\w{\mathcal O}_n$ for all $\la \in P^+_n$,
  \item[(iii)]   The modules $\ov{\Delta}_n(\la)$ and $\ov{L}_n(\la)$ lie in  $\ov{\mathcal O}_n$ for all $\la \in \ov P^+_n$.
\end{itemize}
\end{prop}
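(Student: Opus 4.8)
The plan is to reduce the statement to the parabolic Verma modules and then verify directly the defining conditions of the categories. I would treat the three cases in parallel, writing $\Delta_n(\la)=\mathrm{Ind}^{\mf g_n}_{\mf p_n}L(\mf l_n,\la)$ generically for $\w{\mf g}_n$, $\mf g_n$, $\ov{\mf g}_n$, since the arguments are formally identical. Observe first that each of the three requirements defining $\mc O_n$ -- being a semisimple $\mf h_n$-module with finite-dimensional weight spaces, the $\mf l_n$-decomposition in (i), and the weight bound in (ii) -- is inherited by subquotients. As $L_n(\la)$ is the irreducible quotient of $\Delta_n(\la)$, it therefore suffices to prove the assertions for $\Delta_n(\la)$, and the statements for $L_n(\la)$, $\w L_n(\la)$, $\ov L_n(\la)$ follow at once.

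For $\Delta_n(\la)$ I would use the Poincar\'e--Birkhoff--Witt isomorphism $\Delta_n(\la)\cong U(\mf u_n^-)\otimes L(\mf l_n,\la)$ of $\mf l_n$- and of $\mf h_n$-modules, where $\mf u_n^-$ denotes the nilradical opposite to $\mf p_n$. Condition (ii) is then immediate: the generating vector has weight $\la$ and $U(\mf u_n^-)$ lowers weights only by positive roots, so every weight $\gamma$ of $\Delta_n(\la)$ satisfies $\la-\gamma\in\sum_{\alpha\in\Pi_n}\Z_+\alpha$, and the single weight $\la\in P^+_n$ witnesses (ii). For the $\mf h_n$-semisimplicity and the finiteness of the weight multiplicities I would argue that a fixed weight $\gamma=\la-\beta$ can be reached only through monomials in $U(\mf u_n^-)$ built from the finitely many positive roots whose support is contained in the (finite) support of $\beta$; together with the fact that $L(\mf l_n,\la)$ has finite-dimensional weight spaces this gives $\dim\Delta_n(\la)_\gamma<\infty$, uniformly and including the case $n=\infty$.

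The core of the proof is condition (i), namely that $U(\mf u_n^-)\otimes L(\mf l_n,\la)$ is a semisimple $\mf l_n$-module all of whose constituents are of the form $L(\mf l_n,\mu)$ with $\mu\in P^+_n$. The structural input I would exploit is that, because $Y_n\subset\Pi(\mf T_n)$ consists only of tail roots, the Levi $\mf l_n$ is the tail subalgebra while the head $\mf k^\xx$ contributes its positive root spaces to $\mf b_n$ and its negative root spaces to $\mf u_n^-$. Consequently $\mf u_n^-$ is, as an $\mf l_n$-module, a sum of polynomial modules (copies of the natural module and trivial modules, together with their symmetric or exterior squares in types $\mf{b,c,d}$), so that $U(\mf u_n^-)\cong S(\mf u_n^-)$ decomposes as a direct sum of irreducibles $L(\mf l_n,\nu)$ with $\nu$ a partition weight (in modified Frobenius form in the $\w{\mf g}_n$ case). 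Tensoring with $L(\mf l_n,\la)$ and applying the relevant (super) Cauchy and Littlewood--Richardson rules, one stays inside a direct sum of such $L(\mf l_n,\mu)$, and inspection of the highest weights shows each $\mu$ lies in $P^+_n$ (resp.\ $\w P^+_n$, $\ov P^+_n$).

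The main obstacle is exactly this complete reducibility in (i), which is genuinely delicate in the super cases: for $\w{\mf g}_n$ the Levi $\w{\mf l}_n$ is a general linear \emph{superalgebra}, over which finite-dimensional modules need not be semisimple. The point I would have to use is that the modules actually occurring, $U(\w{\mf u}_n^-)$ and $L(\w{\mf l}_n,\la)$, are of polynomial type and hence lie in the semisimple subcategory controlled by Schur--Sergeev (super Howe) duality -- this is precisely why the admissible highest weights are parametrized by partitions. The remaining subtlety is the case $n=\infty$, where $\mf u_n^-$ is infinite-dimensional: here one must check that the above decomposition is a genuine direct sum in which each constituent occurs with finite multiplicity in any given weight, which follows from the same finite-support estimate used for the weight spaces. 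Assembling these verifications yields $\Delta_n(\la)\in\mc O_n$ (resp.\ $\w{\mc O}_n$, $\ov{\mc O}_n$), and hence the proposition; this recovers \cite[Corollary 6.8]{CW2}.
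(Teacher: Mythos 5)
Your proof is correct and is essentially the same argument as the one the paper relies on: the paper gives no proof of its own but defers to \cite[Corollary 6.8]{CW2}, and the proof there is precisely your route --- reduce to the parabolic Verma module, use the PBW factorization $U(\mf u_n^-)\otimes L(\mf l_n,\la)$ over the Levi, observe that the opposite nilradical (and hence its (super)symmetric algebra) is of polynomial type over $\mf l_n$ thanks to the choice of positive system, and invoke complete reducibility of polynomial modules (Schur--Sergeev/Berele--Regev theory in the super cases) to get condition (i), with condition (ii) and the finiteness of weight multiplicities read off from the highest-weight structure. Nothing further is needed.
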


The following lemma is easy to see by using the weights of the modules described in \eqnref{weight}.

\begin{lem}\label{weight=0}
 For $n\in\N\cup\{\infty\}$, let $M, N\in \w{\mathcal O}_n$ (resp., $\mathcal O_n$ and $\ov{\mathcal O}_n$) and let $\mu$ and $\gamma$ be weights of $M$ and $N$, respectively. We have
\[
(\mu+\gamma)(E_r)=0\quad\hbox{if and only if }\quad \mu(E_r)=0\,\, \hbox{and}\,\, \gamma(E_r)=0,
 \]
for $r\in \wt{\I}^+_0(n)$ (resp., ${\I}^+_0(n)$ and
$\ov{\I}^+_0(n)$).
\end{lem}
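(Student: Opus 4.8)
The plan is to read both sides of the equivalence directly off the explicit description of the admissible weights in \eqref{weight}, reducing the statement to the elementary fact that a sum of two non-negative integers vanishes exactly when both summands vanish.

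First I would recall, as noted just before \eqref{wt-Z2-gradation}, that every weight of a module in $\mc O_n$ (resp.\ $\ov{\mc O}_n$ and $\w{\mc O}_n$) lies in $\Xi_n$ (resp.\ $\ov\Xi_n$ and $\w\Xi_n$). Next I would use the pairing between the Cartan subalgebra and its dual fixed in Subsection~\ref{Central Ext}, namely $\epsilon_s(E_r)=\delta_{rs}$ and $\Lambda_0(E_r)=0$, to identify $\mu(E_r)$ with the coefficient of $\epsilon_r$ when $\mu$ is written in the dual basis $\{\Lambda_0,\epsilon_s\}$; the same holds for $\gamma$.

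The decisive point is the range of $r$. For $r\in{\I}^+_0(n)$ (resp.\ $\ov{\I}^+_0(n)$ and $\w{\I}^+_0(n)$) the index $r$ lies in the ``tail'' over which \eqref{weight} constrains the $\epsilon_r$-coefficient to lie in $\Z_+$, in all five types $\mf{a,b,b^\bullet,c,d}$. Hence $\mu(E_r)\in\Z_+$ and $\gamma(E_r)\in\Z_+$. Since weights add coordinatewise, $(\mu+\gamma)(E_r)=\mu(E_r)+\gamma(E_r)$ is a sum of two non-negative integers, which is $0$ if and only if each summand is $0$; this yields both implications simultaneously and uniformly across the three categories.

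The argument carries no real obstacle; the only subtlety is to verify that the hypothesis $r\in{\I}^+_0(n)$ (and its analogues) places us exactly in the $\Z_+$-part of the weight lattice in \eqref{weight}. This is essential, since the coefficients attached to the indices $-m,\dots,-1$ (and to $\Lambda_0$) range over $\mathbb{C}$ or $\Z$ and may be negative, so the equivalence would genuinely fail were $r$ allowed to be one of those indices; restricting $r$ to the positive tail is precisely what makes positivity available.
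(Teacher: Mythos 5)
Your proof is correct and follows exactly the route the paper intends: the paper simply remarks that the lemma ``is easy to see by using the weights of the modules described in \eqnref{weight}'', which is precisely your argument that for $r$ in the positive tail the $\epsilon_r$-coefficients of all weights lie in $\Z_+$, so their sum vanishes iff both vanish. Your additional remark on why $r$ must be excluded from the indices $-m,\dots,-1$ (where coefficients may be negative) is a correct and worthwhile clarification of the same argument.
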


 For $0\le k< n\leq\infty$, the truncation functor
$\mf{tr}^n_k: \mathcal {O}_n\longrightarrow\mathcal {O}_k$ is defined by
\[
  \mf{tr}^n_k(M)=\bigoplus_{\nu\in \Xi_k}M_{\nu} \qquad \hbox{for $M\in \OO_n$}
\]
and $\mf{tr}^n_k(f)$  is defined to be the restriction of $f$ to  $\mf{tr}^n_k(M)$ for $f \in {\rm Hom}_{\OO_n}(M, N)$. Similarly,
$\ov{\mf{tr}}^n_k: \overline{\mathcal {O}}_n\longrightarrow\overline{\mathcal {O}}_k$ and
 $\w{\mf{tr}}^n_k: \w{\mathcal {O}}_n\longrightarrow\w{\mathcal {O}}_k$  are defined.  It is clear that $\w{\mf{tr}}^n_k$, ${\mf{tr}}^n_k$ and $\ov{\mf{tr}}^n_k$ are exact functors. By \lemref{weight=0}, we have $\w{\mf{tr}}^n_k(M\otimes N)=\w{\mf{tr}}^n_k(M)\otimes\w{\mf{tr}}^n_k(N)$ for all $M, N\in \w{\mathcal O}_n$ and hence $\w{\mf{tr}}^n_k$ is a tensor functor. Similarly, ${\mf{tr}}^n_k$ and $\ov{\mf{tr}}^n_k$ are tensor functors.
A proof of the following proposition is given in \cite[Proposition 6.9]{CW2}.

 \begin{prop}\label{thm:truncation}
 Let $0\le k<n \leq\infty$ and $V=\w{L},\w{\Delta}$. For ${\mu}\in \w{P}^+_n$, we have
 $$
 \w{\mf{tr}}^n_k( {V}_n({\mu}))=\begin{cases}
       {V}_k({\mu}),& \mbox{if}\ \ {\mu}\in \w{P}^+_k;\\
       0,& \mbox{otherwise}.
     \end{cases}
 $$
 Similar statements hold for ${\mf{tr}}^n_k$ and
 $\ov{\mf{tr}}^n_k$.
   \end{prop}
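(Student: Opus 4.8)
The plan is to reduce the whole statement to a single positivity property of the positive system of $\wt{\mf g}_n$, after noting that $\w{\mf{tr}}^n_k$ is an exact functor $\w{\OO}_n\to\w{\OO}_k$: since every root of $\wt{\mf g}_k\subset\wt{\mf g}_n$ involves only the head indices and the tail indices $\le k$, the subspace $\w{\mf{tr}}^n_k(M)=\bigoplus_{\nu\in\wt\Xi_k}M_\nu$ is $\wt{\mf g}_k$-stable and lies in $\w{\OO}_k$, and exactness is clear because the functor merely selects weight spaces. The key fact I would establish is: because the tail indices span a subspace on which $\wt{\mf g}_n$ acts through a type $\mf a$ Levi factor (any invariant form being carried by the head indices), every positive root $\beta$ meeting some tail index $r>k$ — equivalently every $\beta\in\w\Phi^+_n\setminus\w\Phi^+_k$ — has a strictly negative coefficient on some $\epsilon_{r'}$ with $r'>k$. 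I expect this root bookkeeping to be the main obstacle, as it is where the precise Borel and the shape of the invariant form enter; I would verify it by running through the root types $\epsilon_a-\epsilon_b$, $-\epsilon_r-\epsilon_s$, $-2\epsilon_r$, $-\epsilon_r$ and the $\ov 0$-cases against the total order on $\wt{\I}_m$. Everything else is then formal.

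From this I would extract two consequences. Writing $C(\nu):=\sum_{r>k,\,r\in\hf\N}\nu(E_r)$, a variant of the same computation gives $C(\beta)\le 0$ for every positive root $\beta$, hence $C(\nu)\ge C(\mu)$ for every weight $\nu$ of a highest weight module of highest weight $\mu$ (monotonicity). On the other hand, if $\tau\in\wt\Xi_k$ and $\beta\in\w\Phi^+_n\setminus\w\Phi^+_k$, the key fact makes $\tau+\beta$ acquire a negative coefficient on some $\epsilon_{r'}$, $r'>k$, so $\tau+\beta\notin\wt\Xi_n$; since all weights of any $M\in\w{\OO}_n$ lie in $\wt\Xi_n$, a vector in $M$ of weight $\tau\in\wt\Xi_k$ that is killed by $\w{\mf n}^+_k$ is automatically killed by the root vectors of all $\beta\in\w\Phi^+_n\setminus\w\Phi^+_k$, hence is $\w{\mf n}^+_n$-singular (singularity).

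Monotonicity handles the vanishing case for both $V=\w\Delta$ and $V=\w L$: if $\mu\notin\w P^+_k$ then $\mu(E_r)>0$ for some tail $r>k$, so $C(\mu)>0$ and thus $C(\nu)>0$, i.e. $\nu\notin\wt\Xi_k$, for every weight $\nu$ of $\w\Delta_n(\mu)$ or $\w L_n(\mu)$, giving $\w{\mf{tr}}^n_k=0$. For $\mu\in\w P^+_k$ and $V=\w\Delta$ I would argue from the PBW factorization $\w\Delta_n(\mu)\cong U(\w{\mf u}^-_n)\otimes L(\w{\mf l}_n,\mu)$. The functional $C$ is additive over this tensor product and, by monotonicity together with the positivity of the tail coefficients of weights in $\w{\OO}_n$, is nonnegative on the weights of each factor; hence the $\wt\Xi_k$-part, where $C=0$, factorizes as $U(\w{\mf u}^-_k)\otimes\w{\mf{tr}}^n_k\big(L(\w{\mf l}_n,\mu)\big)$. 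The Levi factor truncates to $L(\w{\mf l}_k,\mu)$ by the irreducible case applied to the type $\mf a$ Levi (equivalently, the classical stability of Schur functors under restriction of variables), so $\w{\mf{tr}}^n_k(\w\Delta_n(\mu))=U(\w{\mf u}^-_k)\otimes L(\w{\mf l}_k,\mu)=\w\Delta_k(\mu)$.

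Finally the irreducible case with $\mu\in\w P^+_k$. Applying the exact $\w{\mf{tr}}^n_k$ to $\w\Delta_n(\mu)\twoheadrightarrow\w L_n(\mu)$ and using the previous paragraph, $\w{\mf{tr}}^n_k(\w L_n(\mu))$ is a nonzero quotient of $\w\Delta_k(\mu)$, in particular generated by the image of the highest weight vector $v_\mu$. By singularity every $\w{\mf n}^+_k$-singular vector of $\w{\mf{tr}}^n_k(\w L_n(\mu))$ is $\w{\mf n}^+_n$-singular in the irreducible $\w L_n(\mu)$, hence proportional to $v_\mu$; and a module in $\w{\OO}_k$ generated by a highest weight vector whose only singular vectors are its scalar multiples is irreducible, so $\w{\mf{tr}}^n_k(\w L_n(\mu))=\w L_k(\mu)$. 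The statements for $\mf{tr}^n_k$ on $\OO_n$ and $\ov{\mf{tr}}^n_k$ on $\OOb_n$ follow verbatim, replacing $\wt\I,\wt\Xi,\w\Phi^+$ by $\I,\Xi,\Phi^+$ and by $\ov\I,\ov\Xi,\ov\Phi^+$ throughout, the head and the positivity estimate being identical in all three settings.
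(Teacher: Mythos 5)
Your proposal is correct, but it is necessarily a different route from the paper's, because the paper gives no argument for this proposition at all: it simply cites \cite[Proposition 6.9]{CW2}. Your write-up is therefore a self-contained substitute rather than a variant of an in-paper proof. The structural fact you isolate — every $\beta\in\w\Phi^+_n\setminus\w\Phi^+_k$ has a strictly negative coefficient on some $\epsilon_{r'}$ with $r'>k$ — is exactly the ingredient this paper extracts from the root description of \cite{CW2} when it proves its own \lemref{finitesum}; indeed your ``singularity'' consequence is that lemma, which in fact holds without the $\w{\mf n}^+_k$-hypothesis (any vector of weight $\tau\in\w\Xi_k$ is killed by $E_\beta$ for $\beta\in\w\Phi^+_n\setminus\w\Phi^+_k$, since $\tau+\beta\notin\w\Xi_n$). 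On top of this you assemble a complete argument: $C$-monotonicity disposes of the vanishing case for both $\w\Delta$ and $\w L$; the PBW factorization $U(\w{\mf u}^-_n)\otimes L(\w{\mf l}_n,\mu)$, together with the observation that the roots of $\w{\mf u}^-_n$ with $C=0$ are precisely the roots of $\w{\mf u}^-_k$, handles the parabolic Verma case (the identification of the $C=0$ part with $\w\Delta_k(\mu)$ as a $\wt\G_k$-module should be said explicitly: it is a highest weight module generated by $v_\mu$, so the universal property gives a surjection $\w\Delta_k(\mu)\twoheadrightarrow\w{\mf{tr}}^n_k(\w\Delta_n(\mu))$ which the PBW weight-space count forces to be an isomorphism); and exactness plus the singular-vector criterion settles the irreducible case. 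What your route buys is transparency about which structural facts carry the statement; what the paper's citation buys is brevity.

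Two caveats. First, the parenthetical heuristic ``any invariant form being carried by the head indices'' is false: in types $\mf{b},\mf{b}^\bullet,\mf{c},\mf{d}$ the form pairs $v_r$ with $v_{\ov r}$ for \emph{all} $r$, and there are positive roots such as $-\epsilon_r-\epsilon_s$, $-\epsilon_r$, $-2\epsilon_r$ supported entirely on tail indices; the tail is a type $\mf a$ Levi only relative to the parabolic, not relative to the full root system. This does no damage, because what your proof actually uses is the case-by-case check over the root types, which is valid. Second, for $\w{\mf{tr}}^n_k$ the tail Levi is a general linear \emph{super}algebra, so ``stability of Schur functors under restriction of variables'' must be meant in the super sense: the modules $L(\w{\mf l}_n,\mu)$ occurring here are covariant (hook-Schur) modules, i.e.\ Schur functors applied to the natural supermodule, and truncation is functoriality in that super vector space. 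Alternatively, to avoid any appearance of circularity in ``the irreducible case applied to the Levi,'' note that the Levi statement also follows from your own Verma-then-irreducible scheme, since for the Levi the relevant parabolic is its Borel, whose Levi factor is the Cartan with one-dimensional irreducibles.
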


The following lemma follows from Lemma \ref{weight=0}.

\begin{lem}\label{weight=Xi}
Let $\mu, \gamma\in \w\Xi$. We have
\[
\mu+\gamma \in \Xi,\, \hbox{(resp., $\ov\Xi$)} \quad\hbox{if and only if }\quad \mu,\,\gamma \in \Xi\,\hbox{(resp., $\ov\Xi$)}.
 \]

\end{lem}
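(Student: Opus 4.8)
The plan is to reduce the statement about membership in $\Xi$ (resp.\ $\ov\Xi$) to the pointwise weight condition already isolated in \lemref{weight=0}. Recall from \eqref{weight} that the ambient lattice $\w\Xi$ consists of weights of the form $c\Lambda_0+\sum_{j=-m}^{-1}c_j\epsilon_j+\sum_{r\in\w{\I}^+_0(n)}a_r\epsilon_r$ with $a_r\in\Z_+$, and that $\Xi$ (resp.\ $\ov\Xi$) is the analogous lattice where the index $r$ is restricted to run over ${\I}^+_0(n)$ (resp.\ $\ov{\I}^+_0(n)$) rather than over all of $\w{\I}^+_0(n)$. The key observation is that for $\mu\in\w\Xi$, the condition $\mu\in\Xi$ (resp.\ $\mu\in\ov\Xi$) is exactly the statement that the $\epsilon_r$-coefficient of $\mu$ vanishes for every index $r$ lying in $\w{\I}^+_0(n)$ but not in ${\I}^+_0(n)$ (resp.\ not in $\ov{\I}^+_0(n)$), since such coefficients are precisely $\mu(E_r)$ by the dual-basis relations for the Cartan subalgebras established in \secref{Central Ext}.

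First I would make this translation explicit: let $S$ denote the (finite or infinite) set of indices $r\in\w{\I}^+_0(n)$ that do not belong to ${\I}^+_0(n)$ (for the $\Xi$ case) or to $\ov{\I}^+_0(n)$ (for the $\ov\Xi$ case). Then for any $\nu\in\w\Xi$ we have the equivalence
\[
\nu\in\Xi\,\hbox{(resp., $\ov\Xi$)}\quad\Longleftrightarrow\quad \nu(E_r)=0\,\,\hbox{for all }r\in S.
\]
This is immediate from the definitions in \eqref{weight}, because the only difference between $\w\Xi$ and $\Xi$ (resp.\ $\ov\Xi$) is whether the $\Z_+$-summands indexed by $S$ are allowed to be nonzero. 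Both $\mu,\gamma$ and the sum $\mu+\gamma$ lie in $\w\Xi$ by hypothesis and by closure of $\w\Xi$ under addition, so the criterion above applies to all three.

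Next I would apply \lemref{weight=0} pointwise. For each fixed $r\in S\subseteq\w{\I}^+_0(n)$, that lemma gives
\[
(\mu+\gamma)(E_r)=0\quad\Longleftrightarrow\quad \mu(E_r)=0\,\,\hbox{and}\,\,\gamma(E_r)=0.
\]
Quantifying over all $r\in S$ and combining with the equivalence of the previous paragraph yields at once that $\mu+\gamma\in\Xi$ (resp.\ $\ov\Xi$) if and only if both $\mu\in\Xi$ (resp.\ $\ov\Xi$) and $\gamma\in\Xi$ (resp.\ $\ov\Xi$), which is the assertion of the lemma. I do not expect any genuine obstacle here: the content is entirely in the nonnegativity of the $\Z_+$-coefficients, which forces a sum of two such coefficients to vanish only when each vanishes separately, and \lemref{weight=0} has already packaged exactly this fact in the language of the $E_r$. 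The one point requiring a little care is the bookkeeping for the exceptional type $\mf{b}^\bullet$, where the $\epsilon_j$-coefficients for $-m\le j\le -1$ range over $\Z$ rather than $\C$; but these indices lie in ${\I}^+_0(n)$ and $\ov{\I}^+_0(n)$ as well, so they never appear in $S$ and play no role in the membership criterion, leaving the argument uniform across all types $\mf{a,b,b^\bullet,c,d}$.
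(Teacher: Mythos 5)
Your proposal is correct and takes essentially the same route as the paper, which disposes of this lemma in one line by declaring that it follows from \lemref{weight=0}: in both cases the point is that membership of $\nu\in\w\Xi$ in $\Xi$ (resp., $\ov\Xi$) amounts to the vanishing of the $\Z_+$-coefficients $\nu(E_r)$ for $r\in\w{\I}^+_0\setminus{\I}^+_0$ (resp., $r\in\w{\I}^+_0\setminus\ov{\I}^+_0$), and a sum of nonnegative integers vanishes only if each summand does. The one caveat, which you yourself flag by observing that the real content is the nonnegativity of the coefficients, is that \lemref{weight=0} is stated for weights of modules in the categories rather than for arbitrary elements of $\w\Xi$, so strictly speaking you are reusing its underlying argument rather than its literal statement; this does not affect correctness.
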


Now we recall the functors $T:\w{\mc{O}}\rightarrow{\mc{O}}$ and $\ov{T}:\w{\mc{O}}\rightarrow\ov{\mc{O}}$ defined in \cite{CL} and \cite{CLW} as follows.
Given ${M}=\bigoplus_{\gamma\in\w{\h}^*}{M}_\gamma\in\w{\mc{O}}$,  we define
\begin{align}\label{def:T}
T({M}):= \bigoplus_{\gamma\in{{\h}^*}}{M}_\gamma,\qquad
\hbox{and}\qquad \ov{T}({M}):=
\bigoplus_{\gamma\in{\ov{\h}^*}}{M}_\gamma.
\end{align}
It is clear that $T({M})$ is a ${\G}$-module and $\ov{T}({M})$ is a $\ov{\G}$-module. For $M, N\in\w{\mc{O}}$ and $f\in {\rm Hom}_{\w\OO}(M,N)$, we define $T(f)$ and $\ov T({f})$ to be the restrictions of $f$ to $T({M})$ and $\ov{T}({M})$, respectively.
It is also clear  that $f( T({M}))\subseteq T({N})$ and  $f( \ov T({M}))\subseteq \ov T( {N})$, and
\[
T({ f}):\ T({M})\ \longrightarrow\ T({N})\qquad \hbox{and }\qquad \ov T({f}):\ \ov T({M})\ \longrightarrow\ \ov T({N})
\]
are $\G$-homomorphism and $\SG$-homomorphism, respectively.
The functors $T$ and $\ov T$ are exact (see, for example, \cite[Proposition 6.15]{CW2}).
By \lemref{weight=Xi}, we have ${T}(M\otimes N)={T}(M)\otimes{T}(N)$ and  $\ov{T}(M\otimes N)=\ov{T}(M)\otimes\ov{T}(N)$ for all $M, N\in \w{\mathcal O}$ and hence $T$ and $\ov{T}$ are tensor functors.

The equivalence of categories given in \propref{thm:equivalence} (iii) below is called super duality. The proof of the following proposition is given in \cite[Theorem 6.39]{CW2} and \cite[Theorem 4.6]{CLW} (cf. \cite[Proposition 6.16]{CW2}).
 \begin{prop} \label{thm:equivalence}
\begin{itemize}
\item[(i)] $T:\w{\mc{O}}\rightarrow{\mc{O}}$ is an equivalence of tensor categories.
\item[(ii)] $\ov{T}:\w{\mc{O}}\rightarrow\ov{\mc{O}}$ is an equivalence of tensor
    categories.
\item[(iii)] The tensor categories $\mc{O}$ and $\ov{\mc{O}}$ are equivalent.
\end{itemize}
Moreover, if $ V$ is a highest weight $\DG$-module of highest weight $\w\la$ for $\la\in P^+$,
then $T( V)$ and $\ov T( V)$ are highest weight $\G$--module and $\SG$-module of highest weight $\la$ and $\ov\la$,
respectively. Furthermore, we have
\begin{align*}
T\big{(}\w{\Delta}(\w\la)\big{)}
 =\Delta(\la)\quad \hbox{and}\quad &T\big{(}\w{L}(\w\la)\big{)}=L(\la);
    \\
\ov{T}\big{(}\w{\Delta}(\w\la)\big{)}
 =\ov{\Delta}(\overline\la)\quad \hbox{and}\quad
&\ov{T}\big{(}\w{L}(\w\la)\big{)}=\ov{L}(\overline\la).
\end{align*}
\end{prop}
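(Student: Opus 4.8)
The plan is to derive everything from parts (i) and (ii): once $T$ and $\ov T$ are known to be equivalences, part (iii) is immediate, since the composite $\ov T\circ T^{-1}:\mc O\to\ov{\mc O}$ is then an equivalence of tensor categories, both $T$ and $\ov T$ being tensor functors by \lemref{weight=Xi}. As $T$ and $\ov T$ are already established to be exact tensor functors, defined by retaining the weight spaces indexed by $\h^*$ (resp.\ $\ov\h^*$), the remaining work splits into the object-level identifications asserted in the ``Moreover'' clause and the categorical claim that each functor is fully faithful and essentially surjective. I will describe the argument for $T$; the one for $\ov T$ is parallel, with $\U$ replaced by $\SU$.

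First I would settle the object-level statements. If $V$ is a highest weight $\DG$-module of highest weight $\w\la$ with $\la\in P^+$, then its highest weight vector has weight $\w\la$, which by construction lies in $\h^*$, and so survives in $T(V)$. Since $\G$ is a subalgebra of $\DG$ and $T$ retains exactly the $\h^*$-weight spaces, $T(V)$ is naturally a $\G$-module; checking which positive root vectors of $\DG$ restrict to positive root vectors of $\G$ shows that $T(V)$ is generated by this vector, hence is a highest weight $\G$-module of highest weight $\la$. For the parabolic Verma modules one uses that the Levi subalgebras $\w{\mf l}$ and $\mf l$ carry the same finite-type data, assembled from the common head diagram and the surviving part of the tail, so that $L(\w{\mf l},\w\la)$ restricts under $T$ to $L(\mf l,\la)$; a comparison of weight-space dimensions through the PBW character of the induced module then yields $T(\w\Delta(\w\la))=\Delta(\la)$. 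The identity $T(\w L(\w\la))=L(\la)$ is subtler: $T(\w L(\w\la))$ is a nonzero highest weight module of highest weight $\la$ and hence surjects onto $L(\la)$, and the real content is that this surjection is an isomorphism, i.e.\ that $T$ preserves irreducibility.

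The \emph{main obstacle} is full faithfulness, together with the preservation of irreducibility just mentioned. My plan is to reduce these to the comparison of $\mathrm{Hom}$ and $\mathrm{Ext}^1$ spaces between standard and irreducible objects, which by a BGG-type reciprocity is controlled by the composition-factor multiplicities $[\w\Delta(\w\mu):\w L(\w\la)]$ and $[\Delta(\mu):L(\la)]$. Everything therefore comes down to matching these multiplicities on the super side $\w{\mc O}$ and the ordinary side $\mc O$, that is, to matching the relevant Kazhdan--Lusztig polynomials. This is the genuinely hard input, and I would obtain it via Kostant-type homology of the nilradicals: one shows that $T$ intertwines the homology groups $H_\bullet(\w\U;\w L(\w\la))$ and $H_\bullet(\U;L(\la))$ and identifies both with the combinatorial data of a single underlying (non-super) Kac--Moody algebra, so that the two families of Kazhdan--Lusztig polynomials coincide. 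The truncation functors of \propref{thm:truncation} are the device that reduces these homology computations to finite rank, after which one passes to the limit by a stabilization argument; this is where the weight of the work in \cite{CW2, CLW} is concentrated.

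Finally, essential surjectivity of $T$ follows once the irreducibles are matched: every object of $\mc O$ has a finite composition series with factors $L(\mu)=T(\w L(\w\mu))$, and since $T$ is exact, fully faithful, and compatible with extensions through the $\mathrm{Ext}^1$ comparison above, every such object lies in its essential image. The tensor structure needs no further argument, being guaranteed by $T(M\otimes N)=T(M)\otimes T(N)$ from \lemref{weight=Xi}. Assembling these steps gives (i); the identical argument gives (ii); and (iii) follows by composition, as noted at the outset.
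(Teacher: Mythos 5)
Note first what you are comparing against: the paper does not prove this proposition at all — it is quoted from the literature, with the proof attributed to \cite[Theorem 6.39]{CW2} and \cite[Theorem 4.6]{CLW} — so your sketch has to be measured against the argument in those references. Your soft steps do agree with it: (iii) follows from (i) and (ii) by composing $\ov{T}\circ T^{-1}$, the exactness and tensor property of $T$ and $\ov{T}$ come from \lemref{weight=Xi}, and $T\big(\w{\Delta}(\w\la)\big)=\Delta(\la)$ is a character/weight comparison. The genuine gap is in your plan for the hard part, which inverts the logic of the actual proof. You propose to first establish the matching of Kazhdan--Lusztig polynomials by intertwining Kostant homologies, and then deduce preservation of irreducibles and full faithfulness from that matching. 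But the a priori identification of $H_\bullet(\w{\U};\w{L}(\w\la))$ and $H_\bullet(\U;L(\la))$ with the data of a single Kac--Moody algebra is not available: computing $H_\bullet(\w{\U};\w{L}(\w\la))$ requires knowing the structure of $\w{L}(\w\la)$, which is precisely what is at stake, and this homological matching is essentially the original super duality conjecture of \cite{CWZ, CW1}. In \cite{CL, CLW, CW2} it is derived as a \emph{corollary} of $T(\w{L}(\w\la))=L(\la)$, never used as an input. What the references actually do at this point is prove $T(\w{L}(\w\la))=L(\la)$ directly, by chains of odd reflections of Borel subalgebras that track the highest weight vector and show that the maximal proper submodule of $\w{\Delta}(\w\la)$ is carried by $T$ onto that of $\Delta(\la)$; the elements $\emph{X}_\la,\emph{Y}_\la$ of \lemref{matching:weights} are exactly the output of that analysis. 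Your sketch contains no substitute for this step, and it also mislocates the difficulty: the truncation functors of \propref{thm:truncation} are used in \cite{CLW} to descend the final results to finite rank, not to prove the infinite-rank statement by stabilization.

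Second, even if the multiplicity matching were granted, the categorical conclusions would not follow from it. Equality of dimensions of $\mathrm{Hom}$ or $\mathrm{Ext}^1$ spaces does not show that the particular maps $\mathrm{Hom}_{\w{\OO}}(M,N)\to \mathrm{Hom}_{\OO}(T(M),T(N))$ induced by $T$ are isomorphisms, which is what full faithfulness asserts. The cited proof handles this functorially: every object of $\w{\OO}$ admits a presentation by (possibly infinite, locally finite) direct sums of parabolic Verma modules; $\mathrm{Hom}$ out of $\w{\Delta}(\w\la)$ is identified by Frobenius reciprocity with a space of singular vectors; these spaces are matched under $T$ via \lemref{matching:weights}; and the five lemma concludes. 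Your essential surjectivity argument has a further flaw: it rests on the claim that every object of $\OO$ has a finite composition series, which is false in these categories (only finite-dimensional weight spaces and weights bounded above are assumed, so infinite length occurs), and even for finite-length objects an induction along a composition series needs surjectivity of $T$ on $\mathrm{Ext}^1$ between a lifted object and a simple — not a formal consequence of full faithfulness, and nowhere established in your sketch. The references avoid this entirely: given $M\in\OO$, one lifts a two-step presentation $\bigoplus_i\Delta(\mu^i)\to\bigoplus_j\Delta(\la^j)\to M\to 0$ through full faithfulness on parabolic Vermas and takes the cokernel in $\w{\OO}$, using exactness of $T$.
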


\section{Super Knizhnik-Zamolodchikov equations}\label{RKZ}
In this section, we first define the Casimir elements (operators) and Casimir symmetric tensors for the Lie (super)algebras $\DG_n$, $\G_n$ and $\SG_n$, and we introduce the (super) Knizhnik-Zamolodchikov equations associated to the Lie superalgebras $\DG_n$, $\G_n$ and $\SG_n$ of finite and infinite ranks. We show that the solutions of the (super) KZ equations are stable under the truncation functors $\w{\mf{tr}}^n_k$, ${\mf{tr}}^n_k$ and $\ov{\mf{tr}}^n_k$. We obtain a bijection between the sets of singular solutions of (super) KZ equations associated to the Lie (super) algebras $\DG_\infty$, $\G_\infty$ and $\SG_\infty$. Finally,  we have a bijection between the sets of solutions of (super) KZ equations
associated to $\w{\mc {G}}_n$ (resp., $\mc {G}_n$ and $\ov{\mc {G}}_n$) and associated to  $\DG_n$ (resp., $\G_n$ and $\SG_n$)
for $n\in\N$.

\subsection{Casimir symmetric tensors} First, we recall the Casimir elements of $\w{\mc {G}}_n$, $\mc {G}_n$ and $\ov{\mc {G}}_n$.
For $n\in \N\cup\{\infty\}$, we define $\w{\varrho}_n^{\xx}$, $\varrho_n^{\xx}$ and $\ov{\varrho}_n^{\xx}$ by
 \begin{align*}
 &\qquad \qquad \qquad \qquad\w{\varrho}_n^{\xx}:=
     \sum_{j\in\w{\I}^+_m(n)}({\bf r}^\xx+\delta_j-(-1)^{2j}j(1-\delta_j))E_j,\\
   & \varrho_n^{\xx}:=
     \sum_{j\in{\I}^+_m(n)}({\bf r}^\xx-j+\delta_j)E_j\,\,
     \quad\hbox{and}\quad
     \ov{\varrho}_n^{\xx}:=
     \sum_{j\in\ov{\I}^+_m(n)}({\bf r}^\xx-(-1)^{2j}(j+\frac{1}{2}\delta_j))E_j
 \end{align*}
for $\xx=\mf{a, b,b^\bullet,c,d}$, where
 \begin{equation*}
   {\bf r}^\xx:=\left\{ \begin{array}{ll}
     -1, & \mbox{for}\ \xx=\mathfrak{a};\\
     -m-\frac{1}{2}, & \mbox{for}\ \xx=\mathfrak{b,b^\bullet};\\
      -m-1, & \mbox{for}\ \xx=\mathfrak{c};\\
      -m, & \mbox{for}\ \xx=\mathfrak{d}
  \end{array} \right.
 \end{equation*}
 and
 \begin{equation*}\label{def:delta}
   \delta_i:=\begin{cases}
     1, & \mbox{for}\ i\ge\hf;\\
    0, & \mbox{for}\ i\in\{-m,\ldots,-1\}.
   \end{cases}
 \end{equation*}
 For $n\in \N$, we have $\w{\varrho}_n^{\xx}\in \w{\mathcal {G}}_n^\xx$, $\varrho_n^{\xx}\in \mathcal {G}_n^\xx$ and $\ov{\varrho}_n^{\xx}\in \ov{\mathcal {G}}_n^\xx$. We also regard $\w{\varrho}_n^{\xx}\in \DG_n^\xx$, $\varrho_n^{\xx}\in \G_n^\xx$ and $\ov{\varrho}_n^{\xx}\in \SG_n^\xx$ by identifying $\w{\mathcal {G}}_n^\xx$ (resp., $\mathcal {G}_n^\xx$ and $\ov{\mathcal {G}}_n^\xx$) as a subspace of the vector space of $\DG_n^\xx$ (resp., $\G_n^\xx$ and $\SG_n^\xx$).
For $n=\infty$, there are only finitely many terms of $\w{\varrho}_n^{\xx}$ (resp., $\varrho_n^{\xx}$ and $\ov{\varrho}_n^{\xx}$) with nonzero actions on each given element in $M$ for $M \in \w{\mathcal {O}}$ (resp., ${\mathcal {O}}$ and $\ov{\mathcal {O}}$) and hence $\w{\varrho}_n^{\xx}$ (resp., $\varrho_n^{\xx}$ and $\ov{\varrho}_n^{\xx}$) is a well defined operator on $M$. As before, we will drop the superscript $\xx$.

For $n\in \N$, the bilinear form $(\cdot, \cdot)$ on $\gl(\wt{V}_m(n))$ defined by
 \begin{equation*}\label{str}
  ( A,B):=  \mbox{Str}(AB)\quad \hbox{for $A,B\in \gl(\wt{V}_m(n))$}
\end{equation*}
is nondegenerate invariant even supersymmetric.
The restriction of the bilinear form $(\cdot, \cdot)$ on $\w{\mathcal{G}}_n$
(resp., ${\mc G}_n$ and $\ov{\mc{G}}_n$), which is also denoted by $(\cdot, \cdot)$,
is also a nondegenerate invariant even supersymmetric bilinear form. To simplify the notations %for the rest of the article,
we assume that
 $$
 \hbox{$\langle\cdot, \cdot\rangle:=(\cdot, \cdot)\,\,\,$ on $\w{\mathcal{G}}^{\mf{a}}_n$, ${\mc G}^{\mf{a}}_n$ and $\ov{\mc{G}}^{\mf{a}}_n$}, \qquad \hbox{and}\qquad  \hbox{$\langle\cdot, \cdot\rangle:=\hf(\cdot, \cdot)\,\,\,$ in other cases.}
  $$
  For each positive root $\beta$ of $\w{\mathcal{G}}_n$
(resp., ${\mc G}_n$ and $\ov{\mc{G}}_n$), we fix two root vectors $E_\beta$ and $E^\beta$ of weights $\beta$ and $-\beta$, respectively, satisfying
 \begin{equation*}
   \langle E_{\beta},E^{\beta}\rangle=1.
 \end{equation*}
Note that $\langle E_i,E_i\rangle=(-1)^{2i}$ for any $i\in\w\I^+_m(n)$.
Then the elements
\begin{align*}
 \mathring{\w{\bf c}}_n&:=2\sum_{\beta\in\w{\Phi}^+_n}E^\beta E_{\beta} +\sum_{j\in\w{\I}^+_m(n)}(-1)^{2j}E_j^2
  +2\w{\varrho}_n,\\
 \mathring{\bf c}_n&:=2\sum_{\beta\in\Phi^+_n}E^\beta E_{\beta} +\sum_{j\in{\I}^+_m(n) }E_j^2
  +2\varrho_n,\\
  \mathring{\ov{\bf c}}_n&:=2\sum_{\beta\in\ov{\Phi}^+_n}E^\beta E_{\beta} +\sum_{j\in\ov{\I}^+_m(n)}(-1)^{2j}E_j^2
  +2\ov{\varrho}_n,
 \end{align*}
which are called the \emph{Casimir elements},
lie in the center of the respective universal enveloping algebras (cf. \cite[Exercise 2.3]{CW2}).
Note that our choice of the pairs of $E_\beta$ and $E^\beta$ are slightly different from the choice of that in \cite{CW2}
and our $\mathring{\w{\bf c}}_n$, and $\mathring{\bf c}_n$ and $\mathring{\ov{\bf c}}_n$
can be obtained from \cite{CW2} by using the fact that
$\langle E_\beta,E^\beta\rangle=(-1)^{|E_\beta|}\langle E^\beta,E_\beta\rangle$. By \eqnref{iso},
the elements
\begin{align*}
 \iota(\mathring{\w{\bf c}}_n)&=2\sum_{\beta\in\w{\Phi}^+_n}E^\beta E_{\beta} +\sum_{j\in\w{\I}^+_m(n)}(-1)^{2j}(E_j-(-1)^{2j}\delta_j K)^2\\
  &\quad +2\sum_{j\in\w{\I}^+_m(n)}({\bf r}+\delta_j-(-1)^{2j}j(1-\delta_j))(E_j-(-1)^{2j}\delta_j K),\\
 \iota(\mathring{\bf c}_n)&=2\sum_{\beta\in\Phi^+_n}E^\beta E_{\beta} +\sum_{j\in{\I}^+_m(n)}(E_j-\delta_j K)^2
  +2\sum_{j\in{\I}^+_m(n)}({\bf r}-j+\delta_j)(E_j-\delta_j K),\\
 \iota( \mathring{\ov{\bf c}}_n)&=2\sum_{\beta\in\ov{\Phi}^+_n}E^\beta E_{\beta} +\sum_{j\in\ov{\I}^+_m(n)}(-1)^{2j}(E_j+\delta_j K)^2
  +2\sum_{j\in\ov{\I}^+_m(n)}({\bf r}-(-1)^{2j}(j+\frac{1}{2}\delta_j))(E_j+\delta_j K)
 \end{align*}
lie in the centers of $U(\DG_n)$, $U(\G_n)$ and $U(\SG_n)$, respectively. Hereafter $U(\mf k)$ stands for the universal enveloping algebra of the Lie (super)algebra $\mf k$.
By removing the terms of $K$ and $K^2$ in the equations above, the elements
 \begin{align*}
 {\w{\bf c}}_n&:=2\sum_{\beta\in\w{\Phi}^+_n}E^\beta E_{\beta} +\sum_{j\in\w{\I}^+_m(n)}(-1)^{2j}(E_j^2-2(-1)^{2j}\delta_j KE_j)
  +2\w{\varrho}_n,\\
 {\bf c}_n&:=2\sum_{\beta\in\Phi^+_n}E^\beta E_{\beta} +\sum_{j\in{\I}^+_m(n)}(E_j^2-2\delta_j KE_j)
  +2{\varrho}_n,\\
 {\ov{\bf c}}_n&:=2\sum_{\beta\in\ov{\Phi}^+_n}E^\beta E_{\beta} +\sum_{j\in\ov{\I}^+_m(n)}(-1)^{2j}(E_j^2+2\delta_j KE_j)
  +2\ov{\varrho}_n,
 \end{align*}
which are also called the \emph{Casimir elements},  also lie in the centers of $U(\DG_n)$, $U(\G_n)$ and $U(\SG_n)$, respectively.

Taking $n\rightarrow \infty$, we define the Casimir operators
 \begin{align*}
 {\w{\bf c}}&:={\w{\bf c}}_\infty :=2\sum_{\beta\in\w{\Phi}^+}E^\beta E_{\beta} +\sum_{j\in\w{\I}^+_m}(-1)^{2j}(E_j^2-2(-1)^{2j}\delta_j KE_j)
  +2\w{\varrho}_\infty,\\
  {\bf c}&:={\bf c}_\infty:=2\sum_{\beta\in\Phi^+}E^\beta E_{\beta} +\sum_{j\in{\I}^+_m}(E_j^2-2\delta_j KE_j)
  +2{\varrho}_\infty,\\
  {\ov{\bf c}}&:={\ov{\bf c}}_\infty:=2\sum_{\beta\in\ov{\Phi}^+}E^\beta E_{\beta} +\sum_{j\in\ov{\I}^+_m}(-1)^{2j}(E_j^2+2\delta_j KE_j)
  +2\ov{\varrho}_\infty.
 \end{align*}

 Similar to the Casimir operators of Kac-Moody algebras, the Casimir operators for $n=\infty$ are well defined operators on modules in the respected categories but they are not elements in the corresponding universal enveloping algebras.

 \begin{lem}\label{finitesum} Let $0\le k<n \leq\infty$. If $v$ is a weight vector of weight $\mu$ in $M \in \w{\mathcal {O}}_n$ (resp., ${\mathcal {O}}_n$ and $\ov{\mathcal {O}}_n$) such that $\mu\in \w\Xi_k$ (resp., $\Xi_k$ and $\ov\Xi_k$), then $E_\beta v= 0$ for all $\beta\in \w{\Phi}_n^{+}\backslash\w{\Phi}_k^{+}$ (resp., ${\Phi}_n^{+}\backslash{\Phi}_k^{+}$ and $\ov{\Phi}_n^{+}\backslash\ov{\Phi}_k^{+}$) and $E_i v= 0$ for $i>k$.

  In particular, for each $v\in M$ and $M \in \w{\mathcal {O}}$, ${\mathcal {O}}$ or $\ov{\mathcal {O}}$, there are only finitely many $E_\beta$ and $E_i$ such that $E_\beta v\not= 0$ and $E_i v\not= 0$.
 \end{lem}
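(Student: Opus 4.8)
The plan is to translate the two assertions into statements about the $\epsilon$-labels $\nu(E_r)$, $r\in\w\I^+_0(n)$, of the weights $\nu$ of $M$, exploiting the fact recalled above that every weight of $M\in\w{\mathcal O}_n$ lies in $\w\Xi_n$, so that $\nu(E_r)\in\Z_+$ for all $r\in\w\I^+_0(n)$. The assertion $E_iv=0$ for $i>k$ is immediate: each such $E_i$ is a Cartan element, so $E_iv=\mu(E_i)v$, and $\mu(E_i)=0$ for $i>k$ because $\mu\in\w\Xi_k$ has vanishing $\epsilon_i$-label there. The real content is the vanishing of $E_\beta v$.

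For that I would argue by contradiction at the level of weights. Assume $\beta\in\w\Phi^+_n\setminus\w\Phi^+_k$ and $E_\beta v\neq 0$; then $\mu+\beta$ is a weight of $M$, so $\mu+\beta\in\w\Xi_n$ and in particular $(\mu+\beta)(E_r)\in\Z_+$ for every $r\in\w\I^+_0(n)$. Let $r_0$ be the largest index of $\w\I^+_m(n)$ occurring in $\beta$; since $\beta\notin\w\Phi^+_k$ not all indices of $\beta$ lie in $\w\I^+_m(k)$, whence $r_0>k$. The structural input is the explicit description of the positive roots of $\DG_n$ from \cite[Sections 6.1.3, 6.1.4]{CW2}: in the present conventions each positive root is, in $\epsilon$-coordinates, either a difference $\epsilon_a-\epsilon_b$ with $a<b$, or a sum-type root $-\epsilon_a-\epsilon_b$ $(a\le b)$, $-2\epsilon_a$ or $-\epsilon_a$, and in every case the coefficient of the largest index appearing is strictly negative. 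Thus $\beta(E_{r_0})<0$, while $\mu(E_{r_0})=0$ because $r_0>k$ and $\mu\in\w\Xi_k$; hence $(\mu+\beta)(E_{r_0})=\beta(E_{r_0})<0$, contradicting $(\mu+\beta)(E_{r_0})\in\Z_+$. Therefore $E_\beta v=0$. The arguments for $\mathcal O_n$ and $\ov{\mathcal O}_n$ are identical after replacing $\w\Xi,\w\Phi,\w\I$ by $\Xi,\Phi,\I$ and by $\ov\Xi,\ov\Phi,\ov\I$ respectively.

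For the ``in particular'' clause I would use that every $v\in M$ is a finite sum of weight vectors and that each weight $\mu$ of $M$ has finite support: by condition (ii) of the category it is obtained from one of finitely many dominant weights by subtracting a finite $\Z_+$-combination of simple roots, and the dominant weights have only finitely many nonzero $\epsilon$-labels. Hence $\mu\in\w\Xi_k$ for some finite $k$; taking $k$ maximal over the finitely many weights present in $v$, the first part yields $E_\beta v=0$ for all $\beta\notin\w\Phi^+_k$ and $E_iv=0$ for all $i>k$, and since $\w\Phi^+_k$ and $\w\I^+_m(k)$ are finite only finitely many $E_\beta$ and $E_i$ act nontrivially. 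I expect the main difficulty to be the structural claim that the largest index occurring in any positive root carries a negative coefficient; this is precisely where the chosen Borel — with the long or fork simple root placed at the $\epsilon_{-m}$ end of the head, so that the sum-type positive roots appear as $-\epsilon_a-\epsilon_b$ rather than $\epsilon_a+\epsilon_b$ — is used, and it must be checked uniformly across the types $\mf a,\mf b,\mf b^\bullet,\mf c,\mf d$ and over both parities of the indices.
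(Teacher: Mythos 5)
Your proof is correct and takes essentially the same route as the paper's: both arguments reduce the statement to the fact that all weights of modules in $\w{\mathcal O}_n$ (resp.\ ${\mathcal O}_n$, $\ov{\mathcal O}_n$) lie in $\w\Xi_n$ (resp.\ $\Xi_n$, $\ov\Xi_n$), and then invoke the description of the positive roots from \cite[Sections 6.1.3, 6.1.4]{CW2} to produce an index $i>k$ with $\beta(E_i)<0$, so that $\mu+\beta$ cannot be a weight of $M$. Your explicit identification of that index as the largest one occurring in $\beta$, and your spelled-out treatment of the ``in particular'' clause via the finite support of weights, are just more detailed versions of what the paper leaves implicit.
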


 \begin{proof}  We will prove the case $M \in \w{\mathcal {O}}_n$ only. The proofs of the remaining cases are similar. Let $M \in \w{\mathcal {O}}_n$ and $v\in M$ be a weight vector of weight $\mu$ such that $\mu\in \w\Xi_k$. Therefore $\mu(E_i)=0$ for $i> k$, and hence $E_i v=0$ for $i> k$. For $\beta\in \w{\Phi}_n^{+}\backslash\w{\Phi}_k^{+}$, we have $\beta(E_i)<0$ for some $i> k$ by the description of the positive roots of $\DG$ given in \cite[Sections 6.1.3, 6.1.4]{CW2}. Therefore the weight of $E_\beta v$ does not lie in $\wt{\Xi}$. Hence $E_\beta v=0$.
  \end{proof}

 By \lemref{finitesum} and ${\w{\bf c}}_n$ (resp., ${\bf c}_n$ and ${\ov{\bf c}}_n$) commuting with elements in $\DG_n$ (resp., $\G_n$ and $\SG_n$) for $n\in \N$, ${\w{\bf c}}$ (resp., ${\bf c}$ and ${\ov{\bf c}}$) is a well defined operator on each $M \in \w{\mathcal {O}}$ (resp., ${\mathcal {O}}$ and $\ov{\mathcal {O}}$) and commutes with the action of $\DG$ (resp., $\G$ and $\SG$). They are called the \emph{Casimir operators}.

 For $n\in \N\cup\{\infty\}$, we extend the symmetric bilinear form given in \eqref{b.f. on root system} to a symmetric bilinear form $(\cdot,\cdot)$ on
$\w{\mathfrak{h}}^*_n$ (resp., ${\mathfrak{h}}^*_n$ and $\ov{\mathfrak{h}}^*_n$) defined
by
\begin{equation*}\label{bilinear form}
   (\Lambda_0,\Lambda_0)=0,
   \ \ \ \ (\epsilon_i,\epsilon_j)=(-1)^{2j}\delta_{ij} \ \ \ \ \mbox{and}\ \ \ \  (\Lambda_0,\epsilon_i)=-\delta_i,
 \end{equation*}
 where $i,j\in\w\I^+_m(n)$ (resp., $\I^+_m(n)$ and $\ov\I^+_m(n)$).

For $n\in\N$, it is easy to see the bilinear forms $(\cdot,\cdot)$ on
${\mathfrak{h}}^*_n$ and $\ov{\mathfrak{h}}^*_n$ are non-degenerate.
Therefore there exist unique elements $\rho_n\in{\mf h}^*_n$ and $\ov\rho_n\in\ov{\mf h}^*_n$ satisfying
  \begin{equation*}\label{rho}
(\rho_n,\la)=\la(\varrho_n),\quad \hbox{for $\la\in{\mf h}^*_n$ };\qquad\hbox{and} \qquad
(\ov\rho_n,\la)=\la(\ov\varrho_n),\quad \hbox{for $\la\in\ov{\mf h}^*_n$.}
  \end{equation*}
The bilinear form $(\cdot,\cdot)$ is degenerate on $\w{\mathfrak{h}}^*_n$ for $n\in\N$.
We define $\w\rho_n:=
\sum_{j\in\w{\I}^+_m(n)}({\bf r}^\xx+\delta_j-(-1)^{2j}j(1-\delta_j))\epsilon_j\in\w{\mf h}^*_n$.
Then we have
   \begin{equation*}\label{w-rho}
     (\w\rho_n,\la)=\la(\w\varrho_n),\qquad \hbox{for any $\la\in\w{\mf h}^*_n$.}
   \end{equation*}

  For $n=\infty$, let
 $\w\rho:=\w\rho_\infty$ (resp., $\rho:=\rho_\infty$ and $\ov\rho:=\ov\rho_\infty$)
be an element in the dual space of $\w{\mf h}$ (resp., ${\mf h}$ and $\ov{\mf h}$) defined by
\begin{eqnarray*}
 && \w\rho(E_j):=(-1)^{2j}({\bf r}+\delta_j-(-1)^{2j}j(1-\delta_j)), \ \ \ \ \ \ \ \ \w\rho(K)=0, \\
 && \rho(E_j):={\bf r}+\delta_j-j,\ \ \ \ \ \ \ \ \ \ \ \ \ \ \ \ \ \ \ \ \ \ \ \ \ \ \ \ \ \ \ \ \ \ \ \ \ \ \ \ \ \ \ \,
\rho(K)=0,\\
&& \ov\rho(E_j):=(-1)^{2j}({\bf r}-(-1)^{2j}(j+\frac{1}{2}\delta_j),\ \ \ \ \ \ \ \ \ \ \ \ \ \ \ \, \,  \ov\rho(K)=0.
\end{eqnarray*}
We extend the symmetric bilinear form $(\,,\,)$ on $\w{\mf h}^*$ (resp., ${\mf h}^*$ and $\ov{\mf h}^*$) to the bilinear form $(\,,\,)$ on
$(\w{\mf h}^*\oplus\C\w\rho)\times\w{\mf h}^*$
(resp., $({\mf h}^*\oplus\C\rho)\times{\mf h}^*$ and $(\ov{\mf h}^*\oplus\C\ov\rho)\times\ov{\mf h}^*$),
which is also denoted by $(\,,\,)$, defined by
\begin{align*}
(\w\rho,\epsilon_j)=(-1)^{2j}\w\varrho(E_j)\ \ \ \ \ \ \ \ \mbox{and}\ \ \ \ (\w\rho,\Lambda_0)=0,\qquad \hbox{for $j\in\w\I^+_m$};\\
(\rho,\epsilon_j)=\varrho(E_j)\ \ \ \ \ \ \ \ \mbox{and}\ \ \ \ (\rho,\Lambda_0)=0,\qquad \hbox{for $j\in\I^+_m$};\\
(\ov\rho,\epsilon_j)=(-1)^{2j}\ov\varrho(E_j)\ \ \ \ \ \ \ \ \mbox{and}\ \ \ \ (\ov\rho,\Lambda_0)=0,\qquad \hbox{for $j\in\ov\I^+_m$}.
\end{align*}

For $n\in\N\cup\{\infty\}$ and a weight vector $v$ of weight $\mu$ in $M\in \w{\mathcal O}_n$ (resp., ${\mathcal O}_n$ and $\ov{\mathcal O}_n$), we have
 \begin{equation}\label{rho-v}
 \w\varrho_n v=(\w\rho_n,\mu)v, \qquad \varrho_n v=(\rho_n,\mu)v
 \qquad \hbox{and} \qquad \ov\varrho_n v=(\ov\rho_n,\mu)v.
 \end{equation}

\begin{lem}\label{slem:Gamma}
Let $n\in\N\cup\{\infty\}$. If $v$ is a vector in a highest weight module $V\in \w{\mathcal O}_n$  (resp., ${\mathcal O}_n$ and $\ov{\mathcal O}_n$) of highest weight $\la$,
 then
 \begin{equation*}\label{sGamma2}
  \w{\bf c}_nv=(\la+2\w\rho_n,\la) v\quad
  (\mbox{resp.,}\quad {\bf c}_nv=(\la+2\rho_n,\la) v
\quad\ \mbox{and}
\quad \ov{\bf c}_nv=(\la+2\ov\rho_n,\la) v\,\,).
 \end{equation*}
\end{lem}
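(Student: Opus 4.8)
The plan is to reduce the statement to the action on a single highest weight vector and then to a short computation that matches the Cartan part of the Casimir element against the bilinear form on $\wt{\h}^*_n$. I will carry out the argument for $\DG_n$; the cases of $\G_n$ and $\SG_n$ are entirely analogous, with the signs $(-1)^{2j}$ and the shift $\delta_j K$ replaced by their respective analogues in the formulas for ${\bf c}_n$ and $\ov{\bf c}_n$.

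First I would record that $\w{\bf c}_n$ acts as a scalar on $V$. For finite $n$ this is because $\w{\bf c}_n$ lies in the center of $U(\DG_n)$, hence commutes with the $\DG_n$-action and preserves weight spaces; since $V$ is a highest weight module, $V=U(\DG_n)v_\la$ with the $\la$-weight space equal to $\C v_\la$, so $\w{\bf c}_n v_\la=c\,v_\la$ for a scalar $c$, and then $\w{\bf c}_n(uv_\la)=u\,\w{\bf c}_n v_\la=c\,uv_\la$ for all $u\in U(\DG_n)$. For $n=\infty$ the operator $\w{\bf c}$ is no longer an element of $U(\DG)$, but by the discussion following \lemref{finitesum} it is a well-defined operator on each module in $\w{\mc O}$ commuting with the $\DG$-action, so the identical argument applies. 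Thus it suffices to evaluate $\w{\bf c}_n v_\la$.

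Next I would discard the off-diagonal part. Since $v_\la$ is a highest weight vector, $E_\beta v_\la=0$ for every $\beta\in\w{\Phi}^+_n$, so the summand $2\sum_{\beta\in\w{\Phi}^+_n}E^\beta E_{\beta}$ annihilates $v_\la$ (for $n=\infty$, \lemref{finitesum} already ensures only finitely many terms could be nonzero, and all of them vanish here). What survives is the Cartan part together with $2\w\varrho_n$. Writing $\la=\sum_j\la_j\epsilon_j+d\La_0$ in the dual basis, so that $\la_j=\la(E_j)$ and $d=\la(K)$, and using $E_j v_\la=\la_j v_\la$, $Kv_\la=d\,v_\la$, the Cartan part acts on $v_\la$ by the scalar
\[
\sum_{j\in\w{\I}^+_m(n)}(-1)^{2j}\big(\la_j^2-2(-1)^{2j}\delta_j\,d\,\la_j\big)
=\sum_{j}(-1)^{2j}\la_j^2-2d\sum_{j}\delta_j\la_j,
\]
using $\big((-1)^{2j}\big)^2=1$. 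On the other hand, the explicit bilinear form on $\wt{\h}^*_n$ with $(\epsilon_i,\epsilon_j)=(-1)^{2j}\delta_{ij}$, $(\La_0,\epsilon_i)=-\delta_i$ and $(\La_0,\La_0)=0$ gives exactly $(\la,\la)=\sum_j(-1)^{2j}\la_j^2-2d\sum_j\delta_j\la_j$. Finally $2\w\varrho_n$ acts by $2(\w\rho_n,\la)$ by \eqnref{rho-v}, and combining yields $\w{\bf c}_n v_\la=\big((\la,\la)+2(\w\rho_n,\la)\big)v_\la=(\la+2\w\rho_n,\la)v_\la$, which propagates to all of $V$ by the first step.

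I expect the only delicate point to be the bookkeeping of the sign factors $(-1)^{2j}$ and of the cross term $\delta_j K E_j$: it is precisely these that must conspire to reproduce the off-diagonal pairing $(\La_0,\epsilon_i)=-\delta_i$ inside $(\la,\la)$, so that the ``corrected'' Casimir $\w{\bf c}_n$ (obtained by deleting the $K$ and $K^2$ contributions from $\iota(\mathring{\w{\bf c}}_n)$) reproduces the classical eigenvalue formula. Everything else is routine once the reduction to the highest weight vector is in place.
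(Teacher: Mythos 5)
Your proposal is correct and follows essentially the same route as the paper: reduce to the highest weight vector (using that the Casimir commutes with the algebra action, with the $n=\infty$ case handled via the discussion after \lemref{finitesum}), kill the positive-root part, and evaluate the Cartan part plus $2\w\varrho_n$ against the bilinear form via \eqnref{rho-v}. The only difference is that you spell out the reduction step and the identification of the Cartan scalar with $(\la,\la)$ explicitly, which the paper leaves implicit.
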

\begin{proof}
  We will prove  $ \w{\bf c}_nv=(\la+2\w\rho_n,\la) v$. The others are similar. It is sufficient to assume that $v$ is a highest weight vector in $ V$.
  Let $\la= d\La_0+ \sum_{j\in\w\I^+_m(n)}\la_{j}\epsilon_{j}
 \in \w P^+_n$. By \eqnref{rho-v}, we have
  \begin{eqnarray*}
  \w{\bf c}_n v&=&\big{(}\sum_{j\in\w{\I}^+_m(n)}(-1)^{2j}(E_j^2-2(-1)^{2j}\delta_j KE_j)
  +2\w{\varrho}_n\big{)}v\\
 &=&\sum_{j\in\w{\I}^+_m(n)}((-1)^{2j}\la^2_j -2\delta_j d\la_j)v+2\w{\varrho}_n v\\
 &=&(\la,\la)v+ (2\w\rho_n,\la)v.
 \end{eqnarray*}
 \end{proof}

\begin{prop}\label{thm:scalar} For $\la\in P^+$, we have
\begin{equation*}
 (\w{\lambda}+2\w{\rho},\w{\lambda})= (\lambda+2\rho,\lambda)=(\overline{\lambda}+2\ov{\rho},\overline{\lambda}).
\end{equation*}
\end{prop}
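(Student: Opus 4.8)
The plan is to evaluate each of the three scalars explicitly and to recognize that all of them equal one and the same quantity attached to the partition $\lambda^+$. Throughout write $\mu:=\lambda^+$ and let $c(\mu):=\sum_{(i,j)\in\mu}(j-i)$ be the sum of the contents of the boxes of $\mu$.

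First I would split each of the weights $\lambda$, $\overline{\lambda}$, $\widetilde{\lambda}$ in \eqnref{weight:Im}--\eqnref{weight:wtIm} into its ``finite part'' (the coordinates $\epsilon_{-m},\dots,\epsilon_{-1}$ together with $d\Lambda_0$) and its ``partition part'' (the positive coordinates). Because the $\lambda_i$ with $-m\le i\le -1$ and the scalar $d$ are common to the three weights, because these finite coordinates are even with $\delta_i=0$, and because $\sum_r\theta(\mu)_r=\sum_k\mu'_k=\sum_j\mu_j=|\mu|$, a direct expansion using $(\epsilon_i,\epsilon_j)=(-1)^{2j}\delta_{ij}$, $(\Lambda_0,\epsilon_i)=-\delta_i$ and $(\Lambda_0,\Lambda_0)=0$, together with the observation (immediate from the definitions of $\varrho$, $\overline{\varrho}$, $\widetilde{\varrho}$) that all three $\varrho$'s act by the common scalar ${\bf r}-i$ on each finite coordinate, shows that the finite-part contribution to $(\lambda+2\rho,\lambda)$, to $(\overline{\lambda}+2\overline{\rho},\overline{\lambda})$ and to $(\widetilde{\lambda}+2\widetilde{\rho},\widetilde{\lambda})$ is the common number $\sum_{i=-m}^{-1}\lambda_i^2+2\sum_{i=-m}^{-1}\lambda_i({\bf r}-i)-2d|\mu|$. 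Hence it suffices to prove that the three partition parts agree.

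I would then show that each partition part equals $2c(\mu)+(2{\bf r}+1)|\mu|$. For $\lambda$ the partition part is $\sum_j\mu_j^2+2\sum_j\mu_j({\bf r}+1-j)$; writing $\sum_j\mu_j^2-2\sum_j j\mu_j=\sum_j\mu_j(\mu_j-2j+1)-|\mu|$ and invoking the elementary identity $\sum_j\mu_j(\mu_j-2j+1)=2c(\mu)$ (obtained from $2\sum_j\binom{\mu_j+1}{2}-2\sum_j j\mu_j=2\sum_l l\mu'_l-2\sum_i i\mu_i=2\sum_{(i,j)\in\mu}(j-i)$) rewrites it as $2c(\mu)+(2{\bf r}+1)|\mu|$. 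For $\overline{\lambda}$, whose partition coordinates are odd, the partition part is $-\sum_k(\mu'_k)^2+2\sum_k\mu'_k({\bf r}+k)$; applying the same identity to $\mu'$ and using $c(\mu')=-c(\mu)$ gives once more $2c(\mu)+(2{\bf r}+1)|\mu|$.

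The remaining case $\widetilde{\lambda}$ is the main obstacle, since the modified Frobenius coordinates carry the truncations $\theta(\mu)_i=\max\{\mu_i-i,0\}$ and $\theta(\mu)_{i-1/2}=\max\{\mu'_i-i+1,0\}$. Here the pertinent $\varrho$-coefficient is the constant ${\bf r}+1$ on every positive coordinate, so the partition part is $\sum_i\theta(\mu)_i^2-\sum_i\theta(\mu)_{i-1/2}^2+2({\bf r}+1)|\mu|$. I would dispose of the truncations through the Durfee square: with $d$ its size and $a_i=\mu_i-i$, $b_i=\mu'_i-i$ ($1\le i\le d$) the Frobenius coordinates, exactly the $d$ principal-hook terms survive, so the alternating sum of squares collapses to $\sum_{i=1}^d a_i^2-\sum_{i=1}^d(b_i+1)^2$. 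Using the principal-hook form of the content sum $c(\mu)=\tfrac12\sum_{i=1}^d\big(a_i(a_i+1)-b_i(b_i+1)\big)$ and $|\mu|=\sum_{i=1}^d(a_i+b_i+1)$, this equals $2c(\mu)-|\mu|$, whence the partition part is again $2c(\mu)+(2{\bf r}+1)|\mu|$. Verifying that precisely the non-principal terms vanish under the truncation, and reassembling the surviving ones into the content sum, is the delicate point; the $\lambda$- and $\overline{\lambda}$-cases then match by direct comparison, proving the proposition.
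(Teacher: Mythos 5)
Your proof is correct, and it takes a genuinely different route from the paper's. The paper never expands the bilinear pairings at all: by \cite[Proposition 6.11]{CW2} the parabolic Verma module $\widetilde{\Delta}(\widetilde{\lambda})$ is simultaneously a highest weight module of highest weight $\widetilde{\lambda}$ for the standard Borel and of highest weight $\lambda$ for a second Borel subalgebra obtained by odd reflections; computing the action of the Casimir operator $\widetilde{\bf c}$ on the new highest weight vector (after rewriting $E^\beta E_\beta$ in terms of $E_\beta E^\beta$ plus Cartan corrections for the finitely many roots that change sign) yields $(\lambda+2\rho,\lambda)$, while \lemref{slem:Gamma} yields $(\widetilde{\lambda}+2\widetilde{\rho},\widetilde{\lambda})$, and the two numbers must coincide because the central Casimir acts on the whole module by one scalar; the $\overline{\lambda}$ case is analogous. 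You instead verify the identity by direct partition combinatorics: the common finite contribution $\sum_{i=-m}^{-1}\lambda_i^2+2\sum_{i=-m}^{-1}\lambda_i(\mathbf{r}-i)-2d|\mu|$ is split off correctly (the key point being $\sum_r\theta(\mu)_r=\sum_k\mu'_k=\sum_j\mu_j=|\mu|$, so the $\Lambda_0$ cross-terms agree), and each remaining partition part is shown to equal $2c(\mu)+(2\mathbf{r}+1)|\mu|$ via $2c(\mu)=\sum_i\mu_i(\mu_i-2i+1)$, $c(\mu')=-c(\mu)$, and, for $\widetilde{\lambda}$, the collapse $\theta(\mu)_i=a_i$, $\theta(\mu)_{i-1/2}=b_i+1$ up to the Durfee size (zero beyond) together with the principal-hook formula $c(\mu)=\frac{1}{2}\sum_i\bigl(a_i(a_i+1)-b_i(b_i+1)\bigr)$. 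I checked these identities and the $\varrho$-coefficients you use ($\mathbf{r}+1-j$ on integer coordinates, $\mathbf{r}+k$ on half-integer coordinates, the constant $\mathbf{r}+1$ in the $\widetilde{\lambda}$ case), and all are right. What each approach buys: yours is elementary and self-contained, requires nothing beyond the definitions, and exhibits the explicit common value of the scalar; the paper's is shorter given the super-duality machinery and lets the odd reflections absorb exactly the Frobenius-coordinate combinatorics that you carry out by hand. One cosmetic flaw: you recycle the symbol $d$ for the Durfee square size, although $d$ already denotes the coefficient of $\Lambda_0$ in $\lambda$ and enters your finite part as $-2d|\mu|$; rename one of the two.
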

\begin{proof}
We will show $(\w{\lambda}+2\w{\rho},\w{\lambda})= (\lambda+2\rho,\lambda)$.
The proof of the equality
$(\w{\lambda}+2\w{\rho},\w{\lambda})=(\ov{\lambda}+2\ov{\rho},\ov{\lambda})$ is similar.
 By  \cite[Proposition 6.11]{CW2}, the parabolic Verma module $\w\Delta(\w\la)$ of  highest weight $\w \la$ is also a highest weight module of highest weight $\la\in P^+_n$ with respect to $\w{\mathfrak{b}}^c(n)$ for sufficiently large $n$, where $\w{\mathfrak{b}}^c(n)$ is
a Borel subalgebra of $\DG$ corresponding to the Dynkin diagram:
\begin{center}
\hskip -6cm \setlength{\unitlength}{0.16in}
\begin{picture}(24,4)
\put(15.25,2){\makebox(0,0)[c]{$\bigotimes$}}
\put(17.4,2){\makebox(0,0)[c]{$\bigcirc$}}
\put(21.9,2){\makebox(0,0)[c]{$\bigcirc$}}
\put(24.4,2){\makebox(0,0)[c]{$\bigotimes$}}
\put(26.8,2){\makebox(0,0)[c]{$\bigotimes$}}
\put(13.2,2){\line(1,0){1.45}} \put(15.7,2){\line(1,0){1.25}}
\put(17.8,2){\line(1,0){0.9}} \put(20.1,2){\line(1,0){1.4}}
\put(22.35,2){\line(1,0){1.6}} \put(24.9,2){\line(1,0){1.5}}
\put(27.3,2){\line(1,0){1.5}}
\put(19.5,1.95){\makebox(0,0)[c]{$\cdots$}}
\put(29.7,1.95){\makebox(0,0)[c]{$\cdots$}}
\put(15.2,1){\makebox(0,0)[c]{\tiny $\epsilon_n-\epsilon_{1/2}$}}
\put(17.8,1){\makebox(0,0)[c]{\tiny $\beta_{1/2}$}}
\put(22,1){\makebox(0,0)[c]{\tiny $\beta_{n-1/2}$}}
\put(24.4,1){\makebox(0,0)[c]{\tiny $\alpha_{n+1/2}$}}
\put(27,1){\makebox(0,0)[c]{\tiny $\alpha_{n+1}$}}
\put(8.0,2){\makebox(0,0)[c]{{\ovalBox(1.6,1.2){$\mf{k}$}}}}
\put(8.8,2){\line(1,0){1.7}}
\put(11.8,2){\makebox(0,0)[c]{{\ovalBox(2.6,1.2){$\mf{T}_{n}$}}}}
\end{picture}
\end{center}
Let $\mathscr{A}=\{\epsilon_{i-{\hf}}-\epsilon_j|i,j\in\N,\ 1\leq i\leq j\leq n\}$.
Then the set of positive roots of $\DG$ with respect to $\w{\mathfrak{b}}^c(n)$ is
$\w\Phi^{c+}(n):=(\w\Phi^+\backslash\mathscr{A})\cup(-\mathscr{A})$ and
$E^\beta$ is a positive root vector for $\beta\in\mathscr{A}$. Let $v_\la$ be a highest weight vector of $\w\Delta(\w\la)$ with respect to $\w{\mathfrak{b}}^c(n)$ and let $\la= d\La_0+ \sum_{j\in\I^+_m(n)}\la_{j}\epsilon_{j} \in P^+_n$.
 Note that $E^\beta E_\beta=-E_\beta E^\beta-(E_{i-{\hf}}+E_j)$
for $\beta=\epsilon_{i-{\hf}}-\epsilon_j\in\mathscr{A}$. Then
\begin{eqnarray*}
  \w{\bf c}v_\la&=&2\sum_{\beta\in\mathscr{A}}E^\beta E_{\beta}v_\la +\sum_{j\in\w{\I}^+_m}(-1)^{2j}(E_j^2-2(-1)^{2j}\delta_j KE_j)v_\la   +2\w{\varrho}v_\la\\
  &=&
  -2\sum_{\beta\in\mathscr{A}}E_{\beta}E^\beta v_\la -2\sum_{1\leq i\leq j\leq n}(E_{i-{\hf}}+E_j) v_\la
  +\sum_{j\in{\I}^+_m}(\la_j^2-2\delta_j d\la_j)v_\la+2\w{\varrho}v_\la
   \\
    &=&
  -2\sum_{i=1}^n i\la_i v_\la+(\la,\la)v_\la
  +2\w{\varrho}v_\la.
\end{eqnarray*}
From the definitions of $\w{\varrho}$ and ${\varrho}$, we have $\w{\varrho}v_\la -\sum_{i=1}^n i\la_i v_\la={\varrho}v_\la$. Therefore  $\w{\bf c}v_\la=(\la +2 \rho, \la)v_\la$. On the other hand, $\w{\bf c}v_\la=(\w\la+2\w\rho,\w\la)v_\la$  by \lemref{slem:Gamma}. Hence $(\w\la+2\w\rho,\w\la)= (\la +2  \rho, \la)$.
\end{proof}

\begin{rem}
     A combinational proof of  the equality $({\lambda}+2{\rho},{\lambda})=(\ov{\lambda}+2\ov{\rho},\ov{\lambda})$ for type $\mathfrak{a}$ is given in \cite[Lemma 3.3]{CKL}.
\end{rem}

 Now, for $n\in \N\cup\{\infty\}$, we define
  \begin{align*}
 {\w{\Omega}}_n&:=\sum_{\beta\in\w{\Phi}^+_n}(E^\beta\otimes E_{\beta}+(-1)^{|E_\beta|} E_\beta \otimes E^{\beta})  +\sum_{j\in\w{\I}^+_m(n)}(-1)^{2j}\big(E_j\otimes E_j-(-1)^{2j}\delta_j  (K\otimes E_j+E_j\otimes K)\big),\\
 {\Omega}_n&:=\sum_{\beta\in\Phi^+_n} \big(E^\beta\otimes E_{\beta}+(-1)^{|E_\beta|} E_\beta\otimes E^{\beta}) +\sum_{j\in{\I}^+_m(n)}(E_j\otimes E_j-\delta_j (K\otimes E_j+E_j\otimes K)\big),\\
 {\ov{\Omega}}_n&:=\sum_{\beta\in\ov{\Phi}^+_n}(E^\beta\otimes E_{\beta}+(-1)^{|E_\beta|} E_\beta \otimes E^{\beta})  +\sum_{j\in\ov{\I}^+_m(n)}(-1)^{2j}\big(E_j\otimes E_j+\delta_j  (K\otimes E_j+E_j\otimes K)\big),
  \end{align*}
 which are called \emph{Casimir symmetric tensors}.
As before, we will drop the subscript for $n=\infty$. Hereafter $\Delta$ denotes the comultiplication on a universal enveloping algebra.  For $n\in \N$, it is easy to see that ${\w{\Omega}}_n$, ${{\Omega}}_n$ and  ${\ov{\Omega}}_n$ are elements in $U(\DG_n)\otimes U(\DG_n)$, $U(\G_n)\otimes U(\G_n)$ and $U(\SG_n)\otimes U(\SG_n)$, respectively, satisfying the following equations:
  \begin{align}\label{Oc}
  \nonumber \w{\Omega}_n&=\hf\big(\Delta({\w{\bf c}}_n)- {\w{\bf c}}_n\otimes 1 -1\otimes {\w{\bf c}}_n\big), \\
  \Omega_n&=\hf(\Delta({\bf c}_n)-{\bf c}_n\otimes 1-1\otimes {\bf c}_n),\\
  \nonumber  {\ov{\Omega}}_n&=\hf(\Delta({\ov{\bf c}}_n)-{\ov{\bf c}}_n\otimes 1-1\otimes {\ov{\bf c}}_n).
\end{align}
It is easy to see that these equations also hold for $n=\infty$ by regarding both sides of the equations acting on $M\otimes N$ for $M, N\in \w{\mathcal {O}}$ (resp., ${\mathcal {O}}$ and $\ov{\mathcal {O}}$) in the following sense. $\w{\Omega}$ (resp., ${\Omega}$ and $\ov{\Omega}$) is regarded as a well defined operator on $M\otimes N$ by \lemref{finitesum} and $\Delta(\w{\bf c})$  (resp., $\Delta({\bf c})$ and $\Delta(\ov{\bf c})$) is regarded as the action of $\w{\bf c}$  (resp., ${\bf c}$ and $\ov{\bf c}$) on $M\otimes N$.
The following proposition is a direct consequence of equations obtained in \eqnref{Oc}.

\begin{prop}\label{Om-g}
 For $n\in \N\cup\{\infty\}$ and $M, N\in \w{\mathcal {O}}_n$ (resp., ${\mathcal {O}}_n$ and $\ov{\mathcal {O}}_n$), the action of $\w{\Omega}_n$ (resp., ${\Omega}_n$ and $\ov{\Omega}_n$) on $M\otimes N$ commutes with the action of $\DG_n$ (resp., $\G_n$ and $\SG_n$).
\end{prop}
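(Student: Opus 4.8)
The plan is to read the statement off directly from the identities \eqnref{Oc}, using only that the Casimir elements are central and \emph{even}. I will argue for $\w{\Omega}_n$ with $M,N\in\w{\mathcal O}_n$; the cases of ${\Omega}_n$ and $\ov{\Omega}_n$ are identical after replacing $\w{\bf c}_n$ by ${\bf c}_n$ or $\ov{\bf c}_n$. Fix a homogeneous $x\in\DG_n$ and recall that it acts on $M\otimes N$ through the comultiplication, that is, via $\Delta(x)=x\otimes 1+1\otimes x$. By \eqnref{Oc} it suffices to show that each of the three operators $\Delta(\w{\bf c}_n)$, $\w{\bf c}_n\otimes 1$ and $1\otimes\w{\bf c}_n$ commutes with $\Delta(x)$.

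For the first term I would use that $\Delta\colon U(\DG_n)\to U(\DG_n)\otimes U(\DG_n)$ is an algebra homomorphism together with the centrality of $\w{\bf c}_n$: since $\w{\bf c}_n x=x\w{\bf c}_n$ in $U(\DG_n)$, applying $\Delta$ gives $\Delta(\w{\bf c}_n)\Delta(x)=\Delta(x)\Delta(\w{\bf c}_n)$. For the remaining two terms I would compute the commutator on a homogeneous decomposable tensor $v\otimes w$ directly. Because $\w{\bf c}_n$ is even, $\w{\bf c}_n\otimes 1$ sends $v\otimes w\mapsto(\w{\bf c}_n v)\otimes w$ with no Koszul sign, and a one-line computation using $\w{\bf c}_n x=x\w{\bf c}_n$ together with $|\w{\bf c}_n v|=|v|$ shows $(\w{\bf c}_n\otimes 1)\Delta(x)=\Delta(x)(\w{\bf c}_n\otimes 1)$; the case $1\otimes\w{\bf c}_n$ is symmetric. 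Summing the three contributions, $\w{\Omega}_n$ commutes with $\Delta(x)$ for every homogeneous $x$, hence with the entire action of $\DG_n$.

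For $n<\infty$ this is literally an identity in $U(\DG_n)\otimes U(\DG_n)$, since all three terms are genuine elements there and $\w{\bf c}_n$ lies in the center of $U(\DG_n)$. For $n=\infty$ the same computation is valid once every symbol is read as an operator on $M\otimes N$: by \lemref{finitesum} the operators $\w{\Omega}$, $\w{\bf c}\otimes 1$ and $1\otimes\w{\bf c}$ are well defined on $M\otimes N$, the term $\Delta(\w{\bf c})$ is understood as the action of the Casimir operator $\w{\bf c}$ on $M\otimes N$, and $\w{\bf c}$ has already been shown to commute with the action of $\DG$ on every module in $\w{\mathcal O}$. The only point demanding any care is the super-sign bookkeeping in the second and third terms, but this is exactly what the evenness of the Casimir element eliminates, so no genuine obstacle arises.
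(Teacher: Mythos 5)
Your proposal is correct and is essentially the paper's own argument: the paper derives Proposition~\ref{Om-g} as a direct consequence of the identities \eqnref{Oc}, exactly as you do, with centrality (and evenness) of the Casimir elements handling the finite-rank case and \lemref{finitesum} plus the already-established commutation of $\w{\bf c}$, ${\bf c}$, $\ov{\bf c}$ with the Lie (super)algebra action handling $n=\infty$. The only difference is that you spell out the Koszul-sign bookkeeping and the operator-theoretic reading at $n=\infty$, which the paper leaves implicit.
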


\subsection{(Super) Knizhnik-Zamolodchikov equations}\label{SKZ}

Let $n\in\mathbb{N}\cup\{\infty\}$ and
let ${M}_i\in \w\OO_n$ for  $i=1,\ldots,\ell$ and let
 \begin{equation}\label{def:M}
 M:= M_1\otimes\cdots\otimes  M_\ell.
\end{equation}
 Then ${M}\in \w\OO_n$.
For $x\in\w{\mathfrak{g}}_n$, we write $x^{(i)}=\underbrace{1\otimes\cdots\otimes1\otimes \stackrel{i}{x}\otimes1\otimes\cdots\otimes1}_{\ell}$, acting on  $M$. Given any operator $A=\sum_{r\in I} x_r\otimes y_r$ on $N_1\otimes N_2$ with $x_r, y_r\in\w{\mathfrak{g}}_n$ for any $N_1, N_2\in \w\OO_n$, let $A^{(ij)}:=\sum_{r\in I}x_r^{(i)}y_r^{(j)}$ for $1\leq i,j\leq\ell$. Then $A^{(ij)}$ is an operator on $M$  for $i\not=j$.
Recall that the Casimir symmetric tensor $\w{\Omega}_n$ is an operator on $N_1\otimes N_2$ for any $N_1, N_2\in \w\OO_n$ and hence $\w{\Omega}^{(ij)}_n$ is an operator on $M$.
Similarly, we can define the operators ${\Omega}^{(ij)}_n$ (resp., $\ov{\Omega}^{(ij)}_n$) on the tensor product \eqref{def:M} for $M_1,\ldots,M_\ell\in\mathcal{O}_n$ (resp., $\ov{\mathcal {O}}_n$) of modules for $i\neq j$.

The the following lemma follows from the fact that $x^{(i)}y^{(j)}=(-1)^{|x||y|}y^{(j)}x^{(i)}$ for any homogeneous elements $x$ and $y$ in a Lie superalgebra and $1\le i<j \le \ell$.

\begin{lem}\label{ij=ji}
Let $1\le i<j \le \ell$. We have  $\w{\Omega}^{(ji)}_n=\w{\Omega}^{(ij)}_n$,
  ${\Omega}^{(ji)}_n={\Omega}^{(ij)}_n$ and
$\ov{\Omega}^{(ji)}_n=\ov{\Omega}^{(ij)}_n$.
\end{lem}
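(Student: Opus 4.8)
The plan is to reduce the whole statement to a single structural fact: that the Casimir symmetric tensor is \emph{supersymmetric}. Since $1\le i<j\le\ell$, the operators $x^{(i)}$ and $y^{(j)}$ act on distinct tensor factors, so by the super-commutation relation recalled just before the lemma we may use $x^{(i)}y^{(j)}=(-1)^{|x||y|}y^{(j)}x^{(i)}$ for all homogeneous $x,y\in\DG_n$. Writing $\w{\Omega}_n=\sum_r x_r\otimes y_r$ with $x_r,y_r$ homogeneous root/Cartan/central vectors, it will therefore suffice to establish
\[
\sum_r(-1)^{|x_r||y_r|}\,y_r\otimes x_r=\sum_r x_r\otimes y_r=\w{\Omega}_n.
\]

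Granting this, I would argue as follows. By definition $\w{\Omega}^{(ji)}_n=\sum_r x_r^{(j)}y_r^{(i)}$, and super-commuting each summand (legitimate because $i\neq j$) gives $\w{\Omega}^{(ji)}_n=\sum_r(-1)^{|x_r||y_r|}\,y_r^{(i)}x_r^{(j)}$. The right-hand side is precisely the image of the tensor $\sum_r(-1)^{|x_r||y_r|}\,y_r\otimes x_r$ under the assignment $a\otimes b\mapsto a^{(i)}b^{(j)}$; by the displayed supersymmetry this tensor equals $\w{\Omega}_n=\sum_r x_r\otimes y_r$, whose image under the same assignment is $\sum_r x_r^{(i)}y_r^{(j)}=\w{\Omega}^{(ij)}_n$. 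Hence $\w{\Omega}^{(ji)}_n=\w{\Omega}^{(ij)}_n$.

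It then remains to verify the supersymmetry of $\w{\Omega}_n$ term by term, and this is where the only real bookkeeping lies. For a positive root $\beta$, the vectors $E^\beta$ and $E_\beta$ carry opposite weights $\mp\beta$ and hence the same parity, so $(-1)^{|E^\beta||E_\beta|}=(-1)^{|E_\beta|}$. Consequently the map $a\otimes b\mapsto(-1)^{|a||b|}b\otimes a$ sends $E^\beta\otimes E_\beta$ to $(-1)^{|E_\beta|}E_\beta\otimes E^\beta$ and sends $(-1)^{|E_\beta|}E_\beta\otimes E^\beta$ back to $E^\beta\otimes E_\beta$, so each root-space pair occurring in $\w{\Omega}_n$ is preserved. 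The remaining Cartan and central terms involve only the even elements $E_j$ and $K$, for which the sign is trivial and the tensors $E_j\otimes E_j$ and $K\otimes E_j+E_j\otimes K$ are manifestly symmetric. This yields the displayed identity.

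Finally, the identical computation applies verbatim to $\Omega_n$ and $\ov{\Omega}_n$, the only changes being the index sets $\Phi^+_n,\I^+_m(n)$ and $\ov{\Phi}^+_n,\ov{\I}^+_m(n)$ and the sign of the central correction $\pm\delta_j(K\otimes E_j+E_j\otimes K)$, neither of which affects the symmetry argument. For $n=\infty$ all the sums are locally finite on any fixed vector of $M\in\w{\mathcal O}$ (resp.\ $\mathcal O$, $\ov{\mathcal O}$) by \lemref{finitesum}, so the term-by-term rearrangements are valid when evaluated on a vector. The only genuine (and mild) obstacle is the parity identity $(-1)^{|E^\beta||E_\beta|}=(-1)^{|E_\beta|}$ in the root-vector terms; everything else is formal.
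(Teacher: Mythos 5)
Your proof is correct and takes essentially the same route as the paper, whose entire argument is the observation that $x^{(i)}y^{(j)}=(-1)^{|x||y|}y^{(j)}x^{(i)}$ for homogeneous $x,y$ and $i\neq j$. Your explicit verification that the Casimir tensors are supersymmetric — resting on the parity identity $|E^\beta|=|E_\beta|$ for the root-vector pairs and the evenness of the Cartan/central terms — simply fills in the bookkeeping that the paper leaves implicit.
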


Let ${\bf X}_\ell:=\{(z_1, \ldots,z_\ell)\in \C^\ell\,|\, z_i\not=z_j, \,\, \hbox{for any $i\not=j$}\}$ denote the \emph{configuration space} of $\ell$ distinct points in $\C$. For a nonempty open subset $U$ in ${\bf X}_\ell$ and a finite-dimensional vector space $N$,  let $\mc{D}(U, N)$ denote the set of differentiable functions from $U$ to $N$. For $N\in \w{\mathcal {O}}_n$ (resp., ${\mathcal {O}}_n$  and $\ov{\mathcal {O}}_n$), let
 \[
 \mc{D}(U, N):=\bigoplus_\mu \mc{D}(U, N_\mu),
 \]
 where $\mu$ runs over all weights of $N$. It is clear that $\mc{D}(U, N)$ is a $\DG_n$- (resp., $\G_n$- and $\SG_n$-)module for $N\in \w{\mathcal {O}}_n$ (resp., ${\mathcal {O}}_n$ and $\ov{\mathcal {O}}_n$).

The \emph{(quadratic) Gaudin Hamiltonians} ${\w{\mathcal {H}}}^i_{n}$, for $i=1,\ldots, \ell$, are linear operators  on  $\mc{D}(U, M)$ defined by
\begin{equation*}\label{GH_n}
  \w{\mathcal {H}}^i_{n}:=\sum_{j=1\atop j\neq i}^\ell\frac{ \w{\Omega}^{(ij)}_n}{z_i-z_j}.
\end{equation*}
Similarly, we can define the \emph{(quadratic) Gaudin Hamiltonians} ${{\mathcal {H}}}^i_{n}$ and ${\ov{\mathcal {H}}}^i_{n}$, for $i=1, \ldots, \ell$, being linear operators  on  $\mc{D}(U, M)$ by
\begin{equation*}
  {\mathcal {H}}^i_{n}:=\sum_{j=1\atop j\neq i}^\ell\frac{{\Omega}^{(ij)}_n}{z_i-z_j}\quad \hbox{and}\quad \ov{\mathcal {H}}^i_{n}:=\sum_{j=1\atop j\neq i}^\ell\frac{ \ov{\Omega}^{(ij)}_n}{z_i-z_j}
\end{equation*}
if  ${M}_1,\ldots, M_\ell\in \OO_n$ and ${M}_1, \ldots, M_\ell\in \ov\OO_n$, respectively.
Fix a nonzero complex number $\kappa$ and ${\psi}(z_1,\ldots, z_\ell)\in \mc{D}(U, M)$. We can consider a system of partial differential equations
\begin{equation}\label{dKZE_n}
\kappa\frac{\partial}{\partial {z_i}}{\psi}(z_1,\ldots,z_\ell)=\w{\mathcal {H}}^i_{n}{\psi}(z_1,\ldots,z_\ell),\ \ \ \ \mbox{for}\ i=1,\ldots,\ell.
\end{equation}
Above equations \eqref{dKZE_n} are called the \emph{super Knizhnik-Zamolodchikov equations} (super KZ equations for short).
Similarly, we can consider the \emph{KZ equations} and \emph{super KZ equations}
\begin{equation}\label{KZE_n}
\kappa\frac{\partial}{\partial {z_i}}{\psi}(z_1,\ldots,z_\ell)={\mathcal {H}}^i_{n}{\psi}(z_1,\ldots,z_\ell),\quad  \mbox{for}\ i=1,\ldots,\ell, \quad {\psi}(z_1,\ldots, z_\ell)\in \mc{D}(U, M)
\end{equation}
and
\begin{equation}\label{sKZE_n}
\kappa\frac{\partial}{\partial {z_i}}{\psi}(z_1,\ldots,z_\ell)=\ov{\mathcal {H}}^i_{n}{\psi}(z_1,\ldots,z_\ell),\quad  \mbox{for}\ i=1,\ldots,\ell, \quad {\psi}(z_1,\ldots, z_\ell)\in \mc{D}(U, M)
\end{equation}
for ${M}_1, \ldots, M_\ell\in \OO_n$ and ${M}_1, \ldots, M_\ell\in \ov\OO_n$, respectively.
For ${M}_1,\ldots, M_\ell\in \w\OO_n$ and a weight  $\mu$ of $M$, let
 \[\w{\rm KZ}(M_\mu):=\{\psi\in \mc{D}(U, M_\mu)\,|\, \hbox{$\psi$ is a solution of the super KZ equations \eqnref{dKZE_n}}\}
 \]
 and
\[\w{\rm KZ}(M):=\bigoplus_\mu \w{\rm KZ}(M_\mu)\]
where $\mu$ runs over all weights of $M$.
Analogously, we let ${\rm KZ}(M_\mu)$ and $ \ov{\rm KZ}(M_\mu)$ denote the set of all solutions of  \eqnref{KZE_n} and  \eqnref{sKZE_n}
in $\mc{D}(U, M_\mu)$, respectively.
Also,  let
 \[
 {\rm KZ}(M):=\bigoplus_{\mu} {\rm KZ}(M_\mu),\quad\hbox{and} \quad \ov{\rm KZ}(M):=\bigoplus_{\mu}  \ov{\rm KZ}(M_\mu),
 \]
where $\mu$ runs over all weights of $M$.
Let $\w{\mc S}(\w M_\mu)$ (resp., ${\mc S}(M_{\mu})$ and $\ov{\mc S}( M_\mu)$)
 be the set of functions in
 $\psi\in \w{\rm KZ}(M_\mu)$ (resp., ${\rm KZ}(M_\mu)$ and $\ov{\rm KZ}(M_\mu)$)
 satisfying $E_\beta \psi=0$ for $\beta\in\w{\Phi}_n^+$ (resp., ${\Phi}_n^+$ and $\ov{\Phi}_n^+$).
 The sets are composed of the solutions of (super) KZ equations with values in the subspaces spanned by singular vectors of weight $\mu$. $\w{\mc S}(\w M_\mu)$ (resp., ${\mc S}(M_{\mu})$ and $\ov{\mc S}( M_\mu)$,
 which is called the \emph{singular solution space} of the (super) KZ equations, is a vector space.

The proof of the following proposition is standard. We include the proof for completeness.

\begin{prop}\label{H-g}
  The Gaudin Hamiltonians
  $\w{\mathcal {H}}^i_{n}$, (resp., ${\mathcal {H}}^i_{n}$ and
  $\ov{\mathcal {H}}^i_{n}$) on $\mc{D}(U, M)$ mutually commute with each other, and they also commute with the action of Lie (super)algebra $\wt{\G}_n$ (resp., $\G_n$ and $\SG_n$), for $i=1,\ldots,\ell$.
\end{prop}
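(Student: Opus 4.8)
The plan is to reduce both assertions to the $\DG_n$-invariance of the Casimir tensor recorded in \propref{Om-g}, together with two elementary observations about Koszul signs and partial fractions. Throughout I would treat the tilde case, the $\G_n$- and $\SG_n$-cases being verbatim identical. The starting point is that every summand of $\w\Omega_n$ has the form $E^\beta\otimes E_\beta$ (or a Cartan term) with $|E^\beta|=|E_\beta|$, so $\w\Omega_n$ is even and hence each $\w\Omega^{(ij)}_n$ is an \emph{even} operator on $M$. From this I would first record the \emph{disjointness relation}: if $\{a,b\}\cap\{c,d\}=\emptyset$ then $[\w\Omega^{(ab)}_n,\w\Omega^{(cd)}_n]=0$, and likewise $[\w\Omega^{(ab)}_n,x^{(k)}]=0$ whenever $k\notin\{a,b\}$ and $x\in\DG_n$. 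Both follow from the supercommutativity $x^{(p)}y^{(q)}=(-1)^{|x||y|}y^{(q)}x^{(p)}$ of operators on distinct factors (recalled before \lemref{ij=ji}), the evenness of $\w\Omega^{(ab)}_n$ cancelling all signs.

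Next I would establish commutation with the Lie (super)algebra. The action of $x\in\DG_n$ on $M=M_1\otimes\cdots\otimes M_\ell$ is $\sum_{k=1}^\ell x^{(k)}$, so it suffices to show $[\w\Omega^{(ij)}_n,\sum_{k=1}^\ell x^{(k)}]=0$ for each $j\neq i$. By disjointness the only possibly nonzero contributions come from $k\in\{i,j\}$, and $[\w\Omega^{(ij)}_n,\,x^{(i)}+x^{(j)}]=0$ is precisely the invariance statement of \propref{Om-g} applied to the $i$-th and $j$-th factors (the others being inert spectators). Dividing by the nonzero scalars $z_i-z_j$ and summing over $j$ gives $[\w{\mathcal H}^i_n,x]=0$.

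For the mutual commutativity I would first upgrade invariance to the \emph{three-term relation}: for pairwise distinct $i,j,k$,
\[
[\w\Omega^{(ij)}_n,\ \w\Omega^{(ik)}_n+\w\Omega^{(jk)}_n]=0.
\]
Writing $\w\Omega_n=\sum_a u_a\otimes v_a$, one has $\w\Omega^{(ik)}_n+\w\Omega^{(jk)}_n=\sum_a(u_a^{(i)}+u_a^{(j)})v_a^{(k)}$; since $\w\Omega^{(ij)}_n$ commutes with each $u_a^{(i)}+u_a^{(j)}$ by invariance and with each $v_a^{(k)}$ by disjointness, the super-Leibniz rule (signs again killed by evenness) yields the claim. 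Setting $A=\w\Omega^{(ij)}_n$, $B=\w\Omega^{(ik)}_n$, $C=\w\Omega^{(jk)}_n$, the instances of this relation give $[A,B]=[B,C]=-[A,C]$, a common value $T=T(i,j,k)$.

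Finally I would expand
\[
[\w{\mathcal H}^i_n,\w{\mathcal H}^j_n]=\sum_{p\neq i}\sum_{q\neq j}\frac{[\w\Omega^{(ip)}_n,\w\Omega^{(jq)}_n]}{(z_i-z_p)(z_j-z_q)},
\]
and use disjointness together with \lemref{ij=ji} (to read $\w\Omega^{(ji)}_n=\w\Omega^{(ij)}_n$) to see that the surviving terms all carry a third index $k\neq i,j$ in one of the configurations $(p,q)=(j,k),(k,i),(k,k)$, with respective commutators $[A,C]=-T$, $[B,A]=-T$, $[B,C]=T$. Collecting the coefficient of $T$ for each fixed $k$ produces exactly the partial-fraction identity
\[
\frac{-1}{(z_i-z_j)(z_j-z_k)}+\frac{1}{(z_i-z_k)(z_i-z_j)}+\frac{1}{(z_i-z_k)(z_j-z_k)}=0,
\]
so every term cancels and $[\w{\mathcal H}^i_n,\w{\mathcal H}^j_n]=0$. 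The case $n=\infty$ follows from \lemref{finitesum}: acting on any fixed vector all of the above sums are finite, so the finite-rank identities pass to the limit as operator identities on modules in $\w\OO$. The one genuinely delicate point — and the step I would write out most carefully — is the sign bookkeeping in the disjointness and three-term relations; the evenness of $\w\Omega_n$ is exactly what trivializes every Koszul sign, and I would flag this explicitly rather than leave it implicit.
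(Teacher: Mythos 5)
Your proof is correct and takes essentially the same route as the paper's: the paper likewise gets commutation with $\DG_n$ directly from \propref{Om-g}, records the disjointness relation $[\w\Omega^{(ij)}_n,b^{(p)}]=0$ for $p\neq i,j$, derives the three-term relation $[\w\Omega^{(ij)}_n,\w\Omega^{(pi)}_n+\w\Omega^{(pj)}_n]=0$ by writing $\w\Omega_n$ as a sum of tensors $c\otimes d$ with $|c|=|d|$ and combining invariance with disjointness, and then closes with the identical partial-fraction identity. The only differences are bookkeeping: the paper regroups its sum by adding and subtracting $\w\Omega^{(pi)}_n$ and $\w\Omega^{(ij)}_n$ rather than introducing your common value $T$, and your enumeration of surviving terms omits the configuration $(p,q)=(j,i)$, whose contribution $[\w\Omega^{(ij)}_n,\w\Omega^{(ji)}_n]=[\w\Omega^{(ij)}_n,\w\Omega^{(ij)}_n]=0$ vanishes trivially by \lemref{ij=ji} and the evenness of $\w\Omega_n$.
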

\begin{proof}
The second statement follows directly from \propref{Om-g}. We will show $[\w{\mathcal {H}}^i_{n}, \w{\mathcal {H}}^j_{n}]=0$ for $1\le i,j\le\ell$. The other cases are similar.

Note that $[\w\Omega^{(ij)}_n, b^{(p)}]=0\ \mbox{for}\ p\neq i, j$ and $[\w\Omega^{(ij)}_n, b^{(i)}+b^{(j)}]=[\w\Omega^{(ij)}_n, \Delta(b)^{(ij)}]=0$
for all $b\in\wt{\G}_n$, by \propref{Om-g}. By writing
$\w\Omega^{(pi)}_n=\sum\limits_{c,d}c^{(p)}d^{(i)}$ and
$\w\Omega^{(pj)}_n=\sum\limits_{c,d}c^{(p)}d^{(j)}$ with $c,d\in\DG_n$ and $|c|=|d|$,  we have
\begin{equation}\label{[ij,pi+pj]}
  [\w\Omega^{(ij)}_n,\w\Omega^{(pi)}_n+\w\Omega^{(pj)}_n]=0\ \  \mbox{for  $p\neq i, j$ and $i\not=j$}.
\end{equation}

By  \lemref{ij=ji} and \eqref{[ij,pi+pj]}, we have, for $i\not=p$,
\begin{eqnarray*}
[\w{\mathcal {H}}^i_{n}, \w{\mathcal {H}}^p_{n}]
  &=&\sum_{j=1\atop j\neq i, p}^\ell[\frac{\w\Omega^{(ij)}_n}{z_i-z_j},\frac{\w\Omega^{(pi)}_n}{z_p-z_i}+\frac{\w\Omega^{(pj)}_n}{ z_p-z_j}]
  +\sum_{j=1\atop j\neq i, p}^\ell[\frac{\w\Omega^{(ip)}_n}{ z_i-z_p},\frac{\w\Omega^{(pj)}_n}{ z_p-z_j}]\\
  &=&\sum_{j=1\atop j\neq i, p}^\ell\frac{[\w\Omega^{(ij)}_n,\w\Omega^{(pi)}_n]}{(z_i-z_j)(z_p-z_i)}
  +\sum_{j=1\atop j\neq i, p}^\ell \frac{[\w\Omega^{(ij)}_n,\w\Omega^{(pi)}_n+\w\Omega^{(pj)}_n-\w\Omega^{(pi)}_n]}{(z_i-z_j)(z_p-z_j)}\\
 &&\quad+ \sum_{j=1\atop j\neq i, p}^\ell\frac{[\w\Omega^{(ip)}_n,\w\Omega^{(pj)}_n+\w\Omega^{(ij)}_n-\w\Omega^{(ij)}_n]}{(z_i-z_p)(z_p-z_j)}
  \\
  &=&\sum_{j=1\atop j\neq i, p}^\ell[\w\Omega^{(ij)}_n,\w\Omega^{(pi)}_n]
  \big(\frac{1}{  (z_i-z_j)(z_p-z_i)}+\frac{1}{  (z_i-z_j)(z_j-z_p)}+\frac{1}{ (z_p-z_i)(z_j-z_p)}\big)\\
  &=&0.
\end{eqnarray*}
 \end{proof}

The following proposition is a direct consequence of the second part of Proposition \ref{H-g}.
\begin{prop}\label{KZ-g}
  $\w{\rm KZ}(M)$ (resp., ${\rm KZ}(M)$ and  $\ov{\rm KZ}(M)$) is a $\DG_n$-(resp., $\G_n$- and $\SG_n$-)module.
\end{prop}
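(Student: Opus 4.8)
The plan is to show that $\w{\rm KZ}(M)$ is closed under the action of $\DG_n$ on $\mc{D}(U, M)$; since it is by construction a linear subspace of $\mc{D}(U, M)$, being the direct sum of the solution spaces $\w{\rm KZ}(M_\mu)$, this closure is all that needs checking. The arguments for $\G_n$ and $\SG_n$ are word-for-word the same, so I would carry out only the $\DG_n$-case and remark that the others follow by replacing $\DG_n$, $\w{\mathcal {H}}^i_n$ and $\w{\Omega}^{(ij)}_n$ by their unbarred and barred analogues.

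First I would recall that the $\DG_n$-action on $\mc{D}(U, M)$ is fiberwise in the variables $z_1,\ldots,z_\ell$: for a homogeneous $x\in\DG_n$ and $\psi\in\mc{D}(U, M)$ one sets $(x\psi)(z_1,\ldots,z_\ell):=x\cdot\psi(z_1,\ldots,z_\ell)$, where $x$ acts on $M=M_1\otimes\cdots\otimes M_\ell$ through the iterated comultiplication $\sum_{i=1}^\ell x^{(i)}$. Because $x$ is a fixed linear operator on $M$, independent of the $z_i$, it commutes with each $\partial/\partial z_i$; and because $x$ sends the weight space $M_\mu$ into $M_{\mu+\mathrm{wt}(x)}$, the function $x\psi$ again lies in $\mc{D}(U, M)=\bigoplus_\mu\mc{D}(U, M_\mu)$. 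Hence it remains only to verify that $x\psi$ solves the super KZ equations \eqnref{dKZE_n} whenever $\psi$ does. The key computation is then, for each $i=1,\ldots,\ell$,
\[
\kappa\frac{\partial}{\partial z_i}(x\psi)=\kappa\,x\,\frac{\partial}{\partial z_i}\psi=x\,\w{\mathcal {H}}^i_n\psi=\w{\mathcal {H}}^i_n(x\psi),
\]
where the first equality uses that $x$ commutes with $\partial/\partial z_i$, the second that $\psi$ is a solution, and the crucial third equality is exactly the second assertion of \propref{H-g}, that $\w{\mathcal {H}}^i_n$ commutes with the action of $\DG_n$. (At the operator level $\w{\mathcal {H}}^i_n=\sum_{j\neq i}(z_i-z_j)^{-1}\w{\Omega}^{(ij)}_n$, and this commuting reduces to \propref{Om-g}, the scalar coefficients $(z_i-z_j)^{-1}$ being central.) Thus $x\psi\in\w{\rm KZ}(M)$, and for homogeneous $x$ it in fact lies in $\w{\rm KZ}(M_{\mu+\mathrm{wt}(x)})$ when $\psi\in\w{\rm KZ}(M_\mu)$, which is what closure requires.

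I do not expect any genuine obstacle: the statement is formal once \propref{H-g} is in hand. The only points deserving a word of care are that the $\DG_n$-action preserves both differentiability and the weight-space decomposition of $\mc{D}(U, M)$, so that $x\psi$ remains in $\mc{D}(U, M)$, and that the commuting of $\w{\mathcal {H}}^i_n$ with $\DG_n$ — a phenomenon on the target module $M$, independent of $z$ — transfers directly to the function-valued setting because the action of $x$ and the evaluation of $\psi$ both happen pointwise in $z$. With these observations the proof is complete in the three cases simultaneously.
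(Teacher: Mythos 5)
Your proof is correct and follows exactly the paper's route: the paper states Proposition~\ref{KZ-g} as a direct consequence of the second part of Proposition~\ref{H-g}, which is precisely the commutation $x\,\w{\mathcal H}^i_n\psi=\w{\mathcal H}^i_n(x\psi)$ at the heart of your computation. You have simply written out the routine details (fiberwise action, commutation with $\partial/\partial z_i$, preservation of the weight decomposition) that the paper leaves implicit.
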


 From the definition of
 $\w{\mf{tr}}^n_k$ (resp., ${\mf{tr}}^n_k$ and $\ov{\mf{tr}}^n_k$),
 we can define the function
 $\w{\mf{tr}}^n_k({\psi})$
  (resp., ${\mf{tr}}^n_k({\psi})$ and $\ov{\mf{tr}}^n_k({\psi})$)
  in $\mc{D}(U,\w{\mf{tr}}^n_k(M)_\mu)$
  (resp., $\mc{D}(U,{\mf{tr}}^n_k(M)_\mu)$ and $\mc{D}(U,\ov{\mf{tr}}^n_k(M)_\mu)$)
   in an obvious way for
   ${\psi}(z_1,\ldots,z_\ell)\in\mc{D}(U,M_\mu)$,
 where $\mu$ is a weight of $M$ for $M=M_1\otimes\cdots\otimes  M_\ell$ and
  $M_i\in \w{\mc {O}}_n$ (resp., ${\mc {O}}_n$ and $\ov{\mc {O}}_n$) for $i=1,\ldots,\ell$.
  Note that $\w{\mf{tr}}^n_k({\psi})$
  (resp., ${\mf{tr}}^n_k({\psi})$ and $\ov{\mf{tr}}^n_k({\psi})$) $=\psi$,
  if $\mu\in \w\Xi_k$ (resp., $\Xi_k$ and $\ov\Xi_k$), and $0$, otherwise.

\begin{prop}\label{tr-KZ} Let $0\le k<n \leq\infty$ and $M=M_1\otimes\cdots\otimes  M_\ell$. Assume that  ${\psi}\in \mc{D}(U, M)$.
\begin{itemize}
\item[(i)] For $\mu\in \w\Xi_k$ and $M_1, \ldots,M_\ell\in \w{\mc {O}}_n$, we have
\[
\quad
{\psi}\in \w{\rm KZ}(M_\mu) \Leftrightarrow \w{\mf{tr}}^n_k({\psi})\in \w{\rm KZ}\big(\w{\mf{tr}}^n_k(M)_\mu\big)
\quad\hbox{and}\quad
{\psi}\in \w{\mc S}(M_\mu) \Leftrightarrow \w{\mf{tr}}^n_k({\psi})\in \w{\mc S}\big(\w{\mf{tr}}^n_k(M)_\mu\big).
\]
\item[(ii)] For $\mu\in \Xi_k$ and $M_1, \ldots,M_\ell\in{\mc {O}}_n$, we have
\[
\quad
{\psi}\in {\rm KZ}(M_\mu) \Leftrightarrow {\mf{tr}}^n_k({\psi})\in {\rm KZ}\big({\mf{tr}}^n_k(M)_\mu\big)
\quad\hbox{and}\quad
{\psi}\in {\mc S}(M_\mu) \Leftrightarrow {\mf{tr}}^n_k({\psi})\in {\mc S}\big({\mf{tr}}^n_k(M)_\mu\big).
\]
\item[(iii)] For $\mu\in \ov\Xi_k$ and $M_1, \ldots,M_\ell\in\ov{\mc {O}}_n$, we have
\[
\quad
{\psi}\in \ov{\rm KZ}(M_\mu) \Leftrightarrow \ov{\mf{tr}}^n_k({\psi})\in \ov{\rm KZ}\big(\ov{\mf{tr}}^n_k(M)_\mu\big)
\quad\hbox{and}\quad
{\psi}\in \ov{\mc S}(M_\mu) \Leftrightarrow \ov{\mf{tr}}^n_k({\psi})\in \ov{\mc S}\big(\ov{\mf{tr}}^n_k(M)_\mu\big).
\]
\end{itemize}
\end{prop}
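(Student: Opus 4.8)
The plan is to establish part (i) in detail; parts (ii) and (iii) are proved verbatim after replacing $\w{\Omega}$, $\w{\mf{tr}}^n_k$, $\w{\Phi}^+$ and $\w\Xi$ by their unbarred and barred counterparts. Throughout I use that, because $\mu\in\w\Xi_k$, the weight space $M_\mu$ equals $\w{\mf{tr}}^n_k(M)_\mu$ and $\w{\mf{tr}}^n_k(\psi)=\psi$, so that $\mc{D}(U,M_\mu)$ and $\mc{D}(U,\w{\mf{tr}}^n_k(M)_\mu)$ are the same space of functions; the two asserted equivalences then merely compare the defining conditions of $\w{\rm KZ}$ and of $\w{\mc S}$ computed at level $n$ with those computed at level $k$.

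First I would reduce the equivalence for $\w{\rm KZ}$ to the single identity $\w{\Omega}^{(ij)}_n\psi=\w{\Omega}^{(ij)}_k\psi$ for all $i\neq j$ and all $\psi\in\mc{D}(U,M_\mu)$; granting this, $\w{\mathcal H}^i_n\psi=\w{\mathcal H}^i_k\psi$ for every $i$, so the systems \eqnref{dKZE_n} at levels $n$ and $k$ are literally the same set of equations and the equivalence is immediate. To prove the identity it is enough to show that $(\w{\Omega}_n-\w{\Omega}_k)^{(ij)}$ annihilates each spanning vector $v_1\otimes\cdots\otimes v_\ell\in M_\mu$, where $v_p$ has weight $\mu_p$ and $\mu_1+\cdots+\mu_\ell=\mu$. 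The key preliminary point is that every $\mu_p$ already lies in $\w\Xi_k$: since $M_p\in\w{\mc O}_n$ one has $\mu_p(E_r)\ge 0$ for $r\in\w{\I}^+_0(n)$, while $\sum_p\mu_p(E_r)=\mu(E_r)=0$ for $r>k$, so the iterated form of \lemref{weight=0} forces $\mu_p(E_r)=0$ for all $p$ and all such $r$.

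Next I would go through $\w{\Omega}_n-\w{\Omega}_k$ term by term. Its off-diagonal part is the sum of $E^\beta\otimes E_\beta+(-1)^{|E_\beta|}E_\beta\otimes E^\beta$ over $\beta\in\w{\Phi}^+_n\setminus\w{\Phi}^+_k$, and its diagonal part is a sum over $r\in\w{\I}^+_m(n)$ with $r>k$ of Cartan terms, each of whose summands carries a factor $E_r$ in at least one tensor slot. By \lemref{finitesum}, the relations $\mu_i,\mu_j\in\w\Xi_k$ give $E_\beta v_i=E_\beta v_j=0$ for $\beta\in\w{\Phi}^+_n\setminus\w{\Phi}^+_k$ and $E_r v_i=E_r v_j=0$ for $r>k$. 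Hence in every summand one of these annihilating operators acts on the $i$-th or the $j$-th factor, so the whole summand kills $v_1\otimes\cdots\otimes v_\ell$; therefore $(\w{\Omega}_n-\w{\Omega}_k)^{(ij)}\psi=0$, which completes the equivalence for $\w{\rm KZ}$.

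Finally, for the singular assertion I would note that $\w{\Phi}^+_k\subseteq\w{\Phi}^+_n$, so vanishing of $E_\beta\psi$ for all $\beta\in\w{\Phi}^+_n$ trivially implies it for all $\beta\in\w{\Phi}^+_k$; conversely, for $\beta\in\w{\Phi}^+_n\setminus\w{\Phi}^+_k$ the equation $E_\beta\psi=0$ holds automatically by \lemref{finitesum}, as $\psi$ takes values in $M_\mu$ with $\mu\in\w\Xi_k$. Thus the singular condition at level $n$ is equivalent to the one at level $k$, and combined with the $\w{\rm KZ}$ equivalence this yields the statement for $\w{\mc S}$. The only step that needs genuine care, and the main obstacle, is the term-by-term vanishing of $\w{\Omega}_n-\w{\Omega}_k$ on $M_\mu$: it rests on first upgrading the hypothesis $\mu\in\w\Xi_k$ to the fact that every tensor-factor weight $\mu_p$ lies in $\w\Xi_k$, so that \lemref{finitesum} can be applied to each factor separately.
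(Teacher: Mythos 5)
Your proposal is correct and follows essentially the same route as the paper's proof: decompose elements of $M_\mu$ according to tensor-factor weights, use \lemref{weight=0} to conclude each factor weight lies in $\w\Xi_k$, apply \lemref{finitesum} to identify $\w\Omega^{(ij)}_n$ with $\w\Omega^{(ij)}_k$ on $M_\mu$, and observe that the singularity conditions at levels $n$ and $k$ coincide because $E_\beta\psi=0$ holds automatically for $\beta\in\w\Phi^+_n\setminus\w\Phi^+_k$. Your term-by-term verification of $(\w\Omega_n-\w\Omega_k)^{(ij)}\psi=0$ merely makes explicit what the paper compresses into its citation of \lemref{finitesum}.
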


\begin{proof}
We will show (i) only. The proofs of (ii) and (iii) are similar.
 Write $\psi=\sum_{r=1}^p\psi_r$ such that
 $\psi_r(z_1,\ldots,z_\ell)\in (M_1)_{\mu^{r,1}}\otimes\cdots\otimes(M_\ell)_{\mu^{r,\ell}}$ for all $(z_1,\ldots,z_\ell)\in U$,
 where $\mu^{r,s}$ are weights of $M_s$ for $s=1,\ldots, \ell$ and $r=1,\ldots, p$.
 Note that $\sum_{s=1}^\ell \mu^{r,s}=\mu$  for $r=1,\ldots, p$.
  By \lemref{weight=0}, $\mu^{r,s}\in  \w\Xi_k$, for all $s=1,\ldots, \ell$ and $r=1,\ldots, p$.
  Therefore $\w{\mf{tr}}^n_k({\psi})=\psi=\sum_{r=1}^p\psi_r$. By \lemref{finitesum}, $\w\Omega_n^{(ij)}\psi=\w\Omega_k^{(ij)}\psi=\w\Omega_k^{(ij)}\w{\mf{tr}}^n_k({\psi})$
  for $1\le i\not= j\le \ell$ since $\mu^{r,i}, \mu^{r,j}\in  \w\Xi_k$,
  for $r=1,\ldots, p$.
  Hence ${\psi}\in \w{\rm KZ}(M_\mu)$ if and only if $\w{\mf{tr}}^n_k({\psi})\in \w{\rm KZ}\big(\w{\mf{tr}}^n_k(M)_\mu\big)$.
  The second part follows from the first part
  and the fact that $v$ is a singular vector of weight $\mu\in \w\Xi_k$ in $M$ if and only if $E_\beta v=0$ for all $\beta\in \w\Phi_k^+$.
\end{proof}

\subsection{KZ equations for modules over $\DG$, $\G$ and $\SG$.}\label{Sec: KZ_infty}
Let $\w{M}_i\in \w\OO$ for  $i=1,\ldots,\ell$ and let
 $$
 \w{M}:= \w{M}_1\otimes\cdots\otimes \w{M}_\ell.
 $$
We set $M_i:=T(\w{M}_i)$ and $\ov M_i:=\ov{T}(\w{M}_i)$ for  $i=1,\ldots,\ell$.
Then
 \[
 T(\w{M})={M}_1\otimes\cdots\otimes {M}_\ell\quad \hbox{and}\quad \ov{T}(\w{M})= \ov{M}_1\otimes\cdots\otimes \ov{M}_\ell.
 \]
 Recall that $T$ and  $\ov T$ are tensor functors defined in \eqref{def:T}.
  We define $T{\psi}(z_1,\ldots,z_\ell)\in \mc{D}(U, T(\w{M}))$
  and  $\ov T{\psi}(z_1,\ldots,z_\ell)\in \mc{D}(U, \ov T(\w{M}))$
  for ${\psi}(z_1,\ldots,z_\ell)\in \mc{D}(U, \w{M})$
  in an obvious way by letting
  $T{\psi}:=\psi$ (resp., $\w T{\psi}:=\psi$), for ${\psi}\in \mc{D}(U, \w{M}_\mu)$, if $\mu\in \Xi$
  (resp., $\mu\in \ov\Xi$), and $0$, otherwise.
  Recall that we drop the $\infty$ for $n=\infty$.

 \begin{lem}\label{Xi} Let $N \in \w{\mathcal {O}}$ and let $v\in N$ be a weight vector of weight $\mu$.
\begin{itemize}
\item[(i)] For $\mu\in \Xi$, we have $E_i v= 0$ for all $i\in \w{\I}^+_m\backslash {\I}^+_m$ and either $E_\beta v= 0$ or $E^\beta v= 0$ for $\beta\in \w{\Phi}^{+}\backslash {\Phi}^{+}$.
\item[(ii)] For $\mu\in \ov\Xi$, we have $E_i v= 0$ for all $i\in \w{\I}^+_m\backslash \ov{\I}^+_m$ and either $E_\beta v= 0$ or $E^\beta v= 0$ for  $\beta\in \w{\Phi}^{+}\backslash\ov{\Phi}^{+}$.
\end{itemize}
  \end{lem}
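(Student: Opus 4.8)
The plan is to deduce everything from the single fact, recalled just before \lemref{Xi}, that every weight of $N\in\w{\mathcal O}$ lies in $\w\Xi$, combined with the explicit shape of $\Xi$ and $\ov\Xi$ inside $\w\Xi$. Concretely, $\Xi$ is the set of weights in $\w\Xi$ whose $\epsilon_r$-coefficient vanishes for every positive half-integer $r$ (these are exactly the indices in $\w\I^+_m\backslash\I^+_m$), while $\ov\Xi$ is the set of weights in $\w\Xi$ whose $\epsilon_r$-coefficient vanishes for every positive integer $r$ (the indices in $\w\I^+_m\backslash\ov\I^+_m$). Here I use that $\epsilon_r(E_s)=\delta_{rs}$, so the $\epsilon_r$-coefficient of a weight $\nu$ is simply $\nu(E_r)$, and that membership in $\w\Xi$ forces $\nu(E_r)\in\Z_+$ for all $r\ge\hf$.

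For the first assertion of (i) I would note that each $E_i$ with $i\in\w\I^+_m\backslash\I^+_m$ lies in the Cartan subalgebra $\w{\mf h}$, so $E_i v=\mu(E_i)v$, and $\mu(E_i)=0$ because $\mu\in\Xi$; hence $E_i v=0$. The second assertion of (i) is the heart of the matter. First I would observe that any $\beta\in\w\Phi^+\backslash\Phi^+$ must involve a half-integer index: since $\Phi^+$ consists of the positive roots of $\G$, which are supported only on indices in $\I^+_m$, a positive root not lying in $\Phi^+$ cannot be supported entirely on integer indices, so $\beta(E_t)\neq0$ for some positive half-integer $t$. I would then split on the sign of $\beta(E_t)$. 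If $\beta(E_t)>0$, then $(\mu-\beta)(E_t)=\mu(E_t)-\beta(E_t)=-\beta(E_t)<0$, so $\mu-\beta\notin\w\Xi$; as $E^\beta v$ has weight $\mu-\beta$ and all weights of $N$ lie in $\w\Xi$, this forces $E^\beta v=0$. Symmetrically, if $\beta(E_t)<0$ then $(\mu+\beta)(E_t)<0$, so $\mu+\beta\notin\w\Xi$ and $E_\beta v=0$. In either case one of $E_\beta v$, $E^\beta v$ vanishes, as claimed.

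Part (ii) I would then settle by repeating the argument verbatim with the roles of the integer and half-integer indices exchanged: for $\mu\in\ov\Xi$ one has $\mu(E_i)=0$ for every $i\in\w\I^+_m\backslash\ov\I^+_m$ (a positive integer), giving $E_i v=0$, while every $\beta\in\w\Phi^+\backslash\ov\Phi^+$ involves a positive integer index $t$ with $\mu(E_t)=0$, and the same dichotomy on the sign of $\beta(E_t)$ yields $E^\beta v=0$ or $E_\beta v=0$.

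The one step I expect to require actual care is the claim that each $\beta\in\w\Phi^+\backslash\Phi^+$ (resp. $\w\Phi^+\backslash\ov\Phi^+$) pairs nontrivially with some $E_t$ for $t$ a half-integer (resp. positive integer). This is precisely where one must appeal to the explicit description of the positive root systems for types $\mf{a,b,b^\bullet,c,d}$ in \cite[Sections 6.1.3, 6.1.4]{CW2} — the same input already used in the proof of \lemref{finitesum} — to rule out a positive root outside $\Phi^+$ (resp. $\ov\Phi^+$) being supported only on the complementary indices. Once this is in hand, the remainder is a routine bookkeeping of $\epsilon$-coefficients, entirely parallel to, though slightly finer than, the argument for \lemref{finitesum}.
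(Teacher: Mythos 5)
Your proof is correct and follows essentially the same route as the paper's: both rest on the facts that all weights of $N$ lie in $\w\Xi$, that $\mu\in\Xi$ (resp.\ $\ov\Xi$) vanishes on the Cartan elements indexed by $\w{\I}^+_m\backslash{\I}^+_m$ (resp.\ $\w{\I}^+_m\backslash\ov{\I}^+_m$), and that any $\beta\in\w{\Phi}^+\backslash{\Phi}^+$ (resp.\ $\w{\Phi}^+\backslash\ov{\Phi}^+$) pairs nontrivially with some such $E_i$, forcing one of $\mu\pm\beta$ outside $\w\Xi$. Your explicit sign dichotomy on $\beta(E_t)$ merely spells out the step the paper compresses into ``either the weight of $E_\beta v$ or $E^\beta v$ does not lie in $\w\Xi$,'' and your appeal to the root-system description in \cite[Sections 6.1.3, 6.1.4]{CW2} is exactly the input the paper uses (explicitly in the proof of \lemref{finitesum}).
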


 \begin{proof}  We will show (i). The proof of (ii) is similar.  Let  $v\in N$ be a weight vector of weight $\mu$ such that $\mu\in \Xi$. Then $\mu(E_i)=0$ for all $i\in \w{\I}^+_m\backslash {\I}^+_m$  and hence $E_i v=0$ for all $i\in \w{\I}^+_m\backslash {\I}^+_m$. For $\beta\in \w{\Phi}^{+}\backslash {\Phi}^{+}$, we have $\beta(E_i)\not= 0$ for some $i\in \w{\I}^+_m\backslash {\I}^+_m$. Therefore either the weight of $E_\beta v$ or $E^\beta v$ does not lie in $\wt{\Xi}$. Hence either $E_\beta v=0$ or $E^\beta v=0$.
  \end{proof}

   \begin{lem}\label{wOm-Om} Let $M, N \in \w{\mathcal {O}}$ and let $v\in M\otimes N$ be a weight vector of weight $\mu$.
\begin{itemize}
\item[(i)] For $\mu\in \Xi$, we have  $\w\Omega v=\Omega v$.
\item[(ii)] For $\mu\in \ov\Xi$, we have $\w\Omega v=\ov\Omega v$.
\end{itemize}
  \end{lem}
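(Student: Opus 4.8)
The plan is to prove both parts at once by showing that the \emph{difference} of the two symmetric tensors annihilates $v$: for (i) that $(\w\Omega-\Omega)v=0$, and for (ii) that $(\w\Omega-\ov\Omega)v=0$. First I would reduce to a pure tensor. Since $(M\otimes N)_\mu=\bigoplus_{\mu_1+\mu_2=\mu}M_{\mu_1}\otimes N_{\mu_2}$, by linearity it suffices to take $v=m\otimes n$ with $m\in M_{\mu_1}$, $n\in N_{\mu_2}$ weight vectors and $\mu_1+\mu_2=\mu$. Because $\mu\in\Xi$ and $\mu_1,\mu_2\in\w\Xi$, \lemref{weight=Xi} forces $\mu_1,\mu_2\in\Xi$ as well, so each tensor factor individually carries a weight in $\Xi$; this is exactly what will let me invoke \lemref{Xi} on each factor separately. (For (ii) the same reduction puts $\mu_1,\mu_2\in\ov\Xi$.)

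Next I would dispose of the Cartan (diagonal) part of the difference. The indices that contribute to $\w\Omega$ but not to $\Omega$ are the positive half-integers $j\in\w\I^+_m\setminus\I^+_m$; on the common integer indices $j\in\I^+_m$ one has $(-1)^{2j}=1$, so the $j$-terms of $\w\Omega$ and $\Omega$ coincide there. For a half-integral $j$ we have $\mu_1(E_j)=\mu_2(E_j)=0$ because $\mu_1,\mu_2\in\Xi$, so $E_jm=E_jn=0$ and each of $E_j\otimes E_j$, $K\otimes E_j$, $E_j\otimes K$ kills $m\otimes n$. For part (ii) I would additionally note that on the common index set $\ov\I^+_m$ the $K$-terms of $\w\Omega$ and $\ov\Omega$ differ by the factor $1+(-1)^{2j}$, which vanishes for half-integral $j$ (where $\delta_j=1$) and is absent for $j\in\{-m,\dots,-1\}$ (where $\delta_j=0$); hence $\w\Omega-\ov\Omega$ is once more supported only on the remaining indices (the positive integers $\w\I^+_m\setminus\ov\I^+_m$) and on roots outside $\ov\Phi^+$, where the same vanishing argument runs with $\ov\Xi$ in place of $\Xi$.

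The key step is the off-diagonal, root-vector part $\sum_{\beta\in\w\Phi^+\setminus\Phi^+}\bigl(E^\beta\otimes E_\beta+(-1)^{|E_\beta|}E_\beta\otimes E^\beta\bigr)$. For a fixed $\beta\in\w\Phi^+\setminus\Phi^+$ there is a half-integer index $i$ with $\beta(E_i)\neq0$, and the crucial point is that the \emph{sign} of $\beta(E_i)$ depends only on $\beta$. Since $\mu_1(E_i)=\mu_2(E_i)=0$, one and the same of the two vectors $E_\beta,E^\beta$ annihilates both $m$ and $n$: if $\beta(E_i)>0$ then the weights $\mu_1-\beta$ and $\mu_2-\beta$ do not lie in $\w\Xi$, so $E^\beta m=E^\beta n=0$, while if $\beta(E_i)<0$ then $E_\beta m=E_\beta n=0$. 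In either case both summands $(E^\beta\otimes E_\beta)(m\otimes n)$ and $(E_\beta\otimes E^\beta)(m\otimes n)$ vanish, irrespective of the super sign $(-1)^{|E_\beta|}$, because each is a simple tensor one of whose factors is zero. Summing over $\beta$ and combining with the Cartan computation yields $(\w\Omega-\Omega)v=0$, and identically $(\w\Omega-\ov\Omega)v=0$.

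I expect the main obstacle to be the bookkeeping that guarantees the \emph{same} choice of $E_\beta$ versus $E^\beta$ works on both tensor factors at once; this is precisely what forces me to record that the sign of $\beta(E_i)$ is an invariant of $\beta$ and to use $\mu_1,\mu_2\in\Xi$ rather than merely $\mu\in\Xi$. A secondary point to verify is that the root vectors $E_\beta,E^\beta$ fixed for $\SG\subset\DG$ agree with those for $\DG$ on the common roots $\beta\in\ov\Phi^+$, so that $\w\Omega-\ov\Omega$ really is supported only on $\w\Phi^+\setminus\ov\Phi^+$ together with the canceling $K$-terms; once that is secured, the a priori infinite sums and super signs are harmless, since \lemref{finitesum} already makes $\w\Omega,\Omega,\ov\Omega$ well-defined operators on $M\otimes N$.
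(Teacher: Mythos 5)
Your proof is correct and follows essentially the same route as the paper's: reduce to a pure tensor $v_1\otimes v_2$, use \lemref{weight=Xi} (the paper cites the equivalent \lemref{weight=0}) to place both factor weights in $\Xi$ (resp.\ $\ov\Xi$), kill the extra Cartan/$K$ terms because each $E_j$ with $j\in\w{\I}^+_m\setminus{\I}^+_m$ annihilates both factors, and kill each root term with $\beta\in\w{\Phi}^+\setminus{\Phi}^+$ by observing that a weight with negative $E_i$-value lies outside $\w\Xi$, so one of $E_\beta$, $E^\beta$ acts as zero. The only differences are cosmetic bookkeeping: you arrange for the same one of $E_\beta$/$E^\beta$ to kill both factors (tracking the sign of $\beta(E_i)$), whereas the paper argues term by term via \lemref{Xi}, and you spell out the coefficient matching on the common index set that the paper leaves implicit in ``the proof of (ii) is similar.''
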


  \begin{proof}  We will show (i). The proof of (ii) is similar.  We may assume that  the weight vector $v=v_1\otimes v_2$ such that $v_1\in M$ and $ v_2\in N$ are weight vectors of weight $\mu^1$ and $\mu^2$, respectively. We have  $\mu=\mu^1+ \mu^2\in \Xi$. By \lemref{weight=0}, we have $\mu^1, \mu^2\in  \Xi$. For $\beta\in \w{\Phi}^{+}\backslash {\Phi}^{+}$, we have $\beta(E_i)\not= 0$ for some $i\in \w{\I}^+_m\backslash {\I}^+_m$. Therefore either the weight of $E_\beta v_1$ or $E^\beta v_2$ does not lie in $\wt{\Xi}$. Thus either $E_\beta v_1=0$ or $E^\beta v_2=0$ and hence $E_\beta\otimes E^\beta (v_1\otimes v_2)=0$ for $\beta\in \w{\Phi}^{+}\backslash {\Phi}^{+}$. Similarly, $E^\beta\otimes E_\beta (v_1\otimes v_2)=0$ for $\beta\in \w{\Phi}^{+}\backslash {\Phi}^{+}$. By \lemref{Xi}, we have $E_i v_1=0$ and $E_i v_2=0$ for all $i\in \w{\I}^+_m\backslash {\I}^+_m$. Therefore $\w\Omega(v_1\otimes v_2)=\Omega (v_1\otimes v_2)$.
  \end{proof}

\begin{thm}\label{T-KZ} Let ${\psi}\in \mc{D}(U, \w M)$. We have
\begin{itemize}
\item[(i)] For $\mu\in \Xi$, ${\psi}\in \w{\rm KZ}(\w{M}_\mu)$ if and only if $T{\psi}\in {\rm KZ}(T(\w{M})_\mu)$,
\item[(ii)] For $\mu\in \ov\Xi$, ${\psi}\in \w{\rm KZ}(\w{M}_\mu)$ if and only if $\ov T{\psi}\in {\rm KZ}(\ov T(\w{M})_\mu)$.
\end{itemize}
\end{thm}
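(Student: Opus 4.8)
The plan is to reduce everything to the two-factor comparison already established in \lemref{wOm-Om} and then to observe that the two systems \eqnref{dKZE_n} and \eqnref{KZE_n} have identical derivative terms and identical right-hand sides when restricted to the weight space in question. I will prove (i); part (ii) is identical with $\Xi$, $T$, $\Omega$ replaced by $\ov\Xi$, $\ov T$, $\ov\Omega$ and \lemref{wOm-Om}(i) replaced by \lemref{wOm-Om}(ii). Since $\mu\in\Xi$, the definition \eqnref{def:T} of $T$ shows that $T(\w M)_\mu=\w M_\mu$ as vector spaces and that $T\psi=\psi$ under this identification (and $M_i=T(\w M_i)\in\OO$ by \propref{thm:equivalence}, so that the $\G$-KZ system \eqnref{KZE_n} is indeed defined). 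In particular the left-hand sides $\kappa\,\partial\psi/\partial z_i$ and $\kappa\,\partial(T\psi)/\partial z_i$ coincide, so it suffices to prove the operator identity $\w\Omega^{(ij)}\psi=\Omega^{(ij)}(T\psi)$ for all $i\neq j$, which under the above identification reads simply $\w\Omega^{(ij)}\psi=\Omega^{(ij)}\psi$.

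To establish this identity I would argue exactly as in the proof of \propref{tr-KZ}: write $\psi=\sum_{r=1}^p\psi_r$ with each $\psi_r$ valued in $(\w M_1)_{\mu^{r,1}}\otimes\cdots\otimes(\w M_\ell)_{\mu^{r,\ell}}$ and $\sum_s\mu^{r,s}=\mu$. Because $\mu\in\Xi$, iterating \lemref{weight=0} forces every component weight $\mu^{r,s}$ to lie in $\Xi$, and then \lemref{weight=Xi} gives $\mu^{r,i}+\mu^{r,j}\in\Xi$ for every pair $i\neq j$. Fixing such a pair and viewing the $i$th and $j$th tensor slots as the two factors of \lemref{wOm-Om}, I note that $\w\Omega^{(ij)}$ acts as the two-factor operator $\w\Omega$ on those slots while preserving the weights of the remaining slots; since the $(i,j)$-weight $\mu^{r,i}+\mu^{r,j}$ lies in $\Xi$, \lemref{wOm-Om}(i) yields $\w\Omega^{(ij)}\psi_r=\Omega^{(ij)}\psi_r$. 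Summing over $r$ gives $\w\Omega^{(ij)}\psi=\Omega^{(ij)}\psi$, hence $\w{\mathcal{H}}^i\psi={\mathcal{H}}^i\psi$ for all $i$, and therefore $\psi$ solves \eqnref{dKZE_n} precisely when $T\psi$ solves \eqnref{KZE_n}.

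The only point needing genuine care---and it is bookkeeping rather than a real difficulty---is the passage from the full $\ell$-fold tensor product to the two-factor statement of \lemref{wOm-Om}: one must check that $\w\Omega^{(ij)}$ really acts as $\w\Omega$ on the $(i,j)$-slots and is weight-preserving on the others, and that the pairwise weight $\mu^{r,i}+\mu^{r,j}$ lands in $\Xi$. Both are immediate from \lemref{weight=0}, \lemref{weight=Xi}, and the fact that $\w\Omega$ carries weight $0$. All the representation-theoretic content of the theorem is thus concentrated in \lemref{wOm-Om}, and what remains is a formal, term-by-term comparison of the two differential systems.
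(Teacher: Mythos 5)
Your proof is correct and follows essentially the same route as the paper's: decompose $\psi=\sum_r\psi_r$ into pure tensor-weight components, use \lemref{weight=0} (equivalently \lemref{weight=Xi}) to place all component weights in $\Xi$ so that $T\psi=\psi$, and then apply \lemref{wOm-Om} slot-by-slot to get $\w\Omega^{(ij)}\psi=\Omega^{(ij)}\psi$. The only difference is expository: you make explicit the bookkeeping step (that $\w\Omega^{(ij)}$ acts as the two-factor $\w\Omega$ on the $(i,j)$-slots and that the pairwise weight $\mu^{r,i}+\mu^{r,j}$ lies in $\Xi$), which the paper leaves implicit.
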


\begin{proof} We will show (i). The proof of (ii) is similar.
Write $\psi=\sum_{r=1}^p\psi_r$ such that
$\psi_r(z_1,\ldots,z_\ell)\in (M_1)_{\mu^{r,1}}\otimes\cdots\otimes(M_\ell)_{\mu^{r,\ell}}$
for all $(z_1,\ldots,z_\ell)\in U$,
where $\mu^{r,s}$ are weights of $M_s$ for $s=1,\ldots, \ell$ and $r=1,\ldots, p$.
Note that $\sum_{s=1}^\ell \mu^{r,s}=\mu$  for $r=1,\ldots, p$.
By \lemref{weight=0}, we have $\mu^{r,s}\in  \Xi$, for all $s=1,\ldots, \ell$ and $r=1,\ldots, p$.
Therefore $T{\psi}=\psi=\sum_{r=1}^p\psi_r$.
By \lemref{wOm-Om}, $\w\Omega^{(ij)}\psi=\Omega^{(ij)}\psi=\Omega^{(ij)}T{\psi}$
for $1\le i\not= j\le \ell$
since $\mu^{r,i}, \mu^{r,j}\in  \Xi$, for $r=1,\ldots, p$.
Hence ${\psi}\in \w{\rm KZ}(M_\mu)$ if and only if $T\psi\in \w{\rm KZ}\big(T(\w M)_\mu\big)$.
\end{proof}

The following lemma follows from the proof of \cite[Theorem 4.6]{CLW} (see also \cite[Lemma 3.2]{CL}).

\begin{lem}
\label{matching:weights}
Let $\la\in P^+$ and $V=\w\Delta(\w\la)$ or $\w L(\w\la)$. Then ${\rm dim} V_{\w\la}={\rm dim} V_{\la}={\rm dim} V_{\ov\la}=1$ and there are $\emph{X}_\la, \emph{Y}_\la, \ov{\emph{X}}_\la, \ov{\emph{Y}}_\la\in U(\w{\mf{l}})$ such that
\begin{equation*}
\begin{aligned}[c]
\varphi : \, V_{\w\la}& \longrightarrow V_{\la}\\
 v & \mapsto \emph{Y}_\la v
\end{aligned}
\qquad\hbox{and} \qquad
\begin{aligned}[c]
\phi : \, V_{\w\la}& \longrightarrow V_{\ov\la}\\
 v & \mapsto \ov{\emph{Y}}_\la v
\end{aligned}
\end{equation*}
are linear isomorphisms with $\varphi^{-1}(u)=\emph{X}_\la u$ for $u\in V_{\la}$ and $\phi^{-1}(w)=\ov{\emph{X}}_\la w$ for $w\in V_{\ov\la}$.
\end{lem}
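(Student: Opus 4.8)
The plan is to realize $\w\la$, $\la$ and $\ov\la$ as the highest weights of the \emph{single} module $V$ with respect to three Borel subalgebras of $\DG$ that differ only by a finite sequence of odd reflections along the tail, i.e. by reflections with respect to isotropic odd roots lying in the Levi $\w{\mf l}$. Relative to the standard Borel $\w{\mf b}$, whose tail orders the indices $\hf<1<\tfrac32<2<\cdots$, the highest weight of $V$ is $\w\la$ (the modified Frobenius coordinates $\theta(\la^+)$ in the tail). Reordering the tail so that all integers precede all half-integers produces a Borel $\mf b'$, and reordering so that all half-integers precede all integers produces a Borel $\ov{\mf b}'$. The combinatorial heart of super duality at the level of highest weights, carried out in the proof of \cite[Theorem 4.6]{CLW} and \cite[Lemma 3.2]{CL}, is that the highest weight of $V$ relative to $\mf b'$ is exactly $\la$ (the partition $\la^+$ on the integer indices) and relative to $\ov{\mf b}'$ is exactly $\ov\la$ (the conjugate $(\la^+)'$ on the half-integer indices). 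Since every reflection used is a tail reflection, the head part of the weight and the $\La_0$-value are never disturbed, consistent with the common head data shared by $\w\la$, $\la$ and $\ov\la$.

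The main tool is the odd reflection lemma (see, e.g., \cite{CW2}): if $W$ is a highest weight module with highest weight $\mu$ and highest weight vector $w$ for a Borel $\mf c$, and $\alpha$ is an isotropic odd simple root of $\mf c$, then $W$ is again a highest weight module for $r_\alpha\mf c$, with highest weight $\mu-\alpha$ and highest weight vector $f_\alpha w$ when $(\mu,\alpha)\neq 0$, and unchanged data when $(\mu,\alpha)=0$; in the first case $e_\alpha f_\alpha w=(\mu,\alpha)\,w\neq 0$, so $f_\alpha$ and $e_\alpha$ are mutually inverse isomorphisms up to a nonzero scalar between the one-dimensional spaces $W_\mu$ and $W_{\mu-\alpha}$. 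Both $V=\w\Delta(\w\la)$ and $V=\w L(\w\la)$ are highest weight $\DG$-modules for $\w{\mf b}$ (so $\dim V_{\w\la}=1$): for $\w L(\w\la)$ this is immediate, while for $\w\Delta(\w\la)=\mathrm{Ind}^{\DG}_{\w{\mf p}}L(\w{\mf l},\w\la)$ a tail reflection leaves the parabolic $\w{\mf p}$ unchanged and only reorders the inner Borel of $\w{\mf l}$, so $\w\Delta(\w\la)$ stays a parabolic Verma module, hence a highest weight $\DG$-module, after each reflection. Iterating the lemma along the finite sequence carrying $\w{\mf b}$ to $\mf b'$ therefore gives $\dim V_\la=1$ with highest weight vector $Y_\la\,v_{\w\la}$, where $Y_\la\in U(\w{\mf l})$ is the ordered product of the lowering vectors $f_\alpha$ over the active reflections. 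As each active $f_\alpha$ is an isomorphism of consecutive one-dimensional highest weight spaces, $\varphi\colon V_{\w\la}\to V_\la$, $v\mapsto Y_\la v$, is an isomorphism; taking the reverse product $X_\la\in U(\w{\mf l})$ of the raising vectors $e_\alpha$ yields $X_\la Y_\la=\big(\prod_k(\mu_k,\alpha_k)\big)\,\mathrm{id}$ on the one-dimensional space $V_{\w\la}$, and after rescaling $X_\la$ we obtain $\varphi^{-1}(u)=X_\la u$. The barred assertion is identical, using the sequence of tail reflections from $\w{\mf b}$ to $\ov{\mf b}'$ to produce $\ov Y_\la$ and $\ov X_\la$; that all these vectors lie in $U(\w{\mf l})$ is automatic, since every reflection is by a root of $\w{\mf l}$.

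I expect the only real obstacle to be the combinatorial bookkeeping of the highest weights: one must verify that the chosen sequence of odd reflections sends $\w\la$ precisely to $\la$ (resp. $\ov\la$), tracking which reflections are active and confirming that the dominance and integrality hypotheses keep each intermediate highest weight space one-dimensional. This is exactly the weight computation performed in the proofs cited above, so I would invoke it rather than reprove it; once the endpoints are identified, the passage through the odd reflection lemma and the construction of $X_\la,Y_\la,\ov X_\la,\ov Y_\la$ are formal. A secondary point worth stating explicitly is the stability of $\w\Delta(\w\la)$ under tail reflections noted above, which is precisely what allows the single argument to cover both the parabolic Verma and the irreducible case.
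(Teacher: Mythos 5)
Your proposal is correct and follows essentially the same route as the paper: the paper derives this lemma directly from the proof of \cite[Theorem 4.6]{CLW} (see also \cite[Lemma 3.2]{CL}), which is precisely the odd-reflection argument you reconstruct — tail odd reflections carry $\w\la$ to $\la$ and $\ov\la$, with the active lowering/raising vectors $f_\alpha$, $e_\alpha$ in $U(\w{\mf l})$ assembling into $Y_\la, X_\la, \ov Y_\la, \ov X_\la$. Your explicit treatment of the parabolic Verma case via stability of $\w{\mf p}$ under tail reflections, and of inactive reflections (so that only finitely many reflections matter in the infinite-rank setting), fills in exactly the details the paper leaves to the cited proofs.
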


The following theorem is a consequence of \thmref{T-KZ}, \lemref{matching:weights} and the fact that $\w{\rm KZ}(\w{M})$ is a $\DG$-module.

\begin{thm}\label{T-KZ-odd} Let $\la\in P^+$.
 \begin{itemize}
     \item[(i)] If ${\psi}\in  \w{\rm KZ}(\w{M}_{\w\la})$, then  $\emph{Y}_\la{\psi}\in  {\rm KZ}(T(\w M)_\la)$ and $\ov{\emph{Y}}_\la{\psi}\in \ov{\rm KZ}(\ov T(\w M)_{\ov\la})$.
     \item[(ii)] If ${\psi}\in \mc{D}(U, \w M_{\la})$ and $T{\psi}\in {\rm KZ}(T(\w M)_\la)$, then $\emph{X}_\la{\psi}\in \w{\rm KZ}(\w{M}_{\w\la})$.
      \item[(iii)] If ${\psi}\in \mc{D}(U, \w M_{\ov \la})$ and $\ov T {\psi}\in \ov{\rm KZ}(\ov T({\w M})_{\ov\la})$, then $\ov{\emph{X}}_\la{\psi}\in \w{\rm KZ}(\w{M}_{\w\la})$.
   \end{itemize}
 \end{thm}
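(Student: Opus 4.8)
The plan is to chain together \thmref{T-KZ}, \propref{KZ-g} and \lemref{matching:weights}, treating the whole statement as weight bookkeeping. First I would record that the elements $\emph{Y}_\la,\ov{\emph{Y}}_\la,\emph{X}_\la,\ov{\emph{X}}_\la\in U(\w{\mf l})\subseteq U(\DG)$ of \lemref{matching:weights} are homogeneous, of weights $\la-\w\la$, $\ov\la-\w\la$, $\w\la-\la$ and $\w\la-\ov\la$ respectively; this is read off from the maps $v\mapsto\emph{Y}_\la v$ and $v\mapsto\ov{\emph{Y}}_\la v$ (and their inverses) there, which carry the weight $\w\la$ to $\la$ and $\ov\la$. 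Since the weight of a homogeneous element of $U(\DG)$ is intrinsic, these four operators shift weights by the indicated amounts on every $\DG$-module, in particular on $\w M$ and on the $\DG$-module $\w{\rm KZ}(\w M)$; note that only this weight-shifting property, and not the bijectivity asserted in \lemref{matching:weights}, is needed. I would also record that $\la\in\Xi$ and $\ov\la\in\ov\Xi$ (compare \eqnref{weight:Im}, \eqnref{weight:ovIm} with \eqref{weight}), so that $T$ and $\ov T$ act as the identity on functions valued in $\w M_\la$ and $\w M_{\ov\la}$.

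For (i), I start from $\psi\in\w{\rm KZ}(\w M_{\w\la})\subseteq\w{\rm KZ}(\w M)$. Because $\w{\rm KZ}(\w M)$ is a $\DG$-module by \propref{KZ-g} and $\emph{Y}_\la$ has weight $\la-\w\la$, the function $\emph{Y}_\la\psi$ lies in $\w{\rm KZ}(\w M)$ and is valued in the $\la$-weight space, i.e. $\emph{Y}_\la\psi\in\w{\rm KZ}(\w M_\la)$. As $\la\in\Xi$, \thmref{T-KZ}(i) then gives $T(\emph{Y}_\la\psi)\in{\rm KZ}(T(\w M)_\la)$, and since $T$ is the identity here $T(\emph{Y}_\la\psi)=\emph{Y}_\la\psi$, which is the first claim. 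The same run with $\ov{\emph{Y}}_\la$, $\ov\la\in\ov\Xi$ and \thmref{T-KZ}(ii) yields $\ov{\emph{Y}}_\la\psi\in\ov{\rm KZ}(\ov T(\w M)_{\ov\la})$.

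For (ii), given $\psi\in\mc{D}(U,\w M_\la)$ with $T\psi\in{\rm KZ}(T(\w M)_\la)$, I use $\la\in\Xi$ to identify $T\psi=\psi$, so $\psi\in{\rm KZ}(T(\w M)_\la)$; \thmref{T-KZ}(i) then returns $\psi\in\w{\rm KZ}(\w M_\la)\subseteq\w{\rm KZ}(\w M)$. Applying the weight-$(\w\la-\la)$ operator $\emph{X}_\la$ inside the $\DG$-module $\w{\rm KZ}(\w M)$ gives $\emph{X}_\la\psi\in\w{\rm KZ}(\w M)$ valued in the $\w\la$-weight space, that is $\emph{X}_\la\psi\in\w{\rm KZ}(\w M_{\w\la})$. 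Part (iii) is the verbatim analogue with $\ov T$, $\ov{\emph{X}}_\la$, $\ov\la\in\ov\Xi$ and \thmref{T-KZ}(ii).

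The step I expect to be the main obstacle, and really the only non-formal point, is the first one: justifying that the weight-shifting elements of \lemref{matching:weights}, although manufactured from their action on the highest weight modules $\w\Delta(\w\la)$ or $\w L(\w\la)$, act with the same module-independent weight on the arbitrary tensor product $\w M$. This is where I must be explicit that $\emph{X}_\la,\emph{Y}_\la,\ov{\emph{X}}_\la,\ov{\emph{Y}}_\la$ are genuinely homogeneous elements of $U(\w{\mf l})$, so that $\emph{Y}_\la\psi$ really is valued in a single weight space $\w M_\la$, and likewise for the others; everything after that is a formal application of the two cited theorems together with the identifications $T\psi=\psi$ and $\ov T\psi=\psi$ coming from $\la\in\Xi$ and $\ov\la\in\ov\Xi$.
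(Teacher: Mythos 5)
Your proposal is correct and takes essentially the same route the paper does: the paper deduces this theorem in one line from \thmref{T-KZ}, \lemref{matching:weights} and the fact that $\w{\rm KZ}(\w{M})$ is a $\DG$-module (\propref{KZ-g}), which are exactly the three ingredients you chain together, with the same identifications $T\psi=\psi$ and $\ov{T}\psi=\psi$ coming from $\la\in\Xi$ and $\ov\la\in\ov\Xi$. Your explicit remark that $\emph{X}_\la,\emph{Y}_\la,\ov{\emph{X}}_\la,\ov{\emph{Y}}_\la$ may be taken homogeneous in $U(\w{\mf l})$, so that they shift weights uniformly on the tensor product $\w M$ and not merely on the highest weight modules of \lemref{matching:weights}, correctly fills in the one detail the paper leaves implicit.
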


Recall $\w{\mc S}(\w{M}_{\w\la})$ (resp., ${\mc S}(T(\w M)_{\la})$ and $\ov{\mc S}(\ov T(\w M)_{\ov\la})$,  for $\la\in P^+$,
denotes the set of functions $\psi\in \w{\rm KZ}(\w{M}_{\w\la})$ (resp., ${\rm KZ}(T(\w M)_\la)$ and $\ov{\rm KZ}(\ov T(\w M)_{\ov\la})$) satisfying $E_\beta \psi=0$ for $\beta\in\w{\Phi}^+$ (resp., ${\Phi}^+$ and $\ov{\Phi}^+$).

 The following theorem follows from \thmref{T-KZ-odd}, \lemref{matching:weights} and a consequence of \propref{thm:equivalence} that there are natural isomorphisms ${\rm Hom}_{\w{\mc \OO}}(\w\Delta(\w\la), \w M)\cong {\rm Hom}_{{\mc \OO}}(\Delta(\la),  T(\w M))$ and ${\rm Hom}_{\w{\mc \OO}}(\w\Delta(\w\la), \w M)\cong {\rm Hom}_{\ov{\mc \OO}}(\ov\Delta(\ov\la), \ov T(\w M))$ for $\la\in P^+$.
 \begin{thm}
\label{sing-sol}
Let $\la\in P^+$. There are linear isomorphisms
\begin{equation*}
\begin{aligned}[c]
\varphi : \, \w{\mc S}(\w{M}_{\w\la}) & \longrightarrow {\mc S}(T(\w M)_{\la})\\
\psi & \mapsto \emph{Y}_\la \psi
\end{aligned}
\qquad\hbox{and} \qquad
\begin{aligned}[c]
 \qquad\phi :\, \w{\mc S}(\w{M}_{\w\la})& \longrightarrow \ov{\mc S}(\ov T(\w M)_{\ov\la})\\
 \psi & \mapsto \ov{\emph{Y}}_\la \psi
\end{aligned}
\end{equation*}
with $\varphi^{-1}(\psi )=\emph{X}_\la \psi $  for $\psi\in{\mc S}(T(\w M)_{\la})$ and $\phi^{-1}(\ov \psi )=\ov{\emph{X}}_\la \ov \psi $ for $\ov \psi\in\ov{\mc S}(\ov T(\w M)_{\ov\la})$.
\end{thm}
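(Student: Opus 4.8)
The plan is to show that $\varphi$ and $\phi$ are well-defined linear isomorphisms with the stated inverses by reducing, pointwise in $(z_1,\dots,z_\ell)$, to the weight-space statements of \lemref{matching:weights} and to the Hom-space form of super duality recorded just before the theorem. First I would check that $\varphi$ is well defined, i.e. that $Y_\la\psi\in{\mc S}(T(\w M)_\la)$ whenever $\psi\in\w{\mc S}(\w M_{\w\la})$; linearity is immediate since $Y_\la\in U(\w{\mf l})$ acts linearly. That $Y_\la\psi$ solves the KZ equations on $T(\w M)$ is exactly \thmref{T-KZ-odd}(i), using $\w{\mc S}(\w M_{\w\la})\subseteq\w{\rm KZ}(\w M_{\w\la})$. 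The substantive point is that $Y_\la\psi$ stays \emph{singular}; since $\w M$ is only a tensor product and not a highest weight module, \lemref{matching:weights} does not apply to it directly, so I would argue pointwise. For fixed $z$ the value $v:=\psi(z)$ is a singular vector of weight $\w\la$ in $\w M$, and by the standard identification of singular vectors with homomorphisms out of a parabolic Verma module we have $v=f(v_{\w\la})$ for a unique $f\in{\rm Hom}_{\w\OO}(\w\Delta(\w\la),\w M)$, where $v_{\w\la}$ is the highest weight vector of $\w\Delta(\w\la)$. Using the super-duality isomorphism ${\rm Hom}_{\w\OO}(\w\Delta(\w\la),\w M)\cong{\rm Hom}_{\OO}(\Delta(\la),T(\w M))$, $f\mapsto T(f)=f|_{T(\w M)}$, together with $T(\w\Delta(\w\la))=\Delta(\la)$ and the highest weight vector $v_\la=Y_\la v_{\w\la}$ of $\Delta(\la)$ from \lemref{matching:weights}, I obtain
\[
Y_\la v=Y_\la f(v_{\w\la})=f(Y_\la v_{\w\la})=T(f)(v_\la),
\]
which is the image of the highest weight vector of $\Delta(\la)$ under a $\G$-homomorphism, hence a singular vector of weight $\la$ in $T(\w M)$; letting $z$ vary gives $Y_\la\psi\in{\mc S}(T(\w M)_\la)$.

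For the inverse, given $\chi\in{\mc S}(T(\w M)_\la)$ I note that $T(\w M)_\la=\w M_\la$ and $\la\in\Xi$, so $T\chi=\chi\in{\rm KZ}(T(\w M)_\la)$ and \thmref{T-KZ-odd}(ii) yields $X_\la\chi\in\w{\rm KZ}(\w M_{\w\la})$. Its singularity follows by the same device run backwards: writing $\chi(z)=g(v_\la)$ with $g\in{\rm Hom}_{\OO}(\Delta(\la),T(\w M))$ corresponding to $f\in{\rm Hom}_{\w\OO}(\w\Delta(\w\la),\w M)$, the relation $X_\la Y_\la=\mathrm{id}$ on $(\w\Delta(\w\la))_{\w\la}$ gives
\[
X_\la\chi(z)=X_\la f(Y_\la v_{\w\la})=f(X_\la Y_\la v_{\w\la})=f(v_{\w\la}),
\]
a singular vector of weight $\w\la$. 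That $\varphi$ and $\psi\mapsto X_\la\psi$ are mutually inverse then follows pointwise from $X_\la Y_\la=\mathrm{id}$ on $V_{\w\la}$ and $Y_\la X_\la=\mathrm{id}$ on $V_\la$ in \lemref{matching:weights}, transported to $\w M$ and $T(\w M)$ through the homomorphisms $f$ and $g$. The construction of $\phi$ and $\phi^{-1}$ and the verification of their properties are identical after replacing $T,\Delta(\la),X_\la,Y_\la$ by $\ov T,\ov\Delta(\ov\la),\ov X_\la,\ov Y_\la$ and invoking parts (i) and (iii) of \thmref{T-KZ-odd}.

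I expect the main obstacle to be exactly the step flagged above: upgrading the weight-space isomorphisms of \lemref{matching:weights}, which are asserted only for the universal modules $\w\Delta(\w\la)$ and $\w L(\w\la)$, to singular vectors living inside the tensor product $\w M$. The resolution is the naturality of the super-duality Hom isomorphism, which factors every singular vector of weight $\w\la$ through $\w\Delta(\w\la)$ and thereby lets $Y_\la v$, its singularity, and the inversion identities all be inherited from the universal case. The only points requiring care are that $T(f)$ is genuinely the restriction of $f$ to the $\h^*$-weight spaces, and that $X_\la,Y_\la\in U(\w{\mf l})\subseteq U(\DG)$ commute past the homomorphisms $f$ and $g$, which is precisely what legitimizes the two displayed pointwise computations.
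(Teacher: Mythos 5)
Your proposal is correct and takes essentially the same route as the paper: the paper deduces the theorem precisely from \thmref{T-KZ-odd}, \lemref{matching:weights}, and the natural isomorphisms ${\rm Hom}_{\w{\OO}}(\w\Delta(\w\la),\w M)\cong{\rm Hom}_{{\OO}}(\Delta(\la),T(\w M))$ and ${\rm Hom}_{\w{\OO}}(\w\Delta(\w\la),\w M)\cong{\rm Hom}_{\ov{\OO}}(\ov\Delta(\ov\la),\ov T(\w M))$ coming from \propref{thm:equivalence}, which is exactly your device of factoring singular vectors of $\w M$ through the parabolic Verma module. Your write-up merely makes explicit the pointwise verifications (well-definedness, preservation of singularity, and the inversion identities) that the paper leaves to the reader.
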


 The following lemma follows from the description of the positive root system of $\DG_n$ given in \cite[Sections 6.1.3 and 6.1.4]{CW2} and possible weights of a module given in \eqnref{weight}.

 \begin{lem}\label{k_0}
Let $n\in \N$ and $\la^i\in P^+$ such that  $\ov\la^i\in\ov P^+_n$, for $i=1,\ldots,\ell$.
Let $\ov V_i$ denote the parabolic Verma module $\ov\Delta_n(\ov\la^i)$  in $\ov\OO_n$ of highest weight $\ov\la^i$, for $i=1,\ldots,\ell$, and let $\ov V:=\ov V_1\otimes \cdots\otimes \ov V_\ell$. Let $\mu\in P^+$ such that $\ov\mu$ is a weight of $\ov V$ and let $k_0:=\sum_{j\ge \hf}\ov\mu(E_j)$. Then $\mu, \la^1,\ldots, \la^\ell\in P^+_{k_0}$.
 \end{lem}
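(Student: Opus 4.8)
The plan is to control, for every weight $\nu$ occurring in these modules, the single quantity
$\deg(\nu):=\sum_{j\in\ov\I^+_m(n),\,j\ge\hf}\nu(E_j)$, which is exactly the ``tail charge'' computed by $k_0$. Two facts are immediate from the definitions. First, every module here lies in $\ov\OO_n$, so its weights lie in $\ov\Xi_n$, and by \eqnref{weight} the coefficient of each $\epsilon_r$ with $r\ge\hf$ is a non-negative integer; hence $\deg(\nu)\ge 0$ for every weight $\nu$ of $\ov V$ and of each $\ov V_i$. Second, reading off \eqnref{weight:ovIm} gives $\deg(\ov\la^i)=\sum_{j\ge\hf}((\la^i)^+)'_{j+\hf}=|(\la^i)^+|$, and likewise $k_0=\deg(\ov\mu)=|\mu^+|$. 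Since a partition has at most as many parts as boxes, $\ell(\mu^+)\le|\mu^+|=k_0$, which already yields $\mu\in P^+_{k_0}$.

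The heart of the matter is the inequality $\deg(\nu^i)\ge|(\la^i)^+|$ for every weight $\nu^i$ of the highest weight module $\ov V_i=\ov\Delta_n(\ov\la^i)$. I would deduce it from the claim that every positive root $\beta\in\ov\Phi^+_n$ satisfies $\deg(\beta)\le 0$: granting this, any weight $\nu^i=\ov\la^i-\sum_\beta c_\beta\beta$ with $c_\beta\in\Z_+$ obeys $\deg(\nu^i)=\deg(\ov\la^i)-\sum_\beta c_\beta\deg(\beta)\ge\deg(\ov\la^i)=|(\la^i)^+|$. To prove $\deg(\beta)\le 0$, introduce the grading element $D:=\sum_{j\ge\hf}E_j$, so that $\deg(\beta)=\beta(D)$ is the $\mathrm{ad}\,D$-eigenvalue on $(\SG_n)_\beta$. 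Writing $D_a$ for the eigenvalue of $D$ on $v_a$, these eigenvalues are $-1$ on the barred tail vectors $v_{\ov j}$ ($j\ge\hf$, when present), $0$ on the head vectors and on $v_{\ov 0}$, and $+1$ on the tail vectors $v_j$ ($j\ge\hf$); read along the fixed total order on $\ov\I_m(n)$ they are non-decreasing. On the other hand, the positive system attached to the diagram $\mf k^\xx\!-\!\ov{\mf T}_n$ is the one for which every simple root vector is strictly upper triangular with respect to that order — this is checked on each simple root, e.g. $E_{\alpha_\times}=E_{-1,\hf}$ — so every positive root vector is strictly upper triangular. Thus each elementary component $E_{ab}$ of $E_\beta$ has $a<b$, whence $\deg(\beta)=D_a-D_b\le 0$ by monotonicity; this is precisely the content of the explicit root descriptions in \cite[Sections~6.1.3 and 6.1.4]{CW2}, now packaged uniformly in $\xx$.

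Finally I would assemble the pieces. Because $\ov\mu$ is a weight of $\ov V_1\otimes\cdots\otimes\ov V_\ell$, we may write $\ov\mu=\nu^1+\cdots+\nu^\ell$ with each $\nu^i$ a weight of $\ov V_i$, so $k_0=\deg(\ov\mu)=\sum_{i=1}^\ell\deg(\nu^i)$. Every summand is $\ge 0$, and by the previous paragraph each one satisfies $\deg(\nu^i)\ge|(\la^i)^+|\ge\ell((\la^i)^+)$; therefore $k_0\ge\deg(\nu^i)\ge\ell((\la^i)^+)$ for every $i$, which says exactly that $(\la^i)^+_j=0$ for $j>k_0$, i.e. $\la^i\in P^+_{k_0}$. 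Combined with $\mu\in P^+_{k_0}$ from the first paragraph, this proves the lemma.

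I expect the one genuinely delicate step to be the claim $\deg(\beta)\le 0$ for all $\beta\in\ov\Phi^+_n$: it is where the specific choice of Borel — the total order on $\ov\I_m(n)$ together with the matching positive system read off from $\mf k^\xx\!-\!\ov{\mf T}_n$ — actually enters, and it is the only point that a priori looks type-dependent. The grading-operator reformulation above is meant precisely to collapse it to the single monotonicity observation about $D$, so that no case-by-case enumeration of roots for $\xx=\mf{a,b,b^\bullet,c,d}$ is needed.
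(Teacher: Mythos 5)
Your proof is correct and is in substance the paper's own proof: the paper disposes of this lemma in one sentence by citing exactly the two ingredients you establish, namely that weights of modules in $\ov{\mathcal O}_n$ have non-negative integer coefficients on the $\epsilon_r$ with $r\ge\hf$ (from \eqnref{weight}), and that every root in $\ov{\Phi}^+_n$ has non-positive tail charge $\sum_{j\ge\hf}\beta(E_j)$ (the content of the root descriptions in \cite[Sections 6.1.3 and 6.1.4]{CW2}). Your grading-operator reformulation of the latter is just a uniform way of reading off those root descriptions, so the two arguments coincide.
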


The following corollary is a direct consequence of \propref{tr-KZ}, \thmref{sing-sol} and Lemma \ref{k_0}.
\begin{cor}\label{tr-iso}
Let $n\in \N$ and $\la^i\in P^+$ such that  $\ov\la^i\in\ov P^+_n$, for $i=1,\ldots,\ell$.
Let $\ov V_i$ denote the parabolic Verma module
$\ov\Delta_n(\ov\la^i)$ or irreducible module $\ov L_n(\ov\la^i)$
in $\ov\OO_n$ of highest weight $\ov\la^i$, for $i=1,\ldots,\ell$,
and let $\ov V:=\ov V_1\otimes \cdots\otimes \ov V_\ell$.
Let $\mu\in P^+$ such that $\ov\mu\in\ov P^+_n$ and $\ov\mu$ is a weight of $\ov V$ and let $k_0:=\sum_{j\ge \hf}\ov\mu(E_j)$.
For a fixed $k\ge k_0$, let $V_i$ denote the parabolic Verma module
$\Delta_k(\la^i)$ or irreducible module $L_k(\la^i)$ in $\OO_k$ of highest weight $\la^i$, for $i=1,\ldots,\ell$,
and let $ V:= V_1\otimes \cdots\otimes  V_\ell$.
Then there is a linear isomorphism between the singular solution spaces $\ov{\mc S}(\ov{V}_{\ov\mu})$ and ${\mc S}(V_\mu)$.
\end{cor}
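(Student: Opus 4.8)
The plan is to factor the desired finite-rank isomorphism through the infinite-rank categories, using super duality to cross between the orthosymplectic and orthogonal/symplectic sides and the truncation functors to pass between infinite and finite rank. First I would lift the data: for each $i$ set $\w M_i:=\w\Delta(\w\la^i)$ or $\w L(\w\la^i)$ in $\w\OO$ according to whether $\ov V_i$ is a parabolic Verma module or an irreducible module, and put $\w M:=\w M_1\otimes\cdots\otimes\w M_\ell\in\w\OO$. By \propref{thm:equivalence} the tensor functors $T$ and $\ov T$ carry each factor to the corresponding infinite-rank module, so $T(\w M)=\Delta(\la^1)\otimes\cdots\otimes\Delta(\la^\ell)$ (or its irreducible analogue) and $\ov T(\w M)=\ov\Delta(\ov\la^1)\otimes\cdots\otimes\ov\Delta(\ov\la^\ell)$ (or its irreducible analogue). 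Applying \thmref{sing-sol} with $\la=\mu\in P^+$ then furnishes linear isomorphisms
\begin{equation*}
\ov{\mc S}\big(\ov T(\w M)_{\ov\mu}\big)\ \xleftarrow{\ \phi\ }\ \w{\mc S}(\w M_{\w\mu})\ \xrightarrow{\ \varphi\ }\ {\mc S}\big(T(\w M)_\mu\big),
\end{equation*}
whose composite $\varphi\circ\phi^{-1}$ is an explicit linear isomorphism $\ov{\mc S}(\ov T(\w M)_{\ov\mu})\to{\mc S}(T(\w M)_\mu)$, given on the nose by $\ov\psi\mapsto \emph{Y}_\mu\emph{\ov{X}}_\mu\ov\psi$ with the operators of \lemref{matching:weights}.

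It then remains to identify the two infinite-rank singular solution spaces with the finite-rank spaces $\ov{\mc S}(\ov V_{\ov\mu})$ and ${\mc S}(V_\mu)$ by truncation. Since $\ov\la^i\in\ov P^+_n$ by hypothesis, \propref{thm:truncation} gives $\ov{\mf{tr}}^\infty_n(\ov T(\w M_i))=\ov\Delta_n(\ov\la^i)=\ov V_i$ (and its $\ov L_n$-analogue); similarly, by \lemref{k_0} (whose conclusion in the irreducible case follows from the Verma case, the former having a subset of the weights of the latter) we have $\mu,\la^1,\ldots,\la^\ell\in P^+_{k_0}\subseteq P^+_k$ for $k\ge k_0$, so \propref{thm:truncation} gives ${\mf{tr}}^\infty_k(T(\w M_i))=\Delta_k(\la^i)=V_i$ (and its $L_k$-analogue). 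Because $\ov{\mf{tr}}^\infty_n$ and ${\mf{tr}}^\infty_k$ are tensor functors, these factorwise identities pass to the tensor products, yielding $\ov{\mf{tr}}^\infty_n(\ov T(\w M))=\ov V$ and ${\mf{tr}}^\infty_k(T(\w M))=V$. Finally, $\ov\mu\in\ov P^+_n\subseteq\ov\Xi_n$ and $\mu\in P^+_k\subseteq\Xi_k$, so \propref{tr-KZ}(iii) and (ii) apply with the relevant truncation acting as the identity on these weight spaces, giving $\ov{\mc S}(\ov T(\w M)_{\ov\mu})=\ov{\mc S}(\ov V_{\ov\mu})$ and ${\mc S}(T(\w M)_\mu)={\mc S}(V_\mu)$. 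Splicing these equalities into the composite of the first paragraph produces the asserted isomorphism $\ov{\mc S}(\ov V_{\ov\mu})\cong{\mc S}(V_\mu)$.

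The proof is therefore an assembly of cited results, and the only points needing care—which I regard as the main, if modest, obstacle—are the two bookkeeping verifications that the truncation functors recover exactly $\ov V$ and $V$. These rest on the compatibility of truncation with tensor products (so that truncating the lifted tensor product factor-by-factor is legitimate) and on the weight-membership conditions $\ov\mu\in\ov\Xi_n$ and $\mu\in\Xi_k$ that license \propref{tr-KZ}; it is precisely here that \lemref{k_0} together with the standing inequality $k\ge k_0$ is essential, since it guarantees that $\mu$ and each $\la^i$ are already supported within the finite truncation levels, so that the infinite-rank and finite-rank singular solution spaces literally coincide rather than merely being abstractly isomorphic.
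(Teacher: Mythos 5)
Your proposal is correct and is essentially the paper's own argument: the paper derives this corollary directly from \propref{tr-KZ}, \thmref{sing-sol} and \lemref{k_0}, which is exactly the chain you assemble (lift via \propref{thm:equivalence}, cross sides via \thmref{sing-sol} with $\la=\mu$, and identify infinite- and finite-rank singular solution spaces via truncation and \propref{tr-KZ}). Your added remarks — extending \lemref{k_0} to the irreducible case by weight containment, and checking $\ov\mu\in\ov\Xi_n$, $\mu\in\Xi_k$ so the truncations act as the identity — are precisely the bookkeeping the paper leaves implicit.
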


\subsection{KZ equations for modules over $\w{\mathcal {G}}_n$ ${\mathcal {G}}_n$ and $\ov{\mathcal {G}}_n$}\label{nce}
In this subsection, we show that there is a bijection between the sets of solutions
of the (super) KZ equations for tensor product of
modules in $\w\OO_n$ (resp., $\OO_n$ and $\ov\OO_n$)
and for tensor product of
the corresponding $\w{\mathcal {G}}_n$- (resp., ${\mathcal {G}}_n$- and $\ov{\mathcal {G}}_n$-)modules.

For $n\in \N$, let $\mathring{\w{\Omega}}_n$  (resp., $\mathring{\Omega}_n$ and $\mathring{\ov{\Omega}}_n$) denote
 the \emph{Casimir symmetric tensors} of $\w{\mathcal {G}}_n$ (resp., ${\mathcal {G}}_n$ and $\ov{\mathcal {G}}_n$) defined by
 \begin{align}\label{CYTN}\nonumber
 \mathring{\w{\Omega}}_n&:=\sum_{\beta\in\w{\Phi}^+_n}(E^\beta\otimes E_{\beta}+(-1)^{|E_\beta|} E_\beta \otimes E^{\beta})  +\sum_{j\in\w{\I}^+_m(n)}(-1)^{2j}E_j\otimes E_j,\\
 \mathring{\Omega}_n&:=\sum_{\beta\in{\Phi}^+_n} \big(E^\beta\otimes E_{\beta}+(-1)^{|E_\beta|} E_\beta\otimes E^{\beta}) +\sum_{j\in{\I}^+_m(n)}E_j\otimes E_j,\\\nonumber
 \mathring{\ov{\Omega}}_n&:=\sum_{\beta\in\ov{\Phi}^+_n}(E^\beta\otimes E_{\beta}+(-1)^{|E_\beta|} E_\beta \otimes E^{\beta})  +\sum_{j\in\ov{\I}^+_m(n)}(-1)^{2j}E_j\otimes E_j,
  \end{align}
 For $n\in \N$, it is easy to see that
 $\mathring{\w{\Omega}}_n$, $\mathring{{\Omega}}_n$ and  $\mathring{\ov{\Omega}}_n$
  are elements in
  $U(\w{\mathcal {G}}_n)\otimes U(\w{\mathcal {G}}_n)$,
  $U({\mathcal {G}}_n)\otimes U({\mathcal {G}}_n)$
  and
   $U(\ov{\mathcal {G}}_n)\otimes U(\ov{\mathcal {G}}_n)$,
   respectively, satisfying the following equations:
  \begin{align*}
  \mathring{\w{\Omega}}_n&
  =\hf\big(\Delta({\mathring{\w{\bf c}}}_n)- {\mathring{\w{\bf c}}}_n\otimes 1 -1\otimes {\mathring{\w{\bf c}}}_n\big), \\
 \mathring{\Omega}_n&=\hf(\Delta(\mathring{\bf c}_n)-\mathring{\bf c}_n\otimes 1-1\otimes \mathring{\bf c}_n),\\
\mathring{\ov{\Omega}}_n&=\hf(\Delta(\mathring{\ov{\bf c}}_n)-\mathring{\ov{\bf c}}_n\otimes 1-1\otimes \mathring{\ov{\bf c}}_n).
\end{align*}
Recall that $\Delta$ denotes the comultiplication on a universal enveloping algebra. Also it is easy to see
\begin{equation}\label{Om-Om}
 \iota\otimes \iota \big(\mathring{\w\Omega}_n \big)=\w\Omega_n,\ \ \ \
  \iota\otimes \iota \big(\mathring{\Omega}_n\big)=\Omega_n+nK\otimes K\ \ \mbox{and}\ \
  \iota\otimes \iota \big(\mathring{\ov\Omega}_n\big)=\ov\Omega_n-nK\otimes K,
\end{equation}
where $\iota$ are the isomorphisms defined in \eqnref{iso}.

Let $\w{\mc H}_n$ (resp., ${\mc H}_n$ and $\ov{\mc H}_n$) and $\w{\mc B}_n$ (resp., ${\mc B}_n$ and $\ov{\mc B}_n$) denote the Cartan subalgebra  and the Borel subalgebra of $\w{\mc G}_n\oplus \C K$ (resp., ${\mc G}_n\oplus \C K$ and $\ov{\mc G}_n\oplus \C K$) associated to the Dynkin diagram \eqnref{Dynkin:combined}, respectively.
Let $\w{\mc L}_n$ (resp., ${\mc L}_n$ and $\ov{\mc L}_n$) be the standard Levi subalgebra of $\w{\mc G}_n\oplus \C K$ (resp., ${\mc G}_n\oplus \C K$ and $\ov{\mc G}_n\oplus \C K$) associated to $\w Y_n$ (resp., $ Y_n$ and $\ov Y_n$) defined in \eqnref{Y} and let  $\w{\mc P}_n=\w{\mc L}_n+\w{\mc B}_n$ (resp., ${\mc P}_n={\mc L}_n+{\mc B}_n$ and $\ov{\mc P}_n=\ov{\mc L}_n+\ov{\mc B}_n$) be the corresponding parabolic subalgebra.

 For $n\in\N$ and $\mu\in \w{\mc H}_n^*$ (resp., ${\mc H}_n^*$ and $\ov{\mc H}_n^*$), we denote by $\mathring{\w{\Delta}}_n(\mu)=\mbox{Ind}^{\w{\mc G}_n\oplus \C K}_{\w{\mc P}_n} L(\w{\mc L }_n, \mu)$
 (resp., $\mathring{\Delta}_n(\mu)=\mbox{Ind}^{{\mc G}_n\oplus \C K}_{\mc P_n}L(\mc L_n, \mu)$ and
 $\mathring{\ov{\Delta}}_n(\mu)=\mbox{Ind}^{\ov{\mc G}_n\oplus \C K}_{\ov{\mc P}_n} L(\ov{\mc L}_n, \mu)$)
 the parabolic Verma $\w{\mc G}_n\oplus \C K$- (resp., ${\mc G}_n\oplus \C K$- and $\ov{\mc G}_n\oplus \C K$-)module,
 where $ L(\w{\mc L}_n, \mu)$ (resp., $L(\mc L_n, \mu)$ and $ L(\overline{\mc L}_n, \mu)$)
 is the irreducible highest weight $\w{\mc L}_n$- (resp., $\mc L_n$- and $\overline{\mc L}_n$-)module of highest weight
 $\mu$.
 The unique irreducible quotient $\w{\mc G}_n\oplus \C K$- (resp., ${\mc G}_n\oplus \C K$- and $\ov{\mc G}_n\oplus \C K$-)module of
 $\mathring{\w{\Delta}}_n(\w\mu)$ (resp., $\mathring{\Delta}_n(\mu)$ and $\mathring{\ov{\Delta}}_n(\mu)$)
 is denoted by
 $\mathring{\w L}_n(\mu)$ (resp., $\mathring{L}_n(\mu)$ and $\mathring{\overline L}_n(\mu)$).

Since the Cartan subalgebras $\w{\mc H}_n$ (resp., ${\mc H}_n$ and $\ov{\mc H}_n$) and $\w \h_n$ (resp., ${\h}_n$ and $\ov{\h}_n$) are equal, we identify the dual space of  $\w{\mc H}_n$ (resp., ${\mc H}_n$ and $\ov{\mc H}_n$) with the dual space of  $\h_n$ (resp., $\ov{\h}_n$ and $\wt{\h}_n$). Associated to a partition $\la^+=(\la^+_1,\la^+_2,\ldots)$, $d\in\C$ and $\la_{-m},\ldots,\la_{-1}\in\C$ (resp., $\la_{-m},\ldots,\la_{-1}\in\Z$) for $\xx=\mf{a, b, c, d}$ (resp., $\mf b^\bullet$), we define
\begin{align}
\mathring{{\la}} &:=\sum_{i=-m}^{-1}\la_{i}\epsilon_{i}
 + \sum_{j\in\N, 1\le j\le n}(\la^+_{j}-d)\epsilon_{j}
 + d\La_0\in ({\mc H}_n)^{*}, \quad\hbox{ for $\la^+_{n+1}=0$};\label{weight:Im0}\\
\mathring{\ov{\la}} &:=\sum_{i=-m}^{-1}\la_{i}\epsilon_{i}
 + \sum_{s\in\hf+\Z_+,1/2\leq s\leq n-1/2}((\la^+)'_{s+\hf}+d)\epsilon_s
 + d\La_0\in (\ov{\mc H}_n)^*, \quad\hbox{ for $(\la^+)'_{n+1}=0$};\label{weight:ovIm0}\\
\mathring{\w{\la}} &:=\sum_{i=-m}^{-1}\la_{i}\epsilon_{i}
 + \sum_{r\in\hf\N,1/2\leq r\leq n}(\theta(\la^+)_r-(-1)^{2r}d)\epsilon_r
 + d\La_0\in (\wt{\mc H}_n)^*, \quad\hbox{ for $\theta(\la^+)_{1/2+n}=0$}.\label{weight:wtIm0}
\end{align}
Let $\mathring{P}_n^{\raisebox{-10pt}{\scriptsize{+}}}(d)\subset({\mc H}_n)^*$,
$\mathring{\ov{P}}_n^{\raisebox{-10pt}{\scriptsize{+}}}(d)\subset(\ov{\mc H}_n)^*$ and
$\mathring{\wt{P}}_n^{\raisebox{-10pt}{\scriptsize{+}}}(d)\subset (\wt{\mc H}_n)^*$ denote the sets of all weights of the form
\eqnref{weight:Im0}, \eqnref{weight:ovIm0} and \eqnref{weight:wtIm0} for a fixed $d\in \C$,
respectively.
 For $n\in\N$, let
 $\mathring{\wt{P}}_n^{\raisebox{-10pt}{\scriptsize{+}}}:=\bigcup_{d\in \C} \mathring{\wt{P}}_n^{\raisebox{-10pt}{\scriptsize{+}}}(d)$  $ \mathring{{P}}_n^{\raisebox{-10pt}{\scriptsize{+}}}:=\bigcup_{d\in \C}\mathring{{P}}_n^{\raisebox{-10pt}{\scriptsize{+}}}(d)$ and $\mathring{\ov{P}}_n^{\raisebox{-10pt}{\scriptsize{+}}}:=\bigcup_{d\in \C}\mathring{\ov{P}}_n^{\raisebox{-10pt}{\scriptsize{+}}}(d)$.

 For $n\in\N$ and $d\in \C$ ,
 let $\mathring{\w{\mathcal O}}_n(d)$ (resp., $\mathring{\mathcal O}_n(d)$ and $\mathring{\ov{\mathcal O}}_n(d)$)
 be the category of $\w{\mc G}_n\oplus \C K$- (resp., ${\mc G}_n\oplus \C K$- and $\ov{\mc G}_n\oplus \C K$-)modules $M$
 such that $M$ is a semisimple $\w{\mc H}_n$- (resp., $\mc H_n$- and $\overline{\mc H}_n$-)module
 with finite dimensional weight subspaces
 $M_{\gamma}$ for $\gamma\in\w{\mc H}_n^*$ (resp., $\mc H_n^*$ and $\overline{\mc H}_n^*$),
 satisfying
   \begin{itemize}
     \item[ (i)] $M$ decomposes over $\w{\mc L}_n$ (resp., $\mc L_n$ and $\overline{\mc L}_n$)
     as a direct sum of $\w L(\w{\mc L}_n, \mu)$ (resp., $L(\mc L_n, \mu)$ and $\overline L(\overline{\mc L}_n, \mu)$)
     for
     $\mu\in \mathring{\wt{P}}_n^{\raisebox{-10pt}{\scriptsize{+}}}(d)$
      (resp., $\mathring{{P}}_n^{\raisebox{-10pt}{\scriptsize{+}}}(d)$ and $\mathring{\ov{P}}_n^{\raisebox{-10pt}{\scriptsize{+}}}(d)$).
     \item [(ii)] There exist finitely many weights
     $\lambda^1,\ldots,\lambda^k\in \mathring{\wt{P}}_n^{\raisebox{-10pt}{\scriptsize{+}}}(d)$
     (resp., $\mathring{{P}}_n^{\raisebox{-10pt}{\scriptsize{+}}}(d)$
     and
     $\mathring{\ov{P}}_n^{\raisebox{-10pt}{\scriptsize{+}}}(d)$)
      (depending on $M$) such that if $\gamma$ is a weight in $M$,
      then
      $\lambda^i-\gamma\in \sum_{\alpha\in\w\Pi_n}\mathbb{Z}_+\alpha$
       (resp., $\sum_{\alpha\in\Pi_n}\mathbb{Z}_+\alpha$ and $\sum_{\alpha\in\ov\Pi_n}\mathbb{Z}_+\alpha$) for some $i$.
   \end{itemize}
Let $\mathring{\w{\mathcal O}}_n:=\oplus_{d\in \C}\mathring{\w{\mathcal O}}_n(d)$,
$\mathring{{\mathcal O}}_n:=\oplus_{d\in \C}\mathring{\mathcal O}_n(d)$ and
$\mathring{\ov{\mathcal O}}_n:=\oplus_{d\in \C}\mathring{\ov{\mathcal O}}_n(d)$.
The morphisms in the categories are even homomorphisms of modules.
Forgetting the $\Z_2$-gradations,
it is clear that there is an isomorphism $\Psi_n$ of categories from
$\mathring{\w{\mathcal O}}_n(d)$
(resp., $\mathring{\mathcal O}_n(d)$ and $\mathring{\ov{\mathcal O}}_n(d)$)
to ${\w{\mathcal O}}_n(d)$
(resp., ${\mathcal O}_n(d)$ and ${\ov{\mathcal O}}_n(d)$)
induced from the isomorphism $\iota$ defined  in \eqnref{iso} and hence
$\mathring{\w{\mathcal O}}_n$
(resp., $\mathring{\mathcal O}_n$ and $\mathring{\ov{\mathcal O}}_n$) and ${\w{\mathcal O}}_n$
(resp., ${\mathcal O}_n$ and ${\ov{\mathcal O}}_n$)
are isomorphic as tensor categories.
Since  $\Psi_n(M)=M$ for each  $M\in\mathring{\w{\mathcal O}}_n(d)$
 (resp., $\mathring{\mathcal O}_n(d)$ and $\mathring{\ov{\mathcal O}}_n(d)$),
  the $\Z_2$-gradation on $M$ is defined to be the $\Z_2$-gradation on $\Psi_n(M)$.

  For $\mu\in \mathring{\wt{P}}_n^{\raisebox{-10pt}{\scriptsize{+}}}$
   (resp., $\mathring{{P}}_n^{\raisebox{-10pt}{\scriptsize{+}}}$
   and $\mathring{\ov{P}}_n^{\raisebox{-10pt}{\scriptsize{+}}}$),
   the parabolic Verma module $\mathring{\w{\Delta}}_n(\mu)$  (resp., $\mathring{\Delta}_n(\mu)$ and  $\mathring{\ov{\Delta}}_n(\mu)$) and the
 irreducible module   $\mathring{\w L}_n(\mu)$ (resp., $\mathring{L}_n(\mu)$ and $\mathring{\overline L}_n(\mu)$) are in $\mathring{\w\OO}_n$ (resp., $\mathring{\OO}_n$ and $\mathring{\ov\OO}_n$).

Let ${M}:= {M}_1\otimes\cdots\otimes {M}_\ell
 $ for $M_1, \ldots, M_\ell\in \mathring{\w\OO}_n$ (resp., $\mathring{\OO}_n$ and $\mathring{\ov\OO}_n$). Fix a nonzero complex number $\kappa$ and ${\psi}(z_1,\ldots, z_\ell)\in \mc{D}(U, M):=\bigoplus_\mu \mc{D}(U, M_\mu)$, $\mu$ runs over all weights of $M$, we can consider a system of partial differential equations, for $i=1,\ldots,\ell$,
 \begin{equation*}\label{ncKZ}
 \kappa\frac{\partial}{\partial {z_i}}{\psi}=\sum_{j=1\atop j\neq i}^\ell\frac{\mathring{\w{\Omega}}^{(ij)}_n}{z_i-z_j}{\psi},
 \ \ \ \
(\hbox{resp.,}\,\,\, \kappa\frac{\partial}{\partial {z_i}}{\psi}=\sum_{j=1\atop j\neq i}^\ell\frac{\mathring{\Omega}^{(ij)}_n}{z_i-z_j}{\psi},
 \ \ \mbox{and}\ \
 \kappa\frac{\partial}{\partial {z_i}}{\psi}=\sum_{j=1\atop j\neq i}^\ell\frac{\mathring{\ov{\Omega}}^{(ij)}_n}{z_i-z_j}{\psi}\, ).
 \end{equation*}
The systems of equations above are called the \emph{super KZ equations}, \emph{KZ equations} and \emph{super KZ equations}, respectively.
 Let $\mathring{\w{\rm KZ}}(M)$, $\mathring{{\rm KZ}}(M)$ and $\mathring{\ov{\rm KZ}}(M)$ (resp., $\mathring{\w{\mc S}}(M)$, $\mathring{{\mc S}}(M)$ and $\mathring{\ov{\mc S}}(M)$)
 denote the solutions (resp., singular solutions) of the (super) KZ equations in $\mc{D}(U, M)$,
 for ${M}= {M}_1\otimes\cdots\otimes {M}_\ell
 $ and $M_1, \ldots, M_\ell\in \mathring{\w\OO}_n$ (resp., $\mathring{\OO}_n$ and $\mathring{\ov\OO}_n$), respectively.
 The following proposition follows from the direct computation by using \eqnref{Om-Om}.

\begin{prop}\label{ncetoce}
Let $n\in\mathbb{N}$, ${M}= {M}_1\otimes\cdots\otimes {M}_\ell$ and $\psi(z_1,\ldots,z_\ell)\in \mc{D}(U, M)$.
We have
\begin{itemize}
  \item[(i)] $\psi\in\mathring{\w{\rm KZ}}(M)$ (resp., $\mathring{\w{\mc S}}(M)$) if and only if $\psi\in{\w{\rm KZ}}(\Psi_n(M))$ (resp., ${\w{\mc S}}(\Psi_n(M))$),  for ${M}_i\in \mathring{\w\OO}_n(d_i)$, $d_i\in \C$, $i=1,\ldots,\ell$.
  \item[(ii)] $\psi\in\mathring{{\rm KZ}}(M)$ (resp., $\mathring{{\mc S}}(M)$) if and only if
  $\prod^\ell\limits_{1\le i< j\le \ell}(z_i-z_j)^{\frac{-nd_id_j}{\kappa}}\psi\in{{\rm KZ}}(\Psi_n(M))$ (resp., ${{\mc S}}(\Psi_n(M))$),  for ${M}_i\in \mathring{\OO}_n(d_i)$, $d_i\in \C$, $i=1,\ldots,\ell$.
  \item[(iii)] $\psi\in\mathring{\ov{\rm KZ}}(M)$ (resp., $\mathring{\w{\mc S}}(M)$) if and only if
  $\prod^\ell\limits_{1\le i< j\le \ell}(z_i-z_j)^{\frac{nd_id_j}{\kappa}}\psi\in{\ov{\rm KZ}}(\Psi_n(M))$ (resp., ${\ov{\mc S}}(\Psi_n(M))$),  for ${M}_i\in \mathring{\ov\OO}_n(d_i)$, $d_i\in \C$, $i=1,\ldots,\ell$.
\end{itemize}
\end{prop}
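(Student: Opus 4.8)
The plan is to reduce everything to the three identities in \eqnref{Om-Om}, which relate the Casimir symmetric tensors of $\w{\mc G}_n$, ${\mc G}_n$, $\ov{\mc G}_n$ to those of $\DG_n$, $\G_n$, $\SG_n$. Recall that $\Psi_n$ is the identity on underlying vector spaces and transports the module structure through the isomorphism $\iota$ of \eqnref{iso}. Hence, for $M=M_1\otimes\cdots\otimes M_\ell$, the operator $\mathring{\w\Omega}_n^{(ij)}$ acting on $M$ coincides with $\bigl((\iota\otimes\iota)(\mathring{\w\Omega}_n)\bigr)^{(ij)}$ acting on $\Psi_n(M)$, and likewise for the other two families. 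Substituting \eqnref{Om-Om} then turns the circle KZ systems into the $\DG_n$-, $\G_n$-, $\SG_n$-systems up to the correction terms coming from $nK\otimes K$.

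The first step is to observe that $K$ acts as the scalar $d_i$ on every module $M_i\in\mathring{\w\OO}_n(d_i)$ (resp.\ $\mathring\OO_n(d_i)$ and $\mathring{\ov\OO}_n(d_i)$). Since $K$ is central, every root vanishes on $K$, so condition (ii) in the definition of these categories forces each weight $\gamma$ of $M_i$ to satisfy $\gamma(K)=d_i$, the common $\La_0$-coefficient prescribed by \eqnref{weight:Im0}--\eqnref{weight:wtIm0}. Consequently $(K\otimes K)^{(ij)}=K^{(i)}K^{(j)}$ acts on $M$ as the constant $d_id_j$, and on $\Psi_n(M)$ we obtain the exact operator identities
\[
\mathring{\w\Omega}_n^{(ij)}=\w\Omega_n^{(ij)},\qquad
\mathring\Omega_n^{(ij)}=\Omega_n^{(ij)}+nd_id_j,\qquad
\mathring{\ov\Omega}_n^{(ij)}=\ov\Omega_n^{(ij)}-nd_id_j.
\]

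Part (i) is then immediate, because the first identity carries no scalar correction: the two systems literally coincide, so $\mathring{\w{\rm KZ}}(M)={\w{\rm KZ}}(\Psi_n(M))$, and the singular condition $E_\beta\psi=0$ is unaffected since $\iota$ fixes every root vector $E_\beta$ (as $\mathrm{Str}(JE_\beta)=0$ for $\beta\neq0$). For (ii) and (iii) the main step is a gauge transformation. Putting $g:=\prod_{1\le i<j\le\ell}(z_i-z_j)^{-nd_id_j/\kappa}$ in case (ii) (resp.\ with exponent $+nd_id_j/\kappa$ in case (iii)), a short computation gives $\kappa\,\partial_{z_i}\log g=-n\sum_{j\neq i}\tfrac{d_id_j}{z_i-z_j}$ (resp.\ $+n\sum_{j\neq i}\tfrac{d_id_j}{z_i-z_j}$), which is precisely the negative of the constant term produced by the $\pm nd_id_j$ correction after substituting the identities above into the circle KZ equations. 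Since multiplication by the scalar function $g$ commutes with every $\Omega_n^{(ij)}$ (resp.\ $\ov\Omega_n^{(ij)}$) and with every $E_\beta$, the constant contributions cancel, and one concludes that $\psi$ solves the circle system if and only if $g\psi$ solves the $\G_n$- (resp.\ $\SG_n$-)system, with the singular condition preserved in both directions.

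I expect the computation itself to be entirely routine; the only places requiring real care are the bookkeeping of signs, namely matching the exponent $\mp nd_id_j/\kappa$ against the $\pm nK\otimes K$ corrections in \eqnref{Om-Om}, and the verification that $K$ genuinely acts as a single scalar on each $\mathring\OO_n(d_i)$-factor, which is what makes $(K\otimes K)^{(ij)}$ a constant rather than an operator. A minor additional point is to fix a branch of the multivalued factor $g$ on the open set $U$ so that the stated equivalence is an honest one between solution spaces; this is harmless since $U$ may be shrunk to a simply connected domain without affecting the statement.
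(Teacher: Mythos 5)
Your proposal is correct and follows exactly the route the paper intends: its entire proof of \propref{ncetoce} is the remark that it ``follows from the direct computation by using \eqnref{Om-Om}'', and your argument is precisely that computation carried out in full --- the scalar action of $K$ as $d_i$ on each factor, the resulting operator identities for $\mathring{\Omega}_n^{(ij)}$, and the gauge transformation by $\prod_{i<j}(z_i-z_j)^{\mp nd_id_j/\kappa}$ with the matching sign check. Your added care about fixing a branch of the multivalued gauge factor and about $\iota$ fixing root vectors are details the paper leaves implicit, not deviations from its method.
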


\begin{rem}\label{rem:tr-iso}
Let $n\in \N$.
By \corref{tr-iso} and \propref{ncetoce},
we have a bijection between the sets of singular solutions of weight $\ov\la$ of the super KZ equations
for the tensor product of parabolic Verma modules or irreducible modules in $\mathring{\ov\OO}_n$
and the singular solutions of weight $\la$ of the KZ equations
 for the tensor product of the corresponding parabolic Verma modules or irreducible modules in $\mathring{\OO}_k$
 for sufficiently large $k$.

  By the results in \cite[Section 3]{CL}  and \cite[Section 6]{CLW}, the finite dimensional irreducible modules over $\SG_n$ are obtained by applying  the functor $\ov{\mf{tr}}_n$ to some modules in $\ov\OO$. From above, we can obtain the singular solutions of super KZ equations for the tensor product of finite dimensional irreducible modules over $\SG_n$ through the singular solutions of the corresponding KZ equations.
\end{rem}

\section{Trigonometric super KZ equations}\label{Sec:TKZ}
In this section, we study the trigonometric (super) KZ equations associated to the Lie superalgebras $\DG_n$, $\G_n$ and $\SG_n$.
We show that the solutions of the  trigonometric (super) KZ equations are stable under the truncation functors $\w{\mf{tr}}^n_k$, ${\mf{tr}}^n_k$ and $\ov{\mf{tr}}^n_k$. We obtain a bijection between the sets of singular solutions of some special kinds of  trigonometric (super) KZ equations associated to $\DG_\infty$, $\G_\infty$ and $\SG_\infty$. Finally, we have a bijection between the sets of solutions of some special kinds of trigonometric (super) KZ equations associated to $\w{\mc {G}}_n$ (resp., $\mc {G}_n$ and $\ov{\mc {G}}_n$) and associated to  $\DG_n$ (resp., $\G_n$ and $\SG_n$).

Let $n\in\mathbb{N}\cup\{\infty\}$.
Recall  $\w{\Phi}^{+}_n$(resp., ${\Phi}^{+}_n$ and
$\ov{\Phi}^{+}_n$) denotes the set of positive roots of $\DG_n$ (resp., $\G_n$ and $\SG_n$)
 associated to the Dynkin diagrams \eqnref{Dynkin:combined}.
For each $\beta\in\w{\Phi}^{+}_n$(resp., ${\Phi}^{+}_n$ and $\ov{\Phi}^{+}_n$), we fix root vectors $E_\beta$ and $E^\beta$ of weights $\beta$ and $-\beta$, respectively, satisfying $\langle E_{\beta},E^{\beta}\rangle=1.$
By \lemref{finitesum}, $\w{\Omega}_{n,0}$, $\w{\Omega}_{n,+}$, $\w{\Omega}_{n,-}$ (resp., ${\Omega}_{n,0}$, ${\Omega}_{n,+}$, ${\Omega}_{n,-}$, and $\ov{\Omega}_{n,0}$, $\ov{\Omega}_{n,+}$, $\ov{\Omega}_{n,-}$) are operators on  $M\otimes N$, for any $M, N\in \w{\mathcal {O}}_n$ (resp., ${\mathcal {O}}_n$ and $\ov{\mathcal {O}}_n$),
defined by
\begin{eqnarray*}
 &&\w{\Omega}_{n,0}:={\hf} \sum_{j\in\w{\I}^+_m(n)}(-1)^{2j}\big(E_j\otimes E_j-(-1)^{2j}\delta_j  (K\otimes E_j+E_j\otimes K)\big),\\
  &&\w{\Omega}_{n,+}:=\w{\Omega}_{n,0}+\sum_{\beta\in\w{\Phi}^+_n}(-1)^{|E_\beta|} E_\beta\otimes E^{\beta},\ \ \ \
  \w{\Omega}_{n,-}:=\w{\Omega}_{n,0}+\sum_{\beta\in\w{\Phi}^+_n}E^{\beta}\otimes E_\beta,
\end{eqnarray*}
\begin{eqnarray*}
 &&{\Omega}_{n,0}:={\hf} \sum_{j\in{\I}^+_m(n)}\big(E_j\otimes E_j-\delta_j  (K\otimes E_j+E_j\otimes K)\big),\\
  &&{\Omega}_{n,+}:={\Omega}_{n,0}+\sum_{\beta\in{\Phi}^+_n}(-1)^{|E_\beta|} E_\beta\otimes E^{\beta},\ \ \ \
  {\Omega}_{n,-}:={\Omega}_{n,0}+\sum_{\beta\in{\Phi}^+_n}E^{\beta}\otimes E_\beta,
\end{eqnarray*}
\begin{eqnarray*}
 &&\ov{\Omega}_{n,0}:={\hf} \sum_{j\in\ov{\I}^+_m(n)}(-1)^{2j}\big(E_j\otimes E_j+\delta_j  (K\otimes E_j+E_j\otimes K)\big),\\
  &&\ov{\Omega}_{n,+}:=\ov{\Omega}_{n,0}+\sum_{\beta\in\ov{\Phi}^+_n}(-1)^{|E_\beta|} E_\beta\otimes E^{\beta},\ \ \ \
  \ov{\Omega}_{n,-}:=\ov{\Omega}_{n,0}+\sum_{\beta\in\ov{\Phi}^+_n}E^{\beta}\otimes E_\beta.
\end{eqnarray*}
We also define
\begin{equation*}
 \w R_n(z):=\frac{z\w{\Omega}_{n,+}+\w{\Omega}_{n,-}}{z-1},\ \ \ \
 R_n(z):=\frac{z{\Omega}_{n,+}+{\Omega}_{n,-}}{z-1},\ \ \ \
  \ov R_n(z):=\frac{z\ov{\Omega}_{n,+}+\ov{\Omega}_{n,-}}{z-1},\quad\hbox{for $z\in \C\backslash\{1\}$.}
\end{equation*}

 Let ${M}_1, \ldots, M_\ell\in \w\OO_n$  (resp., $\OO_n$ and $\ov\OO_n$)
 for $i=1,\ldots,\ell$ and $ M:= M_1\otimes\cdots\otimes  M_\ell$.
 Let $U$ be a nonempty open subset in the  configuration space ${\bf X}_\ell$.
 Let $\w {\bf h}$ (resp., $ {\bf h}$ and  $\ov {\bf h}$) be an operator defining on every
 $N\in \w\OO_n$  (resp., $\OO_n$ and $\ov\OO_n$) simultaneously such that the restriction of
 $\w {\bf h}$ (resp., $ {\bf h}$ and  $\ov {\bf h}$) to each weight space of $N$ is a multiple of the identity map.
 For a fixed nonzero complex number $\kappa$,  $\w {\bf h}$, $ {\bf h}$,  $\ov {\bf h}$ and ${\psi}(z_1,\ldots, z_\ell)\in \mc{D}(U, M)$, we consider the \emph{trigonometric (super) KZ equations} associated to the Lie (super)algebras $\DG_n$, $\G_n$ and $\SG_n$,  respectively, defined by
\begin{align}
 &(\kappa z_i\frac{\partial}{\partial {z_i}}-\sum_{j=1\atop j\neq i}^\ell\w{R}^{(ij)}_{n}(\frac{z_i}{z_j})-\w {\bf h}^{(i)}){\psi}=0,
 &{\psi}\in \mc{D}(U, M), \quad\mbox{for}\ i=1,\ldots,\ell; \label{dTKZE_n} \\
 & (\kappa  z_i\frac{\partial}{\partial {z_i}}-\sum_{j=1\atop j\neq i}^\ell{R}^{(ij)}_{n}(\frac{z_i}{z_j})-{\bf h}^{(i)}){\psi}=0,
 & {\psi}\in \mc{D}(U, M), \quad \mbox{for}\ i=1,\ldots,\ell; \label{TKZE_n} \\
 &(\kappa  z_i\frac{\partial}{\partial {z_i}}-\sum_{j=1\atop j\neq i}^\ell\ov{R}^{(ij)}_{n}(\frac{z_i}{z_j})-\ov {\bf h}^{(i)}){\psi}=0,
 &{\psi}\in \mc{D}(U, M), \quad \mbox{for}\ i=1,\ldots,\ell,  \label{sTKZE_n}
\end{align}

For a weight $\mu$ of $M$, we define
 \[\w{\rm TKZ}(M_\mu,\w {\bf h}):=\{\psi\in \mc{D}(U, M_\mu)\,|\, \hbox{$\psi$ is a solution of the super KZ equations \eqnref{dTKZE_n}}\}
 \]
and
\[
\w{\rm TKZ}(M,\w {\bf h}):=\bigoplus_\mu\w{\rm TKZ}(M_\mu,\w {\bf h}),
\]
where $\mu$ runs over all weights of $M$.
Analogously, we can define  ${\rm TKZ}(M_\mu,{\bf h})$, $ \ov{\rm TKZ}(M_\mu,\ov {\bf h})$, ${\rm TKZ}(M,{\bf h})$ and $ \ov{\rm TKZ}(M,\ov {\bf h})$
  for the corresponding settings of the trigonometric (super) KZ equations \eqnref{TKZE_n} and \eqnref{sTKZE_n}, respectively.
  The singular solutions of the trigonometric (super) KZ equations are defined
similar to the  singular solutions of the (super) KZ equations.
Let $\w{\mathscr{S}}(M_\mu,\w {\bf h})$ (resp., ${\mathscr{S}}(M_\mu, {\bf h})$ and $\ov{\mathscr{S}}(M_\mu,\ov {\bf h})$) denote the set of functions $\psi\in \w{\rm TKZ}({M}_\mu,\w {\bf h})$ (resp., ${\rm TKZ}(M_\mu,{\bf h})$ and $\ov{\rm TKZ}(M_\mu,\ov {\bf h})$) satisfying $E_\beta \psi=0$ for $\beta\in\w{\Phi}^+$ (resp., ${\Phi}^+$ and $\ov{\Phi}^+$).
The sets are called the \emph{singular solution spaces} of the trigonometric (super) KZ equations and the functions in the sets are called  \emph{singular solutions} of the trigonometric (super) KZ equations.

In contrast to the solutions of the (super) KZ equations, the solutions of
the trigonometric (super) KZ equations may not be stable under the actions of the associated Lie (super)algebras. However, the solutions of trigonometric KZ equations for some special parameters $\bf h$ can be obtained from  the solutions of some KZ equations (see, for example, \cite[ Section 3.8]{EFK}). For these special parameters,
we can obtain the analogous results of (super) KZ equations given in Section~\ref{RKZ} for the trigonometric (super) KZ equations with some tedious computations.

Applying a similar proof of  \propref{tr-KZ}, we have the following proposition.
\begin{prop}\label{tr-TKZ} Let $0\le k<n \leq\infty$ and $M=M_1\otimes\cdots\otimes  M_\ell$. Assume that  ${\psi}\in \mc{D}(U, M)$.
\begin{itemize}
\item[(i)] For $\mu\in \w\Xi_k$ and $M_1, \ldots,M_\ell\in \w{\mc {O}}_n$, we have
\[
{\psi}\in \w{\rm TKZ}(M_\mu,\w {\bf h}) \Leftrightarrow \w{\mf{tr}}^n_k({\psi})\in \w{\rm TKZ}\big(\w{\mf{tr}}^n_k(M)_\mu,\w {\bf h}\big)
\,\,\hbox{and}\,\,
{\psi}\in \w{{\mathscr{S}}}(M_\mu, \w {\bf h}) \Leftrightarrow \w{\mf{tr}}^n_k({\psi})\in \w{{\mathscr{S}}}\big(\w{\mf{tr}}^n_k(M)_\mu, \w {\bf h}\big).
\]
\item[(ii)] For $\mu\in \Xi_k$ and $M_1, \ldots,M_\ell\in{\mc {O}}_n$, we have
 \[
{\psi}\in {\rm TKZ}(M_\mu, {\bf h}) \Leftrightarrow {\mf{tr}}^n_k({\psi})\in {\rm TKZ}\big({\mf{tr}}^n_k(M)_\mu, {\bf h}\big)
\,\,\hbox{and}\,\,
{\psi}\in {{\mathscr{S}}}(M_\mu,  {\bf h}) \Leftrightarrow {\mf{tr}}^n_k({\psi})\in {{\mathscr{S}}}\big({\mf{tr}}^n_k(M)_\mu,  {\bf h}\big).
\]
\item[(iii)] For $\mu\in \ov\Xi_k$ and $M_1, \ldots,M_\ell\in\ov{\mc {O}}_n$, we have
\[
{\psi}\in \ov{\rm TKZ}(M_\mu,\ov {\bf h}) \Leftrightarrow \ov{\mf{tr}}^n_k({\psi})\in \ov{\rm TKZ}\big(\ov{\mf{tr}}^n_k(M)_\mu,\ov {\bf h}\big)
\,\,\hbox{and}\,\,
{\psi}\in \ov{{\mathscr{S}}}(M_\mu, \ov {\bf h}) \Leftrightarrow \ov{\mf{tr}}^n_k({\psi})\in \ov{{\mathscr{S}}}\big(\ov{\mf{tr}}^n_k(M)_\mu, \ov {\bf h}\big).
\]
\end{itemize}
\end{prop}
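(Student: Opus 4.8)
The plan is to imitate the proof of \propref{tr-KZ}, since the only structural difference is that the trigonometric propagator $\w R_n(z)=(z\w\Omega_{n,+}+\w\Omega_{n,-})/(z-1)$ is assembled from the tensors $\w\Omega_{n,0},\w\Omega_{n,+},\w\Omega_{n,-}$ together with the weight-scalar operator $\w{\bf h}$, rather than from $\w\Omega_n$ itself. I will prove (i) in detail; the arguments for (ii) and (iii) are identical after substituting $\Omega,\Phi^+_n$ or $\ov\Omega,\ov\Phi^+_n$ throughout.

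The key point I would establish first is that, on tensor products of weight vectors all of whose weights lie in $\w\Xi_k$, the operators $\w\Omega_{n,+}$ and $\w\Omega_{n,-}$ coincide with $\w\Omega_{k,+}$ and $\w\Omega_{k,-}$. Writing such a vector as $v_1\otimes v_2$ with $v_1,v_2$ of weights in $\w\Xi_k$, the Cartan piece $\w\Omega_{n,0}$ exceeds $\w\Omega_{k,0}$ only by terms $E_j\otimes E_j$ and $K\otimes E_j+E_j\otimes K$ with $j>k$, all of which annihilate $v_1\otimes v_2$ because $E_jv=0$ for $j>k$ by \lemref{finitesum}. For the off-diagonal part one must check each leg separately: a summand $E_\beta\otimes E^\beta$ of $\w\Omega_{n,+}$ with $\beta\in\w\Phi^+_n\setminus\w\Phi^+_k$ kills $v_1\otimes v_2$ because $E_\beta v_1=0$, while a summand $E^\beta\otimes E_\beta$ of $\w\Omega_{n,-}$ kills it because $E_\beta v_2=0$, both again by \lemref{finitesum}. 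Hence $\w R_n^{(ij)}(z_i/z_j)$ and $\w R_k^{(ij)}(z_i/z_j)$ act identically on such tensors. This leg-by-leg verification is the one step requiring genuine care; everything else is bookkeeping, and I do not foresee a real obstacle.

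It then remains to run the argument of \propref{tr-KZ} verbatim. Decomposing $\psi=\sum_{r=1}^p\psi_r$ with $\psi_r\in(M_1)_{\mu^{r,1}}\otimes\cdots\otimes(M_\ell)_{\mu^{r,\ell}}$ and $\sum_s\mu^{r,s}=\mu\in\w\Xi_k$, \lemref{weight=0} forces each $\mu^{r,s}\in\w\Xi_k$, so $\w{\mf{tr}}^n_k(\psi)=\psi$ and the previous paragraph gives $\w R_n^{(ij)}(z_i/z_j)\psi=\w R_k^{(ij)}(z_i/z_j)\w{\mf{tr}}^n_k(\psi)$ for $i\ne j$. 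Because $\w{\bf h}$ is scalar on each weight space and truncation preserves weights, the operators $\w{\bf h}^{(i)}$ and $\kappa z_i\partial/\partial z_i$ are compatible with $\w{\mf{tr}}^n_k$ as well; comparing \eqnref{dTKZE_n} for $n$ and for $k$ term by term yields the first equivalence. The singular-solution equivalence is then immediate, since a vector of weight $\mu\in\w\Xi_k$ is killed by all of $\w\Phi^+_n$ exactly when it is killed by all of $\w\Phi^+_k$, the remaining raising operators $E_\beta$, $\beta\in\w\Phi^+_n\setminus\w\Phi^+_k$, acting as zero by \lemref{finitesum}.
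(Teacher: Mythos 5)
Your proof is correct and takes essentially the same approach as the paper: the paper's entire proof is the remark that the proposition follows by ``applying a similar proof of \propref{tr-KZ}'', which is precisely the adaptation you carry out. The one detail the paper leaves implicit---the leg-by-leg check that $\w\Omega_{n,0}$, $\w\Omega_{n,+}$ and $\w\Omega_{n,-}$ (hence $\w R^{(ij)}_n$) act as $\w\Omega_{k,0}$, $\w\Omega_{k,+}$ and $\w\Omega_{k,-}$ on tensors all of whose leg-weights lie in $\w\Xi_k$---is exactly what is needed, and you supply it correctly from \lemref{finitesum} and \lemref{weight=0}.
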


For $\gamma\in\w{\mathfrak{h}}^*_n$ (resp., ${\mathfrak{h}}^*_n$ and $\ov{\mathfrak{h}}^*_n$), we define the operator $\w {\bf h}_\gamma$ (resp., $ {\bf h}_\gamma$  and $\ov {\bf h}_\gamma$) on $N\in \w\OO_n$  (resp., $\OO_n$ and $\ov\OO_n$) by $\w {\bf h}_\gamma (v)=(\gamma,\mu)v$ (resp., $ {\bf h}_\gamma (v)=(\gamma,\mu)v$ and $\ov {\bf h}_\gamma (v)=(\gamma,\mu)v$) for any weight vector $v\in N$ of the weight $\mu$.

Then we have that
\begin{lem}\label{tec lem}
Let $n \in \N\cup\{\infty\}$ and let $v$ be a weight vector of  the weight $\mu$ in ${M}:= {M}_1\otimes\cdots\otimes {M}_\ell$.
\begin{itemize}
\item[(i)] For $\mu\in \w P^+_n$ and $M_1, \ldots,M_\ell\in \w{\mc {O}}_n$, we have
 \[
 \big(\sum_{j=1\atop j\neq i}^\ell{\w\Omega}^{(ij)}_{n, -}-(\hf \w{\bf h}_\mu+\w\varrho_n)^{(i)}\big)(v)= \big(\sum_{\beta\in{\w\Phi_n}^{+}}E^{\beta(i)}E_{\beta}-\hf \w{\bf c}_n^{(i)}\big) (v).
 \]
\item[(ii)] For $\mu\in  P^+_n$ and $M_1, \ldots,M_\ell\in{\mc {O}}_n$, we have
 \[
 \big(\sum_{j=1\atop j\neq i}^\ell{\Omega}^{(ij)}_{n, -}-(\hf {\bf h}_\mu+\varrho_n)^{(i)}\big)(v)= \big(\sum_{\beta\in{\Phi_n^+}}E^{\beta(i)}E_{\beta}-\hf {\bf c}_n^{(i)}\big) (v).
 \]
\item[(iii)] For $\mu\in \ov P^+_n$ and $M_1,\ldots,M_\ell\in\ov{\mc {O}}_n$, we have
 \[
 \big(\sum_{j=1\atop j\neq i}^\ell{\ov\Omega}^{(ij)}_{n, -}-(\hf \ov{\bf h}_\mu+\ov\varrho_n)^{(i)}\big)(v)= \big(\sum_{\beta\in{\ov\Phi}_n^{+}}E^{\beta(i)}E_{\beta}-\hf \ov{\bf c}_n^{(i)}\big) (v).
 \]
\end{itemize}
\end{lem}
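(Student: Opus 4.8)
It suffices to carry out (i); parts (ii) and (iii) will follow by the same computation with the corresponding data ($\mf a$-type signs in (ii), reversed sign of the $K$-correction in (iii)). The plan is to peel off the purely ``Cartan'' contribution and reduce the asserted operator identity to an elementary statement about the bilinear form on weights. First I would note that all operators in sight are linear and that each weight space of $M$ is spanned by pure tensors, so I may assume $v=w_1\otimes\cdots\otimes w_\ell$ with each $w_s$ a weight vector of some weight $\mu^s$, whence $\mu=\sum_{s=1}^\ell\mu^s$. By \lemref{weight=0} each $\mu^s$ then lies in $\w\Xi$, and by \lemref{finitesum} every (a priori infinite) sum below acts as a finite sum on $v$, so all the operators are well defined even when $n=\infty$. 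Throughout, $E_\beta$ on the right-hand side is read as the total (diagonal) action $\sum_{k=1}^\ell E_\beta^{(k)}$ on $M$.

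\textbf{Reduction to the Cartan part.} Using the definition $\w\Omega_{n,-}=\w\Omega_{n,0}+\sum_{\beta\in\w\Phi_n^+}E^\beta\otimes E_\beta$, I would write
\begin{equation*}
\sum_{j\neq i}\w\Omega_{n,-}^{(ij)}=\sum_{j\neq i}\w\Omega_{n,0}^{(ij)}+\sum_{\beta}E^{\beta(i)}\sum_{j\neq i}E_\beta^{(j)},
\end{equation*}
while on the right $\sum_\beta E^{\beta(i)}E_\beta=\sum_\beta E^{\beta(i)}\sum_{j\neq i}E_\beta^{(j)}+\big(\sum_\beta E^\beta E_\beta\big)^{(i)}$, since $E^{\beta(i)}E_\beta^{(i)}=(E^\beta E_\beta)^{(i)}$. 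The off-diagonal pieces $\sum_\beta E^{\beta(i)}\sum_{j\neq i}E_\beta^{(j)}$ then match and cancel. Inserting the definition of $\w{\bf c}_n$ gives $\big(\sum_\beta E^\beta E_\beta\big)^{(i)}-\hf\w{\bf c}_n^{(i)}=-\hf\big(\sum_{j}(-1)^{2j}(E_j^2-2(-1)^{2j}\delta_j KE_j)\big)^{(i)}-\w\varrho_n^{(i)}$, and the $\w\varrho_n^{(i)}$ contribution cancels the one already present on the left. Thus (i) is equivalent to the Cartan identity
\begin{equation*}
\sum_{j\neq i}\w\Omega_{n,0}^{(ij)}-\hf\w{\bf h}_\mu^{(i)}=-\hf\Big(\sum_{j}(-1)^{2j}(E_j^2-2(-1)^{2j}\delta_j KE_j)\Big)^{(i)}.
\end{equation*}

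\textbf{Evaluating the Cartan part.} The key observation is that $\w\Omega_{n,0}$ realizes one half of the induced form. Expanding a weight $\gamma$ in the basis $\{\La_0,\epsilon_j\}$ dual to $\{K,E_j\}$ and using $(\La_0,\La_0)=0$, $(\epsilon_i,\epsilon_j)=(-1)^{2j}\delta_{ij}$, $(\La_0,\epsilon_i)=-\delta_i$, a direct check shows $\w\Omega_{n,0}(w\otimes w')=\hf(\gamma,\gamma')\,w\otimes w'$ for weight vectors of weights $\gamma,\gamma'$, and similarly $\big(\sum_j(-1)^{2j}(E_j^2-2(-1)^{2j}\delta_j KE_j)\big)w=(\gamma,\gamma)\,w$. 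Applying both sides of the Cartan identity to $v$ yields, on the left, the scalar $\hf\sum_{j\neq i}(\mu^i,\mu^j)-\hf(\mu,\mu^i)$ (using $\w{\bf h}_\mu^{(i)}v=(\mu,\mu^i)v$), and on the right $-\hf(\mu^i,\mu^i)$. Writing $(\mu,\mu^i)=\sum_s(\mu^s,\mu^i)$ and invoking the symmetry of $(\cdot,\cdot)$ to cancel $\sum_{j\neq i}(\mu^i,\mu^j)$ against $\sum_{j\neq i}(\mu^j,\mu^i)$ leaves exactly $-\hf(\mu^i,\mu^i)$ on each side, which proves the identity.

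\textbf{Main obstacle.} The only genuine bookkeeping is the middle step of the third paragraph: verifying that both $\w\Omega_{n,0}$ and the quadratic Cartan term of $\w{\bf c}_n$ reproduce the form on $\w{\mf h}_n^*$, including the mixed $\La_0$--$\epsilon_j$ (equivalently $K$--$E_j$) contributions carried by the $\delta_j$ corrections; once this is in hand, the remaining cancellations are forced formally by the symmetry of the bilinear form. I expect parts (ii) and (iii) to go through verbatim, replacing $\w\Omega_{n,0},\w{\bf c}_n$ by $\Omega_{n,0},{\bf c}_n$ (dropping the $(-1)^{2j}$) and by $\ov\Omega_{n,0},\ov{\bf c}_n$ (flipping the sign of the $K$-correction), in accordance with their definitions.
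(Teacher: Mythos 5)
Your proposal is correct and takes essentially the same route as the paper's proof: reduce to pure tensors of weight vectors, expand $\w\Omega^{(ij)}_{n,-}$ by its definition, rewrite $\sum_{j\neq i}(\cdot)^{(j)}$ as the total action minus the diagonal $(\cdot)^{(i)}$ term, recognize the diagonal pieces (root part, quadratic Cartan part, and $\w\varrho_n^{(i)}$) as $\hf\w{\bf c}_n^{(i)}$, and evaluate the Cartan contributions through the bilinear form on weights. The only cosmetic difference is organizational: you cancel the off-diagonal root terms first and finish by invoking symmetry of the form to match $\sum_{j\neq i}(\mu^i,\mu^j)$ against $(\mu,\mu^i)-(\mu^i,\mu^i)$, whereas the paper keeps the total Cartan action intact so that the scalar $\hf(\mu,\gamma^i)$ cancels directly against $\hf\w{\bf h}_\mu^{(i)}v$.
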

\begin{proof}
We will prove (i). The proof of other cases are similar. We may assume that $v=v_1\otimes\cdots \otimes v_\ell$ such that $v_i$ is a weight vector of weight $\gamma^i$ in $M_i$, for each $i=1,\ldots,\ell$.
\begin{eqnarray*}
 && \big(\sum_{j=1\atop j\neq i}^\ell{\w\Omega}^{(ij)}_{n, -}-(\hf \w{\bf h}_\mu+\w\varrho_n)^{(i)}\big)(v)\\
 &= &\big(\sum_{j=1\atop j\neq i}^\ell
 \big(\frac{1}{2}\sum_{k\in\w\I^+_m(n)}((-1)^{2k}E^{(i)}_kE^{(j)}_k-\delta_k(K^{(i)}E^{(j)}_k+E^{(i)}_kK^{(j)}))
 +\sum_{\beta\in{\w\Phi_n}^{+}}E^{\beta(i)}E^{(j)}_{\beta}\big)
-(\hf \w{\bf h}_\mu+\w\varrho_n)^{(i)}\big)(v)\\
 &=&
  \big(\frac{1}{2}\sum_{k\in\w\I^+_m(n)}((-1)^{2k}E^{(i)}_kE_k-\delta_k(K^{(i)}E_k+E^{(i)}_kK))
 +\sum_{\beta\in{\w\Phi_n}^{+}}E^{\beta(i)}E_{\beta}\big) (v)\\
 &&\,\,
 -\big(\frac{1}{2}\sum_{k\in\w\I^+_m(n)}((-1)^{2k}E^{(i)}_kE^{(i)}_k-\delta_k(K^{(i)}E^{(i)}_k+E^{(i)}_kK^{(i)}))
 +\sum_{\beta\in{\w\Phi_n}^{+}}E^{\beta(i)}E^{(i)}_{\beta}+\w\varrho_n^{(i)}+\hf \w{\bf h}_\mu^{(i)}\big) (v)\\
  &=&  \big(\frac{1}{2}(\mu,\gamma^i)+\sum_{\beta\in{\w\Phi_n}^{+}}E^{\beta(i)}E_{\beta}\big) (v)-(\hf \w{\bf c}_n^{(i)}+\hf \w{\bf  h}_\mu^{(i)}\big) (v)\\
  &=&  \big(\sum_{\beta\in{\w\Phi_n}^{+}}E^{\beta(i)}E_{\beta}-\hf \w{\bf c}_n^{(i)}\big) (v).
   \end{eqnarray*}
  \end{proof}

 The following theorem is an analogy of \thmref{sing-sol}. Recall that $X_\la$, $\ov X_\la$, $Y_\la$ and $\ov Y_\la$ are defined in \lemref{matching:weights}.

 \begin{thm}
\label{sing-sol-t}
Let $\la^i\in P^+$ and let $\w V(\w{\la}^i)\in \w{\mathcal O}$ be a highest weight module of weight $\w{\la}^i$,  for $i=1,\ldots,\ell$.
Let $\w M:=\w V(\w{\la}^1)\otimes\cdots\otimes \w V(\w{\la}^\ell)$, $M:=T(\w M)=T(\w V(\w{\la}^1))\otimes\cdots\otimes T(\w V(\w{\la}^\ell))$ and $\ov M:=\ov T(\w M)=\ov T(\w V(\w{\la}^1))\otimes\cdots\otimes \ov T(\w V(\w{\la}^\ell))$.
 For $\la\in  P^+$, there are linear isomorphisms
\begin{equation*}
\begin{aligned}[c]
\varphi : \, \w{{\mathscr{S}}}(\w{M}_{\w\la}, \w {\bf h}_{\w \la}+\w\varrho) & \longrightarrow {{\mathscr{S}}}(M_{\la},  {\bf h}_{\la} +\varrho)\\
\psi & \mapsto \emph{Y}_\la \psi
\end{aligned}
\quad \hbox{and}\quad
\begin{aligned}[c]
\phi :\, \w{\mathscr{S}}(\w{M}_{\w\la}, \w {\bf h}_{\w \la}+\w\varrho)& \longrightarrow \ov{{\mathscr{S}}}({\ov M}_{\ov\la}, \ov {\bf h}_{\ov \la} +\ov\varrho)\\
 \psi & \mapsto \ov{\emph{Y}}_\la \psi
\end{aligned}
\end{equation*}
with $\varphi^{-1}(\psi )=\emph{X}_\la \psi $,  for $\psi\in{\mathscr{S}}({M}_{\la}, {\bf h}_{ \la} +\varrho)$ and $\phi^{-1}(\ov \psi )=\ov{\emph{X}}_\la \ov \psi $, for $\ov \psi\in{\mathscr{S}}(\ov{M}_{\ov\la},\ov {\bf h}_{\ov \la} +\ov\varrho)$.
\end{thm}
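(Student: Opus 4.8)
The plan is to reduce the statement to the rational correspondence \thmref{sing-sol}, which it imitates. A trigonometric equation is not $\DG$-covariant in general, but I claim that on the singular solutions of a fixed weight $\w\la$, \emph{and for the distinguished parameter $\w{\bf h}_{\w\la}+\w\varrho$}, it collapses to an equation governed only by the Casimir symmetric tensor $\w\Omega^{(ij)}$ (which does commute with $\DG$ by \propref{Om-g}) plus a scalar term, after which the functorial machinery of \thmref{sing-sol} applies. First I would record the trigonometric analogue of \lemref{wOm-Om}: by the same annihilation argument used there---for $\beta\in\w\Phi^+\setminus\Phi^+$ both $E_\beta\otimes E^\beta$ and $E^\beta\otimes E_\beta$ annihilate a vector whose weight lies in $\Xi$, and $E_i$ annihilates it for $i\in\w\I^+_m\setminus\I^+_m$---the operators $\w\Omega_{0},\w\Omega_{\pm}$ agree with $\Omega_{0},\Omega_{\pm}$ on weight-$\Xi$ vectors and with $\ov\Omega_{0},\ov\Omega_{\pm}$ on weight-$\ov\Xi$ vectors, so that $\w R^{(ij)}$ agrees there with $R^{(ij)}$ (resp.\ $\ov R^{(ij)}$).

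For the reduction I would use the elementary identity $\w R^{(ij)}(z_i/z_j)=\frac{z_i}{z_i-z_j}\,\w\Omega^{(ij)}-\w\Omega_{-}^{(ij)}$, coming from $z\w\Omega_{+}+\w\Omega_{-}=z\w\Omega-(z-1)\w\Omega_{-}$ together with $\w\Omega=\w\Omega_{+}+\w\Omega_{-}$. For a singular solution $\psi$ of weight $\w\la$, \lemref{tec lem}(i) with $\mu=\w\la$ rewrites $\sum_{j\neq i}\w\Omega_{-}^{(ij)}\psi$ as a diagonal operator in $\w{\bf h}_{\w\la}$ and $\w\varrho$, plus $\sum_\beta E^{\beta(i)}E_\beta\psi$, minus $\hf\w{\bf c}^{(i)}\psi$; the middle term vanishes since $E_\beta\psi=0$ for every positive $\beta$. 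The parameter $\w{\bf h}_{\w\la}+\w\varrho$ is tuned to the very combination of $\w{\bf h}_{\w\la}$ and $\w\varrho$ that \lemref{tec lem} produces, so the diagonal contributions cancel and the trigonometric equation restricted to $\w{\mathscr S}(\w M_{\w\la},\w{\bf h}_{\w\la}+\w\varrho)$ becomes
\[
\kappa\,z_i\frac{\partial}{\partial z_i}\psi=\sum_{j=1,\,j\neq i}^{\ell}\frac{z_i}{z_i-z_j}\,\w\Omega^{(ij)}\psi+\hf\,\w{\bf c}^{(i)}\psi ,
\]
where by \lemref{slem:Gamma} the operator $\w{\bf c}^{(i)}$ acts on the $i$-th factor $\w V(\w\la^i)$ by the scalar $(\w\la^i+2\w\rho,\w\la^i)$.

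The transport then runs as in \thmref{T-KZ-odd}. Since $\w\Omega^{(ij)}$ commutes with $\DG$ and $Y_\la,X_\la\in U(\w{\mf l})\subset U(\DG)$ (\lemref{matching:weights}), applying $Y_\la$ commutes with each $\w\Omega^{(ij)}$; on the weight-$\la$ image $Y_\la\psi$, whose weight lies in $\Xi$, \lemref{wOm-Om} lets me replace $\w\Omega^{(ij)}$ by $\Omega^{(ij)}$, and \propref{thm:scalar} gives $(\w\la^i+2\w\rho,\w\la^i)=(\la^i+2\rho,\la^i)$. Hence $Y_\la\psi$ satisfies the reduced \emph{plain} equation, which by \lemref{tec lem}(ii) run backwards is exactly the trigonometric equation with parameter ${\bf h}_\la+\varrho$; singularity is preserved precisely as in \thmref{sing-sol}, and $X_\la$ furnishes the inverse. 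The isomorphism $\phi$ and its inverse $\ov X_\la$ are obtained identically, using \lemref{tec lem}(iii), the $\ov\Xi$-half of \lemref{wOm-Om}, and the barred equality in \propref{thm:scalar}, with $\ov V(\ov\la^i)=\ov T(\w V(\w\la^i))$ in place of $V(\la^i)$.

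The main obstacle is the reduction step: checking that with the distinguished parameter the entire non-$\w\Omega^{(ij)}$ part of the trigonometric operator collapses on singular solutions to the single scalar $\hf\w{\bf c}^{(i)}$. This is exactly where \lemref{tec lem} is indispensable, and it is the ``tedious computation'' signalled before the statement: one must carry the diagonal $\w\Omega_{-}$-terms, watch them cancel against $\w{\bf h}_{\w\la}+\w\varrho$ (keeping careful track of the factors of $\hf$ carried by $\w\Omega_{0}$), and confirm through \propref{thm:scalar} that only matching Casimir scalars survive under $T$ and $\ov T$. Once each equation is in the reduced $\w\Omega^{(ij)}$-form the problem is $\DG$-equivariant and the argument of \thmref{sing-sol} applies verbatim; the residual labor is the routine but lengthy verification of the barred operators and weights.
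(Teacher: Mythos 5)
Your proposal is correct and is essentially the paper's own proof: your reduction (the identity $\w R^{(ij)}(z_i/z_j)=\frac{z_i}{z_i-z_j}\,\w\Omega^{(ij)}-\w\Omega^{(ij)}_{-}$, \lemref{tec lem}, the vanishing of $\sum_{\beta}E^{\beta(i)}E_{\beta}$ on singular vectors, and \lemref{slem:Gamma}) reproduces exactly the paper's identities \eqnref{t1} and \eqnref{t2}, and your transport by $Y_\la$ and $X_\la$ via \propref{Om-g}, \lemref{wOm-Om}, \propref{thm:scalar} and \lemref{matching:weights} is the paper's argument verbatim. The preliminary trigonometric analogue of \lemref{wOm-Om} for $\w\Omega_{0}$, $\w\Omega_{\pm}$ that you record is harmless but not needed once the reduction to the $\w\Omega^{(ij)}$-form is made, and (exactly as in the paper's own computation) the clean cancellation you invoke uses the combination $\hf\w{\bf h}_{\w\la}+\w\varrho$ produced by \lemref{tec lem}.
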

\begin{proof}
We will prove the case for $\varphi$ is an isomorphism  and $\varphi^{-1}(\psi )=\emph{X}_\la \psi $  for $\psi\in{\mathscr{S}}({M}_{\la}, {\bf h}_{ \la} +\varrho)$. The other case is similar.
Fix $i\in \{1,\ldots,\ell\}$. Let $ \psi\in \mc{D}(U, \w M_{\w \la})$ such that $ \psi(z_1,\ldots, z_\ell)$ is a singular vector in $\w M$ for each $(z_1,\ldots, z_\ell)\in U$. By \lemref{tec lem} and \lemref{slem:Gamma}, we have
\begin{eqnarray*}
\big(\sum\limits_{j=1\atop j\neq i}^\ell\w{R}^{(ij)}(\frac{z_i}{z_j})+({\hf}\w{\bf h}_{\w\la}+\w\varrho)^{(i)}\big)(\psi)
&=&z_i\sum_{j=1\atop j\neq i}^\ell{\frac{\Omega^{(ij)}}{z_i-z_j}}(\psi)-
\big(\sum_{j=1\atop j\neq i}^\ell{\w\Omega}^{(ij)}_{-}-({\hf}\w{\bf h}_{\w\la}+\w\varrho)^{(i)}\big)(\psi)\\\nonumber
&=&z_i\sum_{j=1\atop j\neq i}^\ell\frac{{\w\Omega}^{(ij)}}{z_i-z_j}(\psi)-
\big(\sum_{\beta\in{\w\Phi_n}^{+}}E^{\beta(i)}E_{\beta}-\hf \w{\bf c}_n^{(i)}\big)(\psi)\\\nonumber
&=&z_i\sum_{j=1\atop j\neq i}^\ell\frac{{\w\Omega}^{(ij)}}{z_i-z_j}(\psi)-\hf(\w{\la}^i+2\w\rho,\w{\la}^i)\psi.
\end{eqnarray*}
Briefly, we have
\begin{equation}\label{t1}
\big(\sum\limits_{j=1\atop j\neq i}^\ell\w{R}^{(ij)}(\frac{z_i}{z_j})+({\hf}\w{\bf h}_{\w\la}+\w\varrho)^{(i)}\big)(\psi)
 = z_i\sum_{j=1\atop j\neq i}^\ell\frac{{\w\Omega}^{(ij)}}{z_i-z_j}(\psi)-\hf(\w{\la}^i+2\w\rho,\w{\la}^i)\psi.
\end{equation}
Similarly,
\begin{equation}\label{t2}
 \big(\sum\limits_{j=1\atop j\neq i}^\ell{R}^{(ij)}(\frac{z_i}{z_j})+({\hf}{\bf h}_{\la}+\varrho)^{(i)}\big)(\psi)
 =z_i\sum_{j=1\atop j\neq i}^\ell\frac{{\Omega}^{(ij)}}{{z_i-z_j}}(\psi)-\hf({\la^i}+2\rho,{\la^i})\psi,
\end{equation}
for $i\in \{1,\ldots,\ell\}$, where $ \psi\in \mc{D}(U, T(\w M)_{\w \la})$ such that $\psi(z_1,\ldots,z_\ell)$ is a singular vector in $T(\w M)$ for each $(z_1,\ldots,z_\ell)\in U$.

Now, let $ \psi\in \w{{\mathscr{S}}}(\w{M}_{\w\la}, \w {\bf h}_{\w \la}+\w\varrho)$. By  \eqnref{t1}, \propref{thm:scalar}, \propref{Om-g} and \lemref{wOm-Om}, we have
\begin{eqnarray*}
Y_\la\big(\sum\limits_{j=1\atop j\neq i}^\ell\w{R}^{(ij)}(\frac{z_i}{z_j})+({\hf}\w{\bf h}_{\w\la}+\w\varrho)^{(i)}\big)(\psi)
= z_i\sum_{j=1\atop j\neq i}^\ell\frac{{\w\Omega}^{(ij)}}{{z_i-z_j}}(Y_\la\psi)-\hf({\w\la^i}+2\w\rho,{\w\la^i})Y_\la\psi\\
 = z_i\sum_{j=1\atop j\neq i}^\ell\frac{{\Omega}^{(ij)}}{{z_i-z_j}}(Y_\la\psi)-\hf({\la^i}+2\rho,{\la^i})Y_\la\psi.
\end{eqnarray*}
By \eqnref{t2}, we have $Y_\la\psi\in {{\mathscr{S}}}(M_{\la},  {\bf h}_{\la} +\varrho)$ since $ \psi\in \w{{\mathscr{S}}}(\w{M}_{\w\la}, \w {\bf h}_{\w \la}+\w\varrho)$.
On the other hand, we assume $\psi\in {{\mathscr{S}}}( M_{\la},  {\bf h}_{\la} +\varrho)$.
By \eqnref{t1}, \lemref{wOm-Om}, \propref{Om-g} and \propref{thm:scalar}, we have
\begin{eqnarray*}
X_\la\big(\sum\limits_{j=1\atop j\neq i}^\ell{R}^{(ij)}(\frac{z_i}{z_j})+({\hf}{\bf h}_{\la}+\varrho)^{(i)}\big)(\psi)
= X_\la z_i\sum_{j=1\atop j\neq i}^\ell\frac{\w{\Omega}^{(ij)}}{{z_i-z_j}}(\psi)-\hf({\la^i}+2\rho,{\la^i})X_\la\psi\\
= z_i\sum_{j=1\atop j\neq i}^\ell\frac{{\w\Omega}^{(ij)}}{{z_i-z_j}}(X_\la\psi)-\hf(\w{\la}^i+2\w\rho,\w{\la}^i)X_\la\psi.
 \end{eqnarray*}
By \eqnref{t1}, we have $X_\la\psi\in \w{{\mathscr{S}}}(\w{M}_{\w\la}, \w {\bf h}_{\w \la}+\w\varrho)$  since $\psi\in {{\mathscr{S}}}(M_{\la},  {\bf h}_{\la} +\varrho)$.
Now, the proof is completed by \lemref{matching:weights}.
\end{proof}

The following corollary is analogous to  \corref{tr-iso} and it is  a direct consequence of \propref{tr-TKZ} and  \thmref{sing-sol-t}.

\begin{cor}\label{tr-iso-t}
Let $n\in \N$ and $\la^i\in P^+$ such that  $\ov\la^i\in\ov P^+_n$, for $i=1,\ldots,\ell$.
Let $\ov V_i$ denote the parabolic Verma module $\ov\Delta_n(\ov\la^i)$ or irreducible module $\ov L_n(\ov\la^i)$ in $\ov\OO_n$ of highest weight $\ov\la^i$, for $i=1,\ldots,\ell$, and let $\ov V:=\ov V_1\otimes \cdots\otimes \ov V_\ell$. Let $\mu\in P^+$ such that $\ov\mu\in\ov P^+_n$ and $\ov\mu$ is a weight of $\ov V$ and let $k_0:=\sum_{j\ge \hf}\ov\mu(E_j)$.
For a fixed $k\ge k_0$, let $V_i$ denote the parabolic Verma module $\Delta_k(\la^i)$ or irreducible module $L_k(\la^i)$ in $\OO_k$ of highest weight $\la^i$, for $i=1,\ldots,\ell$, and let $ V:= V_1\otimes \cdots\otimes  V_\ell$. Then there is a linear isomorphism between the singular solution spaces
$\ov{{\mathscr{S}}}(\ov V_{\ov\mu}, \ov {\bf h}_{\ov \mu} +\ov\varrho_n)$ and ${{\mathscr{S}}}(V_{\mu},  {\bf h}_{\mu} +\varrho_k)$.
\end{cor}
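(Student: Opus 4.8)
The plan is to reduce the finite-rank statement to the infinite-rank isomorphism of \thmref{sing-sol-t} by sandwiching it between two truncation identifications, exactly as \corref{tr-iso} was deduced from \thmref{sing-sol}, but with \propref{tr-TKZ} replacing \propref{tr-KZ}. First I would realize all four modules as truncations of one infinite-rank object. Choose a highest weight $\DG$-module $\w V_i:=\w\Delta(\w{\la}^i)$ when $\ov V_i=\ov\Delta_n(\ov\la^i)$, and $\w V_i:=\w L(\w\la^i)$ when $\ov V_i=\ov L_n(\ov\la^i)$, and put $\w M:=\w V_1\otimes\cdots\otimes \w V_\ell$. By the last assertion of \propref{thm:equivalence} together with the fact that $T$ and $\ov T$ are tensor functors, $T(\w M)$ is the infinite-rank tensor product of the $\Delta(\la^i)$ (resp.\ $L(\la^i)$) and $\ov T(\w M)$ is that of the $\ov\Delta(\ov\la^i)$ (resp.\ $\ov L(\ov\la^i)$). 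Applying \lemref{k_0} to $\ov\mu$ (its hypothesis also covers the irreducible case, since the weights of $\ov L_n(\ov\la^i)$ form a subset of those of $\ov\Delta_n(\ov\la^i)$) gives $\mu,\la^1,\ldots,\la^\ell\in P^+_{k_0}\subseteq P^+_k$; hence $\mu\in\Xi_k$, each $\la^i\in P^+_k$, and by hypothesis $\ov\mu\in\ov\Xi_n$. By \propref{thm:truncation} the truncations then satisfy $\ov{\mf{tr}}^\infty_n(\ov T(\w M))=\ov V$ and $\mf{tr}^\infty_k(T(\w M))=V$.

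Next I would assemble the chain of identifications. Taking $\la=\mu$ in \thmref{sing-sol-t}, the composite $\varphi\circ\phi^{-1}$ is a linear isomorphism
\[
\ov{{\mathscr{S}}}\big(\ov T(\w M)_{\ov\mu},\,\ov{\bf h}_{\ov\mu}+\ov\varrho\big)\ \xrightarrow{\ \sim\ }\ {{\mathscr{S}}}\big(T(\w M)_{\mu},\,{\bf h}_{\mu}+\varrho\big)
\]
at infinite rank. On the $\SG$-side, since $\ov\mu\in\ov\Xi_n$ the truncation $\ov{\mf{tr}}^\infty_n$ acts as the identity on the weight-$\ov\mu$ space, so $\ov V_{\ov\mu}=\ov T(\w M)_{\ov\mu}$ as vector spaces, and \propref{tr-TKZ}(iii) says that a function supported there solves the rank-$n$ trigonometric system if and only if it solves the infinite-rank one; thus the two singular solution spaces coincide. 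The analogous statement on the $\G$-side via \propref{tr-TKZ}(ii) identifies ${{\mathscr{S}}}(T(\w M)_{\mu},{\bf h}_{\mu}+\varrho)$ with ${{\mathscr{S}}}(V_{\mu},{\bf h}_{\mu}+\varrho_k)$. Composing these identifications with $\varphi\circ\phi^{-1}$ produces the desired isomorphism.

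The one point that needs genuine care---and the main obstacle---is that \propref{tr-TKZ} keeps the trigonometric parameter literally fixed across truncation, whereas the statement records the finite-rank Weyl vectors $\ov\varrho_n$ and $\varrho_k$ rather than the infinite-rank $\ov\varrho$ and $\varrho$. I would resolve this by noting that $\ov\varrho-\ov\varrho_n$ (resp.\ $\varrho-\varrho_k$) is a linear combination of the Cartan elements $E_j$ with $j$ of rank exceeding $n$ (resp.\ $k$); every weight in $\ov\Xi_n$ (resp.\ $\Xi_k$) annihilates such $E_j$, so $\ov\varrho$ and $\ov\varrho_n$ agree as operators on $\ov V$ and $\varrho$ and $\varrho_k$ agree on $V$. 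Consequently $\ov{\bf h}_{\ov\mu}+\ov\varrho$ restricts on $\ov V$ to $\ov{\bf h}_{\ov\mu}+\ov\varrho_n$ and ${\bf h}_{\mu}+\varrho$ restricts on $V$ to ${\bf h}_{\mu}+\varrho_k$, so the truncated solution spaces are precisely those in the statement. With this matching in place, the remaining verifications are exactly those already performed for \corref{tr-iso}.
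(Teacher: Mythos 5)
Your proposal is correct and follows essentially the same route as the paper, which deduces this corollary exactly as \corref{tr-iso} was deduced from \thmref{sing-sol} and \propref{tr-KZ}: namely by combining the infinite-rank isomorphism of \thmref{sing-sol-t} (taking $\la=\mu$ and composing $\varphi\circ\phi^{-1}$) with the truncation equivalences of \propref{tr-TKZ}, using \lemref{k_0}, \propref{thm:truncation} and \propref{thm:equivalence} to realize $\ov V$ and $V$ as truncations of $\ov T(\w M)$ and $T(\w M)$. Your additional verification that $\ov\varrho$ agrees with $\ov\varrho_n$ on $\ov V$ (and $\varrho$ with $\varrho_k$ on $V$), since all relevant weights kill the Cartan elements $E_j$ of higher rank, is a detail the paper leaves implicit but is indeed the correct way to match the parameters $\ov{\bf h}_{\ov\mu}+\ov\varrho_n$ and ${\bf h}_{\mu}+\varrho_k$ in the statement with those appearing in \thmref{sing-sol-t}.
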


\subsection{Trigonometric (super) KZ equations for modules over $\w{\mathcal {G}}_n$ ${\mathcal {G}}_n$ and $\ov{\mathcal {G}}_n$}\label{nce-t}
In this subsection, we show that there is a bijection between the sets of solutions
of the trigonometric (super) KZ equations for tensor product of modules in $\w\OO_n$ (resp., $\OO_n$ and $\ov\OO_n$)
and for tensor product of
the corresponding $\w{\mathcal {G}}_n$- (resp., ${\mathcal {G}}_n$- and $\ov{\mathcal {G}}_n$-)modules.

 Recall that  the Casimir symmetric tensors $\mathring{\w{\Omega}}_n$  (resp., $\mathring{\Omega}_n$ and $\mathring{\ov{\Omega}}_n$)  for $\mathring{\w{\mathcal {G}}}_n$ (resp., $\mathring{{\mathcal {G}}}_n$ and $\mathring{\ov{\mathcal {G}}}_n$) are defined
 in \eqref{CYTN}. For $n\in \N$, we define the operators $\mathring{\w{\Omega}}_{n,0}$, $\mathring{\w{\Omega}}_{n,+}$, $\mathring{\w{\Omega}}_{n,-}$ (resp., $\mathring{\Omega}_{n,0}$, $\mathring{\Omega}_{n,+}$, $\mathring{\Omega}_{n,-}$, and $\mathring{\ov{\Omega}}_{n,0}$, $\mathring{\ov{\Omega}}_{n,+}$, $\mathring{\ov{\Omega}}_{n,-}$)  on  $M\otimes N$, for any $M, N\in \w{\mathcal {O}}_n$ (resp., ${\mathcal {O}}_n$ and $\ov{\mathcal {O}}_n$), by
\begin{eqnarray*}
 &\mathring{\w{\Omega}}_{n,0}:={\hf} \sum_{j\in\w{\I}^+_m(n)}(-1)^{2j}E_j\otimes E_j,\\
  &\mathring{\w{\Omega}}_{n,+}:=\mathring{\w{\Omega}}_{n,0}+\sum_{\beta\in\w{\Phi}^+_n}(-1)^{|E_\beta|} E_\beta\otimes E^{\beta},\qquad
  \mathring{\w{\Omega}}_{n,-}:=\w{\Omega}_{n,0}+\sum_{\beta\in\w{\Phi}^+_n}E^{\beta}\otimes E_\beta,
\end{eqnarray*}
\begin{eqnarray*}
 &\mathring{\Omega}_{n,0}:={\hf} \sum_{j\in{\I}^+_m(n)}E_j\otimes E_j,\\ &\mathring{\Omega}_{n,+}:=\mathring{\Omega}_{n,0}+\sum_{\beta\in{\Phi}^+_n}(-1)^{|E_\beta|} E_\beta\otimes E^{\beta},\qquad
  \mathring{\Omega}_{n,-}:=\mathring{\Omega}_{n,0}+\sum_{\beta\in{\Phi}^+_n}E^{\beta}\otimes E_\beta,
\end{eqnarray*}
\begin{eqnarray*}
& \mathring{\ov{\Omega}}_{n,0}:={\hf} \sum_{j\in\ov{\I}^+_m(n)}(-1)^{2j}E_j\otimes E_j,\\ &\mathring{\ov{\Omega}}_{n,+}:=\mathring{\ov{\Omega}}_{n,0}+\sum_{\beta\in\ov{\Phi}^+_n}(-1)^{|E_\beta|} E_\beta\otimes E^{\beta},\qquad
  \mathring{\ov{\Omega}}_{n,-}:=\mathring{\ov{\Omega}}_{n,0}+\sum_{\beta\in\ov{\Phi}^+_n}E^{\beta}\otimes E_\beta.
\end{eqnarray*}
We also define
\begin{equation*}
 \mathring{\w R}_n(z):=\frac{z\mathring{\w{\Omega}}_{n,+}+\mathring{\w{\Omega}}_{n,-}}{z-1},\ \ \ \
 \mathring{R}_n(z):=\frac{z\mathring{\Omega}_{n,+}+\mathring{\Omega}_{n,-}}{z-1},\ \ \ \
  \mathring{\ov R}_n(z):=\frac{z\mathring{\ov{\Omega}}_{n,+}+\mathring{\ov{\Omega}}_{n,-}}{z-1},\quad\hbox{for $z\in \C\backslash\{1\}$.}
\end{equation*}

Recall that we identify the Cartan subalgebras
$\w{\mc H}_n$ (resp., ${\mc H}_n$ and $\ov{\mc H}_n$)
with
$\w \h_n$ (resp., ${\h}_n$ and ${\h}_n$)
and we identify the dual space of
$\w{\mc H}_n$ (resp., ${\mc H}_n$ and $\ov{\mc H}_n$)
with the dual space of
$\w\h_n$ (resp., ${\h}_n$ and $\ov{\h}_n$).
Let $\w {\bf h}$ (resp., $ {\bf h}$ and  $\ov {\bf h}$)
be an operator defining on every
$N\in \w\OO_n$  (resp., $\OO_n$ and $\ov\OO_n$)
simultaneously such that the restriction of
$\w {\bf h}$ (resp., $ {\bf h}$ and  $\ov {\bf h}$)
to each weight space of $N$ is a multiple of the identity map.
Let $U$ be a nonempty open subset in the  configuration space ${\bf X}_\ell$.
Let ${M}:= {M}_1\otimes\cdots\otimes {M}_\ell
 $ for $M_1, \ldots, M_\ell\in \mathring{\w\OO}_n$ (resp., $\mathring{\OO}_n$ and $\mathring{\ov\OO}_n$). We refer to Subsection \ref{nce} for the definition of  $\mathring{\w\OO}_n$  (resp., $\mathring{\OO}_n$ and $\mathring{\ov\OO}_n$).
 For a fixed nonzero complex number $\kappa$,  $\w {\bf h}$, $ {\bf h}$,  $\ov {\bf h}$ and ${\psi}(z_1,\ldots, z_\ell)\in \mc{D}(U, M)$, we consider the \emph{trigonometric (super) KZ equations} associated to
  $\w{\mathcal{G}}_n$, $\mathcal{G}_n$ and $\ov{\mathcal{G}}_n$, respectively,
 defined by
\begin{equation}\label{ndTKZE_n}
(\kappa z_i\frac{\partial}{\partial {z_i}}-\sum_{j=1\atop j\neq i}^\ell\mathring{\w{R}}^{(ij)}_{n}(\frac{z_i}{z_j})-\w{\bf h}^{(i)}){\psi}=0,  \quad {\psi}\in \mc{D}(U, M), \ \ \ \ \mbox{for}\ i=1,\ldots,\ell;
\end{equation}
\begin{equation}\label{nTKZE_n}
(\kappa z_i\frac{\partial}{\partial {z_i}}-\sum_{j=1\atop j\neq i}^\ell\mathring{R}^{(ij)}_{n}(\frac{z_i}{z_j})-{\bf h}^{(i)}){\psi}=0,\quad {\psi}\in \mc{D}(U, M), \ \ \ \ \mbox{for}\ i=1,\ldots,\ell;
\end{equation}
\begin{equation}\label{nsTKZE_n}
(\kappa z_i\frac{\partial}{\partial {z_i}}-\sum_{j=1\atop j\neq i}^\ell\mathring{\ov{R}}^{(ij)}_{n}(\frac{z_i}{z_j})-\ov{\bf h}^{(i)}){\psi}=0, \quad {\psi}\in \mc{D}(U, M),\ \ \ \ \mbox{for}\ i=1,\ldots,\ell.
\end{equation}

Let $\mathring{\w{\rm TKZ}}(M_\mu,\mbox{$\w{\bf h}$})$ (resp., $\mathring{{\rm TKZ}}(M_\mu,{\bf  h})$ and $\mathring{\ov{\rm TKZ}}(M_\mu,\ov{\bf  h})$)
denote the set of the solutions $\psi\in \mc{D}(U, M_\mu)$ for the trigonometric (super) KZ equations
\eqref{ndTKZE_n} (resp., \eqref{nTKZE_n} and \eqref{nsTKZE_n}) for each weight $\mu$ of $M$. Let $\mathring{\w{{\mathscr{S}}}}(M_\mu,\w{\bf  h})$ (resp., $\mathring{{{\mathscr{S}}}}(M_\mu,{\bf  h})$ and $\mathring{\ov{{\mathscr{S}}}}(M_\mu,\ov{\bf  h})$) be a subset of $\mathring{\w{\rm TKZ}}(M_\mu,\mbox{$\w{\bf h}$})$ (resp., $\mathring{{\rm TKZ}}(M_\mu,{\bf  h})$ and $\mathring{\ov{\rm TKZ}}(M_\mu,\ov{\bf  h})$) consisting of singular solutions.

Note that we have
\begin{eqnarray}
 & \iota\otimes \iota \big( {\mathring{\w R}}{}^{(ij)}_{n}(\frac{z_i}{z_j})\big)
={\w R}^{(ij)}_{n}(\frac{z_i}{z_j}), &\hbox{for $i\not= j$};\nonumber\\
 & \iota\otimes \iota \big({\mathring{R}}^{(ij)}_{n}(\frac{z_i}{z_j})\big) =R^{(ij)}_{n}(\frac{z_i}{z_j})+\frac{n(z_i+z_j)}{2(z_i-z_j)} K\otimes K,  &\hbox{for $i\not= j$};\label{Om_+O_+}\\
 &  \iota\otimes \iota \big( \mathring{\ov R}{}^{(ij)}_{n}(\frac{z_i}{z_j})\big)
  = \mathring{\ov R}{}^{(ij)}_n(\frac{z_i}{z_j})-\frac{n(z_i+z_j)}{2(z_i- z_j)} K\otimes K, & \hbox{for $i\not= j$}.\nonumber
\end{eqnarray}

Analogous to \propref{ncetoce}, we have the following proposition using \eqnref{Om_+O_+} by direct computation.

\begin{prop}\label{Tncetoce}
Let $n\in\mathbb{N}$, ${M}= {M}_1\otimes\cdots\otimes {M}_\ell$ and $\psi(z_1,\ldots,z_\ell)\in \mc{D}(U, M)$.
\begin{itemize}
  \item[(i).] $\psi\in\mathring{\w{\rm TKZ}} (M, \mbox{$\w{\bf h}$})$ (resp., $\mathring{\w{{\mathscr{S}}}} (M, \mbox{$\w{\bf h}$})$) if and only if
  $\psi\in\w{\rm TKZ}(\Psi_n(M),{\w {\bf h}})$ (resp., ${\w{{\mathscr{S}}}} (\Psi_n(M),{\w {\bf h}})$), for ${M}_i\in \mathring{\w\OO}_n(d_i)$, $d_i\in \C$, $i=1,\ldots,\ell$.
  \item[(ii).] $\psi\in\mathring{{\rm TKZ}}(M,{\bf h})$ (resp., $\mathring{{{\mathscr{S}}}} (M, \mbox{${\bf h}$})$) if and only if
  $\prod^\ell\limits_{1\le i< j\le \ell}(z_i-z_j)^{\frac{-nd_id_j}{\kappa}}\prod\limits_{r=1}^\ell z_r^\frac{nd_r(d-d_r)}{2\kappa}\psi\in{{\rm TKZ}}(\Psi_n(M),{\bf h})$ (resp., ${{{\mathscr{S}}}} (\Psi_n(M),{ {\bf h}})$), for ${M}_i\in \mathring{\OO}_n(d_i)$, $d_i\in \C$, $i=1,\ldots,\ell$ and $d=\sum_{i=1}^\ell d_i$.
  \item[(iii).] $\psi\in\mathring{\ov{\rm TKZ}}(M,\ov{\bf h})$ (resp., $\mathring{\ov{{\mathscr{S}}}} (M, \mbox{$\ov{\bf h}$})$) if and only if
  $\prod^\ell\limits_{1\le i< j\le \ell}(z_i-z_j)^{\frac{nd_id_j}{\kappa}}\prod\limits_{r=1}^\ell z_r^{\frac{-nd_r(d-d_r)}{2\kappa}}\psi\in{\ov{\rm TKZ}}(\Psi_n(M),\ov{\bf h})$ (resp., ${\ov{{\mathscr{S}}}} (\Psi_n(M),{\ov {\bf h}})$), for ${M}_i\in \mathring{\ov\OO}_n(d_i)$,  $d_i\in \C$, $i=1,\ldots,\ell$
  and $d=\sum_{i=1}^\ell d_i$.
\end{itemize}
\end{prop}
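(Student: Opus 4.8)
The plan is to follow the template of \propref{ncetoce}, replacing the purely rational gauge factor by one adapted to the trigonometric setting and letting the operator identity \eqnref{Om_+O_+} carry all the bookkeeping. The only representation-theoretic input I need is that on a module $M_i\in\mathring{\OO}_n(d_i)$ the central element $K$ acts as the scalar $d_i$: every weight of $M_i$ has $\La_0$-coefficient $d_i$ and $\La_0(K)=1$, so $(K\otimes K)^{(ij)}$ acts on $M=M_1\otimes\cdots\otimes M_\ell$ as the scalar $d_id_j$. Since $\Psi_n(M)=M$ as a vector space, $\iota$ transports the $\mG_n\oplus\C K$-action to the $\G_n$-action, and $\iota$ fixes each off-diagonal root vector $E_\beta$ (because $\mathrm{Str}(JE_\beta)=0$ for a root vector), the identity \eqnref{Om_+O_+} reads, as an identity of operators on $\Psi_n(M)$,
\[
\mathring{R}^{(ij)}_n(z_i/z_j)=R^{(ij)}_n(z_i/z_j)+\frac{n(z_i+z_j)}{2(z_i-z_j)}\,d_id_j,
\]
with no correction term in type $\mf a$ and the opposite sign in the $\ov{}$ case.

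Next I would carry out the gauge transformation. Writing $g:=\prod_{1\le i<j\le\ell}(z_i-z_j)^{-nd_id_j/\kappa}\prod_{r=1}^\ell z_r^{\,nd_r(d-d_r)/(2\kappa)}$ for the factor in part (ii) and $D_i:=\kappa z_i\frac{\partial}{\partial z_i}$, a short computation gives
\[
\kappa z_i\frac{\partial}{\partial z_i}\log g=-\frac{n}{2}\sum_{j\neq i}\frac{z_i+z_j}{z_i-z_j}\,d_id_j ,
\]
where the difference factors produce the $\frac{z_i}{z_i-z_j}$ terms and the monomial factors $z_r^{\,\cdot}$ produce the constant $\tfrac12$ contribution, the two combining into $\frac{z_i+z_j}{z_i-z_j}$. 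Substituting $\psi=g^{-1}\phi$ into \eqnref{nTKZE_n} and using $D_i g^{-1}=-g^{-1}\,\kappa z_i\frac{\partial}{\partial z_i}\log g$, the scalar term produced by differentiating $g^{-1}$ equals, and therefore cancels against, the scalar $K\otimes K$ correction appearing on the other side; since $g$ is a scalar function it commutes with every $R^{(ij)}_n$ and with ${\bf h}^{(i)}$, so \eqnref{nTKZE_n} for $\psi$ is equivalent to \eqnref{TKZE_n} for $\phi=g\psi$. Parts (i) and (iii) are identical: in type $\mf a$ one takes $g=1$, and in the $\ov{}$ case one replaces $g$ by its reciprocal (all exponents negated), reflecting the sign of the $K\otimes K$ term in \eqnref{Om_+O_+}.

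For the singular-solution assertions I would observe that every positive root vector $E_\beta$ is unchanged by $\iota$ and commutes with multiplication by the scalar function $g$, so $E_\beta(g\psi)=g\,E_\beta\psi$; hence $g\psi$ is singular precisely when $\psi$ is, and each bijection restricts to the singular solution spaces. The one genuinely new step compared with \propref{ncetoce} is the verification of the logarithmic-derivative identity for $g$: there only the Vandermonde-type factor $\prod(z_i-z_j)^{\cdot}$ occurs, whereas the trigonometric kernel $\frac{z_i+z_j}{z_i-z_j}$ forces the additional monomial factor $\prod_r z_r^{\,\cdot}$, and the two families of exponents must be chosen so that the $\frac{z_i}{z_i-z_j}$ part and the constant part match simultaneously for every index $i$. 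I expect this matching to be the main (though elementary) obstacle; global consistency of the resulting system is automatic because $g$ is a single explicit function.
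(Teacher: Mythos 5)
Your proposal is correct and is essentially the paper's own argument: the paper proves \propref{Tncetoce} precisely by the ``direct computation using \eqnref{Om_+O_+}'' that you carry out, i.e.\ observing that $K$ acts on $M_i\in\mathring{\OO}_n(d_i)$ as the scalar $d_i$, gauging by the explicit scalar function $g$, and cancelling the resulting logarithmic derivative against the $K\otimes K$ correction term, with the singular-solution statements following because $\iota$ fixes the root vectors $E_\beta$ and these commute with multiplication by $g$. Your verification of the identity $\kappa z_i\frac{\partial}{\partial z_i}\log g=-\frac{n}{2}\sum_{j\neq i}\frac{z_i+z_j}{z_i-z_j}d_id_j$ (the Vandermonde factors giving the $\frac{z_i}{z_i-z_j}$ part and the monomials $z_r$ giving the constant $\frac12\sum_{j\neq i}d_id_j$ part via $d-d_i=\sum_{j\neq i}d_j$) is exactly the computation the paper leaves implicit.
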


\begin{rem}
We have an analogous result of \remref{rem:tr-iso} for super KZ equations.
By \corref{tr-iso-t} and \propref{ncetoce}, we have a bijection between the sets of singular solutions of weight $\ov\la$ of the trigonometric super KZ equations for the tensor product of parabolic Verma modules or irreducible modules in $\mathring{\ov\OO}_n$ for $\ov{\bf h}=\ov {\bf h}_{\ov \la} +\ov\varrho$ and $n\in \N$ and the singular solutions of weight $\la$ of the trigonometric KZ equations for the tensor product of the corresponding parabolic Verma modules or irreducible modules in $\mathring{\OO}_k$ for ${\bf h}= {\bf h}_{ \la} +\varrho$ and sufficiently large $k$.
\end{rem}

\bigskip
\frenchspacing

\end{document}